\newtheorem{theorem}{Theorem}
\newtheorem{lemma}[theorem]{Lemma}
\newtheorem{corollary}[theorem]{Corollary}
\theoremstyle{definition}
\newtheorem{definition}{Definition}
\newtheorem{observation}[theorem]{Observation}
\newcommand{\conditioned}{\text{ }\Big\vert\text{ }}
\newcommand{\calR}{\mathcal R}
\newcommand{\calE}{\mathcal E}
\newcommand{\calA}{\mathcal A}
\newcommand{\calB}{\mathcal B}
\newcommand{\calC}{\mathcal C}
\newcommand{\calD}{\mathcal D}
\newcommand{\calY}{\mathcal Y}
\newcommand{\derive}[1]{\widetilde {#1}}
\newcommand{\toptab}[1]{\widehat {#1}}
\newcommand{\hstar}{h^{*}}
\newcommand{\cond}{\;|\;}
\newcommand{\eps}{\varepsilon}
\newcommand{\E}{\textnormal{E}}
\newcommand{\sd}{\triangle}
\newcommand{\xor}{\oplus}
\newcommand{\Prp}[1]{\Pr\!\left[{#1} \right]}
\newcommand{\Prpcond}[2]{\Pr\!\left[{#1} \mid {#2} \right]}
\newcommand{\Ep}[1]{\E\!\left[{#1} \right]}
\newcommand{\Epcond}[2]{\E\!\left[{#1} \mid {#2} \right]}
\newcommand{\indicator}[1]{\left[{#1}\right]}
\newcommand{\abs}[1]{\left | #1 \right |}
\newcommand{\set}[1]{\left \{ #1 \right \}}
\crefname{lemma}{Lemma}{Lemmas}
\Crefname{lemma}{Lemma}{Lemmas}
\crefname{theorem}{Theorem}{Theorems}
\Crefname{theorem}{Theorem}{Theorems}
\Crefname{corollary}{Corollary}{Corollaries}
\crefname{corollary}{Corollary}{Corollaries}
\crefname{observation}{Observation}{Observations}
\Crefname{observation}{Observation}{Observations}
\crefname{definition}{Definition}{Definitions}
\Crefname{definition}{Definition}{Definitions}
\crefname{section}{Section}{Sections}
\Crefname{section}{Section}{Sections}
\crefname{figure}{Figure}{Figures}
\Crefname{figure}{Figure}{Figures}
\crefname{appendix}{Appendix}{Appendices}
\Crefname{appendix}{Appendix}{Appendices}
\def\ul{\underline}
\def\cR{\mathcal R}
\def\cU{\mathcal U}
\def\bF{\mathbb F}
\DeclareMathOperator*{\bigsd}{\scalerel*{\triangle}{\sum}}
\newcommand{\size}[1]{\ensuremath{\left|#1\right|}}
\newcommand{\ld}{\left}
\newcommand{\rd}{\right}
\newcommand{\parentheses}[1]{\left(#1\right)}
\newcommand{\ssigma}{|\Sigma|}
\newcommand\req[1]{(\ref{#1})}
\newcommand\ull[1]{\underline{\underline{#1}}}
\def\DiffKeys{\textnormal{DiffKeys}}
\newcommand{\DepProb}{\textsf{DependenceProb}}
\newcommand{\DP}{7\mu^3(3/|\Sigma|)^{d+1}+1/2^{|\Sigma|/2}}
\newcommand\drop[1]{}
\title{Locally Uniform Hashing}
\author[1]{Ioana O. Bercea}
\author[2]{Lorenzo Beretta}
\author[2]{Jonas Klausen}
\author[2]{Jakob Bæk Tejs Houen}
\author[2]{Mikkel Thorup}
\affil[1]{IT University of Copenhagen \\ \tt{\{iobe\}@itu.dk}}
\affil[2]{University of Copenhagen\\
\tt{\{beretta, jokl, jakn, mthorup\}@di.ku.dk}}
\date{}
\begin{document}
\setcounter{page}{0}
\maketitle
\begin{abstract}
	
Hashing is a common technique used in data processing, with a  strong
impact on the time and resources spent on computation. 
Hashing also affects the applicability of theoretical results that often assume access to (unrealistic) uniform/fully-random hash functions. In this paper, we are concerned with designing hash functions that are practical and come with strong theoretical guarantees on their performance.

To this end, we present
tornado tabulation hashing, which is simple, fast, and exhibits a certain full, local randomness property that provably makes diverse
algorithms perform almost as if (abstract) fully-random hashing was
used. For example, this includes classic linear probing, the
widely used HyperLogLog algorithm of  Flajolet, Fusy, Gandouet,
Meunier [AOFA’97] for counting
distinct elements, and the one-permutation hashing of Li, Owen, and Zhang
[NIPS’12] for large-scale machine learning. We also provide a
very efficient solution for the classical problem of obtaining fully-random hashing on a fixed (but unknown to the hash function) set of $n$ keys using $O(n)$ space. As a consequence, we get more efficient implementations
of the splitting trick of Dietzfelbinger and Rink [ICALP'09]
and the succinct space uniform hashing of Pagh and Pagh [SICOMP'08].

Tornado tabulation hashing is based on a simple method to systematically break dependencies in tabulation-based hashing techniques.
\end{abstract}

\thispagestyle{empty}
\newpage
\setcounter{page}{1}
\section{Introduction}
The generic goal of this paper is to create a practical hash function
that provably makes important algorithms behave almost as if
(unrealistic) fully random hashing was used. By practical,
we mean both simple to implement and fast. Speed is
important because hashing is a common inner loop for data processing.
Suppose for example that we want to sketch a high-volume data stream such
as the packets passing a high-end Internet router. If we are too slow,
then we cannot handle the stream at all. Speed matters, also within constant
factors.

The use of weak hash functions is dangerous, not only in theory but also
in practice. A good example is the use of classic linear hashing. Linear
hashing is 2-independent and
Mitzenmacher and Vadhan \cite{mitzenmacher08hash} have proved that, for some applications, 2-independent hashing
performs as well as fully random hashing if the input has enough entropy, and indeed
this often works in practice. However, a dense set has only 1 bit of entropy per element,
and \cite{thorup12kwise,patrascu10kwise-lb} have shown that with a linear hashing scheme, if the input is a dense set (or more generally, a dense subset of an arithmetic sequence), then linear probing
becomes extremely unreliable and the expected probe length\footnote{The probe length is defined as the number of contiguous cells probed to answer a query of a linear probing hash table.} increases from constant
to $\Omega(\log n)$. This is also a practical problem because dense subsets may occur for
many reasons. However, if the system is only tested on random inputs, then we may not discover
the problem before deployment.

The issue becomes even bigger with, say, sampling and estimation where we typically just trust our estimates with no knowledge of the true value. We may never
find out that our estimates are bad. With 2-independent hashing, we get the right variance, but not exponential concentration. Large errors can happen way too
often, yet not often enough to show up in a few tests. This phenomenon is demonstrated on synthetic data in \cite{aamand2020fast}
and on real-world data in \cite{Aamand0KKRT22}. All this shows the
danger of relying on weak hash functions without theoretical guarantees
for all possible inputs, particularly for online systems where we
cannot just change the hash function if the input is bad, or in situations
with estimates whose quality cannot be verified. One could instead, perhaps, use hash functions based on cryptographic assumptions, but the hash function that we propose
here is simple to implement, fast, and comes with strong unconditional guarantees.

In this paper, we introduce \emph{tornado tabulation hashing}. A
tornado tabulation hash function $h:\Sigma^c\to \calR$ requires
$O(c\size{\Sigma})$ space and can be evaluated in $O(c)$ time using, say,
$2c$ lookups in tables with $\abs{\Sigma}$ entries plus some simple
AC$^0$ operations (shifts, bit-wise xor, and assignments). As with
other tabulation schemes, this is very fast when $\Sigma$ is small
enough to fit in fast cache, e.g., for 32-bit keys divided into $c=4$ 
characters of 8 bits (namely, $|\Sigma| = 2^8$), the speed is similar to that of evaluating a degree-2 polynomial over a Mersenne prime field.

Tornado hashing has the strong property that if we hash a set of
$\abs{\Sigma}/2$ keys, then with high probability, the hash values are
completely independent. For when we want to handle many more keys, e.g., say $\abs{\Sigma}^3$ (as is often the case when $\Sigma$ is small),  tornado tabulation
hashing offers a certain \emph{local uniformity} that provably makes a
diverse set of algorithms behave almost as if the hashing was fully random on all the keys.  The definition of local uniformity is due to Dahlgaard, Knudsen,
Rotenberg, and Thorup~\cite{dahlgaard15k-partitions}. The definition is a bit complicated,
but they demonstrate how it
applies to the widely used HyperLogLog algorithm of the
Flajolet, Fusy, Gandouet,
Meunier~\cite{Flajolet07hyperloglog}
for counting distinct elements in data streams, and the One-Permutation Hashing of Li, Owen, and Zhang ~\cite{li12oneperm} used
for fast set similarity. They 
conclude that the estimates
obtained are only a factor $1+o(1)$ worse than if fully-random hashing was used. Interestingly, \cite{dahlgaard15k-partitions} proves this 
in a high-level black-box manner. Loosely speaking, the point is that the algorithm using
locally uniform hashing behaves
as well as the same algorithm using fully-random hashing on a slightly  worse input. 

As a new example, we will demonstrate
this on linear probing.
Knuth's original 1963 analysis of linear
probing~\cite{knuth63linprobe}, which started the field of algorithms
analysis, showed that with fully-random hashing and load $(1-\eps)$, the expected probe length
is $(1+1/\eps^2)/2$. From this, we conclude that tornado tabulation hashing yields
expected probe length $(1+o(1))(1+1/\eps^2)/2$, and we get that without having to reconsider Knuth's analysis.

For contrast, consider the work on linear probing with $k$-independent hashing. Pagh, Pagh, Ru{\v z}i{\'c}~\cite{pagh07linprobe} showed that $5$-independence is enough to obtain a bound of $O(1/\eps^{13/6})$ on the expected  probe length. This
was further improved to $O(1/\eps^2)$ by P{\v a}tra{\c s}cu  and Thorup~\cite{patrascu12charhash}, who achieved the optimal $O(1/\eps^2)$. They matched Knuth's bound modulo some
large constants hidden in the $O$-notation and needed a very different
analysis.

In practice, the guarantee that we perform almost like fully-random hashing means
that no set of input keys will lead to substantially different performance statistics.
Thus, if we tested an online system on an arbitrary set of input keys, then we do not
have to worry that future input keys will degrade the performance statistics, not even
by a constant factor.

The definition of local uniformity is due to Dahlgaard, Knudsen,
Rotenberg, and Thorup~\cite{dahlgaard15k-partitions}. They did not name it as
an independent property, but they described it
as a property of their new hashing
scheme: mixed tabulation hashing.
However, the local uniformity
of mixed tabulation 
assumes table size $\abs{\Sigma}\to\infty$, but the speed of
tabulation hashing relies on $\abs{\Sigma}$ being small enough to fit
in fast cache and all reported experiments use 8-bit characters (see
\cite{patrascu12charhash,aamand2020fast,Aamand0KKRT22,DahlgaardKT17:nips}). However, none of the bounds
from \cite{dahlgaard15k-partitions} 
apply to 8 or even 16-bit characters,
e.g., they assume $O(\log\abs{\Sigma})^c<\abs{\Sigma}$. Our new scheme avoids the
exponential dependence on $c$, and we get explicit error probability
bounds that are meaningful, not 
just in theory, but also for 
practice with tables in fast cache. 

For when we want full randomness on more keys than fit in fast cache, we could, as above, increase $\abs{\Sigma}$ in all
$O(c)$ lookup tables. In this paper, we show that it suffices to
augment the in-cache tornado hashing with just $2$ lookups in tables of size $2n$
to get full randomness on $n$ keys with high probability. This would
work perfectly inside a linear space algorithm assuming fully-random hashing,
but it also leads to more efficient implementations
of the spitting trick of Dietzfelbinger and Rink \cite{dietzfel09splitting}
and the succinct space uniform hashing of Pagh and Pagh \cite{PP08}.

In \Cref{sec:tornado-tabulation-definition} we define our hash function, and in \Cref{sec:technical-results} we present our technical results, including the definition of local uniformity. In~\Cref{sec:intro-related-work}, we discuss more explicitly how our work compares to mixed tabulation and explain some of our techniques in comparison. In \Cref{sec:intro-local-power} we discuss several applications. In~\Cref{sec:intro-high}, we relate our work to previous work on achieving highly independent hash functions. Finally, in \Cref{sec:splitnshare} we discuss how tornado tabulation can be employed to improve the so-called ``splitting trick'' and succinct uniform hashing.

\subsection{Tornado tabulation hashing} \label{sec:tornado-tabulation-definition}

\paragraph{Simple tabulation hashing.} We first introduce our main building block, which is the simple tabulation hash function dating back to at least Zobrist~\cite{zobrist70hashing} and Wegman and Carter~\cite{wegman81kwise}.
Throughout the paper, we will consider keys to come from the universe $\Sigma^c$ and hash values to be in $\cR = [2^r]$. More concretely, we interpret a key $x$ as the concatenation of $c$ \emph{characters} $x_1 \dots x_c$ from $\Sigma$. We then say that a function $h: \Sigma^c \longrightarrow \cR$ is a \emph{simple tabulation} hash function if 
\[
h(x) = T_1[x_1] \xor \cdots \xor T_c[x_c]
\]
where, for each $i = 1 \dots c$, $T_i: \Sigma \longrightarrow \cR$ is a fully-random function stored as a table. 

We think of $c$ as a small constant, e.g., $c=4$, for 32-bit keys divided into 8-bit characters,
yet we will make the dependence on $c$ explicit.
We assume that both keys and hash values fit in a single word and that $\abs{\Sigma}\geq 2^8$.

\paragraph{Tornado tabulation hashing.} 

To define a tornado tabulation hash function
$h$, we use several simple tabulation hash functions.  A
tornado tabulation function has
a number $d$ of \emph{derived} characters. Think of
$d$ as, say, $c$ or $2c$. It will later determine our 
error probability bounds. We will always assume $d=O(c)$ so that $d$ characters from $\Sigma$ can be represented in $O(1)$ words of memory (since a key from $\Sigma^c$ fits in a single word).

For each $i = 0,\ldots, d$, we let $\derive h_i : \Sigma^{ c+i - 1} \longrightarrow \Sigma$ be a simple tabulation hash function. Given a key $x \in \Sigma^c$, we define its \emph{derived key} $\derive h(x) \in \Sigma^{c+d}$ as $\derive x=\derive x_1\cdots \derive x_{c+d} $, where
\begin{equation}\label{eq:tornado}
	\derive x_i = \begin{cases}
		x_i &\text{if $i<c$} \\
		x_{c} \xor \derive h_{0}(\derive x_1 \cdots \derive x_{c-1}) &\text{if $i = c$} \\
		\derive h_{i-c}\ld(\derive x_1 \cdots \derive x_{i-1}\rd) &\text{if $i > c$}.
	\end{cases}
\end{equation}%
We note that each of the $d$ derived characters $\derive x_{c+1}, \ldots, \derive x_{c+d} $ is progressively defined by applying a simple tabulation hash function to all its preceding characters in the derived key $\derive x$. Hence, the name tornado tabulation. The step by which we obtain $\derive x_c$ corresponds to the twist from~\cite{PT13:twist}. By Observation $1.1.$ in~\cite{PT13:twist}, $x_1 \dots x_c \mapsto \derive x_1 \dots \derive x_c$ is a permutation, so distinct keys
have distinct derived keys.
Finally, we have a simple tabulation hash function $\widehat h: \Sigma^{c+d} \longrightarrow \cR$, that we apply to the derived key.
The \emph{tornado tabulation} hash
function $h: \Sigma^c \longrightarrow \cR$ is then
defined as $h(x) = \widehat h(\derive x)$.

\paragraph{Implementation.} The simplicity of tornado tabulation is apparent from its C implementation below. In the code, we fold tornado's lookup tables together so we
can implement them using
$c+d$ character tables $\Sigma\to \Sigma^{d+1}\times\cR$.  Elements of $\Sigma^{d+1}\times\cR$ are just
represented as $w$-bits
numbers. 
For memory alignment and
ease of implementation,
we want $w$ to be a power of
two such as $64$ or $128$.

We now present a C-code implementation of tornado tabulation for 32-bit keys, with \(\Sigma = [2^{8}]\), \(c=4\), \(d=4\), and \(\cR=[2^{24}]\).
Besides the key \texttt{x}, the function takes as input an array of \(c+d\) tables of size \(\ssigma\), all filled with independently drawn 64-bit values. 
\drop{
	The 64-bit entries allow us to perform all lookups for each character \(x_i\) or \(\derive x_i\) just once, and xor it onto the preliminary result \texttt{h} which also contains the derived characters and the twisted \(\derive x_c\).
	Note that the key is read in reverse, such that \(x_1\) corresponds to the least significant bits and \(x_c\) is \texttt{x >> 24}.
	
	The first loop takes care of lookups related to the (unaltered) characters \(x_1, \dots, x_{c-1}\) while the lookups based on \(\derive x_c, \dots, \derive x_{c+d}\) are handled in the second loop.
	For each iteration, the 8 remaining least significant bits are cast to a byte which is used for indexing into the appropriate table \texttt{H[i]}.}

\begin{lstlisting}[language=C,basicstyle=\footnotesize]
	INT32 Tornado(INT32 x, INT64[8][256] H) {
		INT32 i; INT64 h=0; INT8 c;
		for (i=0;i<3;i++) {
			c=x;
			x>>=8;
			h^=H[i][c];}
		h^=x;
		for (i=3;i<8;i++) {
			c=h;
			h>>=8;
			h^=H[i][c];} 
		return ((INT32) h);}
\end{lstlisting}

\paragraph{Speed.}
As we can see in the above
implementation, tornado
hashing uses $c+d$
lookups and $O(c+d)$ simple AC$^0$
operations. 
The speed of tabulation hashing
depends on the tables
fitting in fast cache which
means that $\Sigma$ should not
be too big. In the above
code, we used $\Sigma=[2^8]$, as in all
previously reported experiments
with tabulation hashing. (see
\cite{patrascu12charhash,aamand2020fast,Aamand0KKRT22,DahlgaardKT17:nips}).

The speed of tabulation
schemes is incomparable to
that of polynomial methods
using small space but multiplication. Indeed, the ratio between the cost of cache lookups and multiplication depends on the architecture. In line
with previous experiments,
we found our tornado implementation
for 32-bit keys to be as fast
as a degree-2 polynomial over a Mersenne prime ($2^{89}-1$) field.

We note that our implementation
for the random table \texttt{H} only needs a pointer to an area
filled with ``fixed" random bits, and it could conceivably be made much faster if we instead of cache had access to random bits stored in simple read-only memory (ROM or EPROM).

\subsection{Theoretical Results} \label{sec:technical-results}
The main aim of our paper is to prove that, with high probability (whp), a tornado tabulation hash function is fully random on some set $X$ of keys. The challenge is to characterize for which kinds of sets we can show such bounds.

\medskip\noindent
\textbf{Full randomness for fixed keys.} We begin with a simpler result that follows directly from our main technical theorem. In this case, the set $X$ of keys is fixed. 

\begin{theorem}\label{thm:intro-fixed-set}
	Let $h:\Sigma^c\to \cR$ be
	a random tornado tabulation hash 
	function with $d$ derived
	characters. For any
        fixed $X\subseteq \Sigma^c$, if  $\size{X}\leq \ssigma/2$,
	then $h$ is fully random on $X$ with probability at least
	$$1-7|X|^3(3/|\Sigma|)^{d+1}- 1/2^{\size{\Sigma}/2}\;.$$
\end{theorem}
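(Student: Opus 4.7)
The plan is to reduce the claim to linear independence of the derived keys in $\bF_2^{(c+d)|\Sigma|}$ and then bound the probability of a dependency. Writing $h(x)=\hat h(\tilde x)$ and conditioning on the simple-tabulation hashes $\tilde h_0,\ldots,\tilde h_d$ that generate the derivation, the random values $\{h(x):x\in X\}$ are fully independent uniform over $\cR$ if and only if the characteristic vectors of the derived keys $\{\tilde x:x\in X\}$ are linearly independent in $\bF_2^{(c+d)|\Sigma|}$. A dependence is precisely a nonempty ``zero set'': a subset $S\subseteq X$ such that, for every position $i\in\{1,\ldots,c+d\}$ and every character $\alpha\in\Sigma$, the number of $x\in S$ with $\tilde x_i=\alpha$ is even. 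So it suffices to upper bound the probability that such a zero set exists.

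I would expose the randomness one table at a time --- the twist $\tilde h_0$, then $\tilde h_1,\ldots,\tilde h_d$ --- averaging over each $\tilde h_j$ once the prefix it reads is fixed. The first $c-1$ derived characters are simply $x_1,\ldots,x_{c-1}$. After the twist, $\tilde x_c=x_c\oplus\tilde h_0(x_1\cdots x_{c-1})$ looks like a fresh random character within each fixed-prefix bucket. Since $|X|\le|\Sigma|/2$, a Chernoff bound shows that, except with probability $1/2^{|\Sigma|/2}$, no value of $\tilde x_1\cdots\tilde x_c$ is hit by too many keys of $X$; this is the source of the additive $1/2^{|\Sigma|/2}$ term. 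The spreading property then makes the subsequent derivation well-behaved: for each later position $c+j$, two keys with distinct already-exposed prefixes collide at $\tilde x_{c+j}$ with probability exactly $1/|\Sigma|$, since $\tilde h_j$ applied to distinct inputs yields two independent uniform outputs.

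Conditional on the Chernoff event, any zero set $S$ must exhibit even-multiplicity patterns at \emph{all} $d+1$ derived positions $c,\,c+1,\,\ldots,\,c+d$. I would enumerate candidate zero-set structures by a minimality argument: a minimal obstruction can be anchored by a small constant number of keys that determine how the rest of $S$ must behave under the derivation, so the total number of structures to union-bound is at most $7|X|^3$ (essentially, triples of ``anchor'' keys together with the handful of parity-compatible character configurations). For each such structure, the probability that the required parities materialize at a given derived position is at most $3/|\Sigma|$ (the factor $3$ absorbing the few parity-compatible coincidence patterns), and the $d+1$ position-events are conditionally near-independent given the well-spread prefix structure. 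Multiplying yields the $(3/|\Sigma|)^{d+1}$ factor, and summing over structures gives the $7|X|^3(3/|\Sigma|)^{d+1}$ term.

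The main obstacle is that the derived characters of different keys are not honest independent coordinates: $\tilde x_{c+j}$ depends on every previously exposed character through the same random table $\tilde h_j$, and simple tabulation is only $c$-wise independent by itself. The delicate step is to set up the order of exposure, together with a case analysis of minimal obstructions, so that the combinatorial accounting over zero-set structures remains only $|X|^3$ rather than the naive $|X|^{d+2}$, and so that the per-position probability bound of $3/|\Sigma|$ holds uniformly in $j$. The right tool seems to be a peeling-style reduction: once a key contributes a unique character at some derived position, remove it from further consideration and re-run the argument on the remainder, ensuring that the union bound is independent of $d$ while still collecting a fresh factor of $3/|\Sigma|$ from each derived position.
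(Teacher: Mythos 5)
Your reduction to linear independence of the derived keys (via the simple-tabulation/zero-set characterization) matches the paper, and your intuition for where the bound comes from is correct at the dominant term: size-$4$ zero sets give $\Theta(|X|^3)$ candidates, and at each of the $d+1$ positions $c,\ldots,c+d$ a $4$-key zero set has derived characters forming a zero set in exactly $3$ ways, each with probability $1/|\Sigma|$ given independence at the earlier positions, yielding $(3/|\Sigma|)^{d+1}$ per candidate. But there are two genuine gaps.

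First, the additive $1/2^{|\Sigma|/2}$ term is misattributed. In the paper this term does \emph{not} arise from a Chernoff bound on how the derived prefixes $\tilde x_1\cdots\tilde x_c$ spread over $\Sigma^c$; it arises from bounding the probability of an obstruction whose underlying zero set is large ($|W|>w_0^+\approx 0.63|\Sigma|$), via the direct estimate $\mu^{w_0}/w_0!\cdot((w_0-1)!!)^2/|\Sigma|^{w_0-1}\leq w_0|\Sigma|/2^{w_0}$. Your proposed Chernoff step (``no value of $\tilde x_1\cdots\tilde x_c$ is hit by too many keys'') does not establish anything that feeds into a linear-independence conclusion, so it would be dead weight; and, more importantly, it would not cap the union bound over large zero sets, which is what the $1/2^{|\Sigma|/2}$ term must do.

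Second, and this is the central missing piece, your union bound over zero-set structures is only sketched for the size-$4$ case, and the ``peeling-style reduction'' you gesture at is precisely the nontrivial machinery the paper has to build. Zero sets of size $2k>4$ contribute roughly $|X|^{2k-1}$ candidates, and the per-level event is not a single $3/|\Sigma|$ factor but a statement about a matching at that level. To make the sum over $k$ converge to an $O(|X|^3(3/|\Sigma|)^{d+1})$ total, the paper constructs an \emph{obstruction}: a minimal prefix $W$ of a cycle traversal induced by perfect matchings at the top two levels, then greedy matchings $M_i$ and their zero-submatchings $L_i$ at each lower level, with careful bookkeeping (e.g., recovering $x_w$ and the last edge $e_i$ from zero-set xor identities, counting greedy matchings by $(n-1)^{\ull{k}}$, and using factors $2^{-(|Z_{i+1}|-|Z_i|)/4}$ to keep the maximization over $|Z_{i+1}|$ bounded). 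Your phrase ``conditionally near-independent given the well-spread prefix structure'' papers over exactly the place where the paper invokes Lemma~\ref{lem:independent-matchings}, which requires conditioning on $(i-1)$-independence of a concrete matching, not on a spread condition. As a small additional correction: simple tabulation is $3$-wise independent, not $c$-wise independent.
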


With $c,d=O(1)$,~\Cref{thm:intro-fixed-set} gives an $O(\size{\Sigma})$ space hash function that can be evaluated in constant time and, with high probability, will be fully random for any fixed set $X$ of
at most $\abs{\Sigma}/2$ keys. This is asymptotically tight as
we need $|X|$ random hash values to get this randomness.

The random process behind the error probability that we get will be made clear in the next paragraph. We note here that,  since $|X|\leq|\Sigma|/2$, we have that $7|X|^3(3/|\Sigma|)^{d+1}\leq
24(3/|\Sigma|)^{d-2}$.
With $\abs{\Sigma}\geq 2^8$, the
bound is below $1/300$ for
$d=4$, and decreases rapidly
for larger $d$. For $c\geq 4$,
if we set $d=2c$, we get
an error probability below
$1/u$ where $u=\abs{\Sigma}^c$
is the size of the universe.
We can get error probability
$1/u^{\gamma}$ for any constant
$\gamma$ with $d=O(c)$, justifying
this assumption on $d$.

\medskip\noindent
\textbf{Linear independence.} The general structure of our results is to identify some error event such that~(1) if the event does not occur, then the hash function will be fully random on $X$, and~(2) the error event itself happens with low probability. The error event that we consider in~\Cref{thm:intro-fixed-set} is inherent to all tabulation-based hashing schemes.  
Namely, consider some set $Y$ of keys from some universe $\Sigma^b$.
We say that $Y$ is \emph{linearly independent} if and only if, for every subset $Y'\subseteq Y$, there exists a character position $i\in\set{1,\ldots, b}$ such that some character appears an odd number of times in position $i$ among the keys in $Y'$.
A useful connection between this notion and tabulation-based hashing was shown by Thorup and Zhang  \cite{thorup12kwise}, who proved that a set of keys is linearly independent if
and only if simple tabulation hashing is fully random on these keys:

\begin{lemma}[Simple tabulation on linearly independent keys] \label{lem:simple-tab-on-lin-indep-sets-regular}
	Given a set of keys $Y\subseteq \Sigma^b$ and a simple tabulation hash function $h: \Sigma^b \rightarrow \cR$, the following are equivalent:
	\begin{enumerate}[label = (\roman*)]
		\item $Y$ is linearly independent
		\item $h$ is fully random on $Y$ (i.e., $h|_Y$ is distributed uniformly over $\cR^{Y}$).\footnote{In general, we employ the notation $h|_S$ to denote the function $h$ restricted to the keys in some set $S$.}
	\end{enumerate}
\end{lemma}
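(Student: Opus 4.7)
The plan is to recast both conditions in the language of linear algebra over $\bF_2$ and then invoke the standard fact that pushing a uniformly random input through a surjective $\bF_2$-linear map yields the uniform distribution on the image. For each key $y \in Y$, associate the indicator vector $v_y \in \bF_2^{b\cdot \ssigma}$ whose coordinate at $(i,\sigma) \in \{1,\ldots,b\}\times \Sigma$ equals $1$ iff $y_i = \sigma$. The first step is to verify that the paper's combinatorial definition of linear independence of $Y$ coincides with $\bF_2$-linear independence of $\{v_y\}_{y\in Y}$: the $(i,\sigma)$-coordinate of $\bigxor_{y\in Y'} v_y$ is the parity of $\size{\{y \in Y' : y_i = \sigma\}}$, so a nonempty vanishing XOR-sum is exactly a nonempty subset $Y'$ in which every character appears an even number of times in every position.

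Next, collect all table entries into a single uniformly random vector $\mathbf{T} \in \cR^{b\cdot \ssigma}$ whose $(i,\sigma)$-block is $T_i[\sigma]$, and observe the identity
\[
h(y) \;=\; \bigxor_{(i,\sigma)} v_y[(i,\sigma)] \cdot T_i[\sigma].
\]
This makes the joint-hash map $L : \mathbf{T} \mapsto (h(y))_{y\in Y}$ an $\bF_2$-linear map from $\cR^{b\cdot \ssigma}$ to $\cR^Y$, so both halves of the equivalence become statements about $L$ (here $\cR$ is treated as $\bF_2^r$ under XOR, and the linear map acts componentwise on the $r$ bits).

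For (i)$\Rightarrow$(ii), linear independence of $\{v_y\}$ means that it can be extended to an $\bF_2$-basis of $\bF_2^{b\cdot \ssigma}$, which in turn makes $L$ surjective onto $\cR^Y$; pushing the uniform law of $\mathbf{T}$ through $L$ gives the uniform law on $\cR^Y$, i.e.\ full randomness of $h|_Y$. For (ii)$\Rightarrow$(i), I would argue by contrapositive: if $\{v_y\}$ is dependent, witnessed by a nonempty $Y'$ with $\bigxor_{y\in Y'} v_y = 0$, then the identity above forces $\bigxor_{y\in Y'} h(y) = 0$ with probability one, a deterministic linear relation incompatible with $h|_Y$ being uniform on $\cR^Y$. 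I do not expect a real obstacle: the only step that requires care is the bookkeeping translation between the paper's parity-based definition and standard $\bF_2$-linear independence, which is immediate once the $v_y$ vectors are written out coordinate by coordinate; after that, everything reduces to standard linear algebra over $\bF_2$.
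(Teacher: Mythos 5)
Your proposal is correct, and it is essentially the argument the paper has in mind: the paper itself does not prove this lemma (it cites Thorup--Zhang) but in \Cref{sec:prelim} it explicitly sets up the same $\bF_2$-vector encoding of keys and the same view of $h$ as an $\bF_2$-linear map on $\bF_2^{\{1,\dots,b\}\times\Sigma}$, which is exactly the framework your proof uses. One small nit: the forward direction is cleaner if you argue directly that linear independence of the rows $v_y$ makes the matrix have full row rank, hence the map $\mathbf{T}\mapsto (h(y))_{y\in Y}$ surjective (componentwise over the $r$ output bits), rather than passing through a basis extension, and note that the paper's phrase ``for every subset $Y'$'' must implicitly mean every \emph{nonempty} subset, which you correctly assume.
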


To prove~\Cref{thm:intro-fixed-set}, we employ~\Cref{lem:simple-tab-on-lin-indep-sets-regular} with sets of derived keys.  Namely, given a set of keys $X \subseteq \Sigma^c$, we consider the error event that the set $\derive X = \set{\derive h(x) \mid x\in X}$ of its derived keys is linearly dependent. We then show that this happens with probability at most $7|X|^3(3/|\Sigma|)^{d+1}+1/2^{\size{\Sigma}/2}$. If this doesn't happen, then the derived keys are linearly independent, and, by~\Cref{lem:simple-tab-on-lin-indep-sets-regular}, we  get that the tornado tabulation hash function   $h = \toptab h \circ \derive h$ is fully random on $X$ since it applies the simple
tabulation hash function $\toptab h$ to the derived keys $\derive X$.
We note that the general idea of creating linearly independent lookups to create fully-random hashing goes back at least to 
\cite{siegel04hash}. The point of this paper is to do it in a really efficient way.

\paragraph{Query and selected keys.} Our main result,~\Cref{thm:intro-random-set}, is a more general version of~\Cref{thm:intro-fixed-set}.
Specifically, while~\Cref{thm:intro-fixed-set} holds for any fixed set of keys, it requires that $\size{X} \leq \size{\Sigma}/2$. For
a fast implementation, we want $\size\Sigma$ to be small
enough to fit in fast cache, e.g., $\size\Sigma=2^8$, but in most
applications, we want to hash
a set $S$ of keys that is much larger, e.g., $\size S\sim \size\Sigma^3$. Moreover, we might be interested in showing full randomness for subsets $X$ of $S$ that are not known in advance: consider, for instance, the set $X$ of all the keys in $S$ that hash near to $h(q)$ for some fixed key $q\in S$. In this case,~\Cref{thm:intro-fixed-set} would not help us, since the set $X$ depends on $h(q)$ hence on
$h$.

To model this kind of scenario, we consider a set of 
\emph{query keys} $Q \subseteq \Sigma^c$ and define a set of
\emph{selected keys} $X \subseteq \Sigma^c$. 
Whether a key  $x$ is selected or not depends only on $x$, its own hash value $h(x)$, and the hash values of the query keys $h|_Q$ (namely, conditioning on $h(x)$ and $h|_Q$ makes $x \in X$ and $h$ independent). In~\Cref{thm:intro-random-set}, we will show that, if the selected keys are few enough, then $h|_X$ is fully random with high probability. 

Formally, we have a selector function $f: \Sigma^c \times \cR \times \cR^Q \longrightarrow \{0, 1\}$ and we define the set of selected keys as
\[
X^{f,h} = \set{x \in \Sigma^c \cond f(x, h(x), h|_Q) = 1}.
\]
We make the special requirement that $f$  should always
select all the query keys $q\in Q$, that is, $f(q, \cdot, \cdot)=1$ regardless
of the two last arguments. We then define
\begin{equation}
	\label{eq:px-definition}
	\mu^f := \sum_{x\in \Sigma^c} p^f_x  \quad \text{ with } \quad
	p^f_x := \max_{\varphi \in \cR^Q} \Pr_{r \sim \cU(\cR)}\ld[f(x, r, \varphi)=1\rd]\;.
\end{equation}
Here the maximum is taken among all possible assignments of hash values to query keys $\varphi: Q \to \cR$ and $r$ is distributed uniformly over $\cR$.
Trivially we have that
\begin{observation}
	If $h:\Sigma^c\to\cR$ is fully random then $\E[|X^{f,h}|]\leq\mu^f$.
\end{observation}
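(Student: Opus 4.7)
The plan is to apply linearity of expectation to reduce the claim to a per-key bound, and then handle each key by conditioning on $h|_Q$. Concretely, I would begin with
\[
\E[|X^{f,h}|] = \sum_{x\in\Sigma^c} \Pr[f(x, h(x), h|_Q)=1],
\]
so it suffices to show $\Pr[f(x, h(x), h|_Q)=1]\leq p_x^f$ for every $x\in\Sigma^c$.

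For each $x$, I would condition on the realization of $h|_Q$. For $x\notin Q$, full randomness of $h$ ensures that $h(x)$ is uniform on $\cR$ and independent of $h|_Q$, so for any fixed $\varphi\in\cR^Q$,
\[
\Pr[f(x, h(x), h|_Q)=1 \mid h|_Q=\varphi] = \Pr_{r\sim\cU(\cR)}[f(x, r, \varphi)=1] \leq p_x^f,
\]
where the inequality is by the definition of $p_x^f$ as a maximum over $\varphi$. Averaging over $h|_Q$ gives $\Pr[f(x,h(x),h|_Q)=1]\leq p_x^f$. For $x\in Q$, the special requirement that $f(q,\cdot,\cdot)=1$ forces $\Pr[x\in X^{f,h}]=1$, and the same requirement also gives $p_x^f=1$, so the per-key bound holds trivially. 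Summing over $x\in\Sigma^c$ yields $\E[|X^{f,h}|]\leq \mu^f$.

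I do not foresee a real obstacle: the statement is essentially an immediate consequence of linearity of expectation together with the definition of $p_x^f$ as a worst-case conditional probability. The only mildly delicate point is distinguishing the query keys (where $h(x)$ is a deterministic function of $h|_Q$) from the non-query keys (where $h(x)$ is independent of $h|_Q$), but the convention that $f$ always accepts query keys makes both cases align with the definition of $p_x^f$.
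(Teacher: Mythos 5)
Your proof is correct and is precisely the argument the paper leaves implicit when it calls the observation trivial: linearity of expectation, conditioning on $h|_Q$, and bounding each per-key selection probability by $p_x^f$ via the definition as a maximum over $\varphi$, with the query keys handled by the convention $f(q,\cdot,\cdot)=1$.
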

When $f$ and $h$ are understood, we may omit these superscripts. It is
important that $X$ depends on both $f$ and $h$ while $\mu$ only
depends on the selector $f$. We now have the following main technical theorem:

\begin{restatable}{theorem}{maintechtheorem}\label{thm:tech-random-set}
	Let $h=\widehat h\circ \derive h:\Sigma^c\to\cR$
	be a random tornado tabulation hash function with $d$ derived characters and $f$ as described above. If $\mu^f \leq \Sigma / 2$
	then the derived selected keys $\derive h(X^{f,h})$ are linearly
	dependent with probability at most
	$\DepProb(\mu^f, d, \Sigma$), where
	$$\DepProb(\mu, d, \Sigma):=\DP\;.$$
\end{restatable}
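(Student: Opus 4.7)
My plan is to bound the probability that $\widetilde h(X^{f,h})$ is linearly dependent by decomposing the error into two regimes matching the two additive terms. The term $1/2^{|\Sigma|/2}$ will come from controlling the size of $X$, and the term $7\mu^3(3/|\Sigma|)^{d+1}$ will come from bounding the dependence probability when $|X|$ is under control.

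For the size bound, since $\mu^f \leq |\Sigma|/2$, I would show $\Pr[|X| > |\Sigma|/2] \leq 1/2^{|\Sigma|/2}$ via a high-moment concentration bound with moment order roughly $|\Sigma|/2$. Such a bound should follow from the partial independence among the selector indicators $\indicator{f(x, h(x), h|_Q)=1}$ that tornado tabulation already provides, possibly after conditioning on $h|_Q$ and exploiting the randomness of the final simple tabulation $\widehat h$ on distinct derived keys.

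Conditioned on $|X| \leq |\Sigma|/2$, I would bound the dependence probability by an argument analogous to \Cref{thm:intro-fixed-set}. The plan is to reveal the simple tabulation tables of $\widetilde h_0, \widetilde h_1, \ldots, \widetilde h_d$ sequentially. For any candidate obstruction subset $Y' \subseteq X$, each newly revealed layer appends a uniformly random character to each derived key, and for $Y'$ to persist as a dependency, the appended characters must satisfy the even-multiplicity parity constraint. Each such layer contributes a factor of at most $3/|\Sigma|$ to the survival probability, yielding $(3/|\Sigma|)^{d+1}$ across the $d+1$ derivation layers (the twist $\widetilde h_0$ plus the $d$ later ones). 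The $\mu^3$ factor comes from enumerating minimal obstructions: since the derived keys are distinct, a simple parity argument on character counts shows that no subset of size at most $3$ can be dependent (1, 2, or 3 total copies in a position cannot all be even), so the minimal obstruction size is $4$; but the fourth key is essentially determined by the first three via the character-matching constraints, giving $O(\mu^3)$ candidate configurations to union bound over, with expected factor $\mu^3$ since each is "in $X$" with probability at most $\prod p_x$.

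The main obstacle is that $X$ itself depends on $h$, so the clean union bound from the fixed-set case does not transfer directly. My plan is to handle this via a deferred-decisions argument that separates the randomness used to determine $X$ from the randomness used to analyze dependence. The permutation property of the twist from \cite{PT13:twist} (distinct keys always yield distinct derived keys regardless of the tables) provides the structural anchor for this separation; the challenge is to find a revealing order for the tables under which the selection and dependence analyses can be nearly decoupled, perhaps by first revealing $\widetilde h_0$ and the portions of $\widehat h$ needed to determine $X$, then revealing $\widetilde h_1, \ldots, \widetilde h_d$ in order to analyze the derived dependency.
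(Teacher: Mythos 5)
Your proposal identifies the right sources for both pieces of the bound (minimal obstructions of size $4$ give the $\mu^3$, the $d+1$ derived characters including the twist give the $(3/|\Sigma|)^{d+1}$, and large configurations give the exponentially small tail), but the overall decomposition you propose is not the paper's and it has a genuine circularity. You want to first prove $\Pr[|X^{f,h}| > |\Sigma|/2] \leq 1/2^{|\Sigma|/2}$ and then analyze dependence conditioned on $|X^{f,h}|$ being small. The problem is that the concentration bound you'd need for $|X^{f,h}|$ is \emph{itself} a consequence of linear independence of the derived selected keys: the paper's Chernoff bound for $|X^{f,h}|$ (\Cref{lemma:upper-chernoff}) is only proved on the event $\mathcal{I}_{X^{f,h}}$ that $\derive h(X^{f,h})$ is linearly independent, exactly the event you are trying to establish. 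Simple tabulation applied to merely \emph{distinct} derived keys does not give enough joint independence for a $2^{-\Omega(|\Sigma|)}$ upper-tail bound. The paper never bounds $|X^{f,h}|$ in this theorem; instead the $1/2^{|\Sigma|/2}$ term arises from the union bound over \emph{obstruction configurations} whose support set $W$ is larger than $\Theta(|\Sigma|)$ (\Cref{sec:tighter}, ``Large set sizes''), a quantity of a different nature from $|X^{f,h}|$, and the relevant probability is controlled by counting matchings and paying $1/|\Sigma|^{|W|-1}$ for level agreements, not by controlling $|X^{f,h}|$.

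The second gap is in your deferred-decisions plan: you propose ``first revealing $\widetilde h_0$ and the portions of $\widehat h$ needed to determine $X$, then revealing $\widetilde h_1, \ldots, \widetilde h_d$.'' But $X^{f,h}$ depends on the full hash $h = \widehat h \circ \derive h$, which in turn depends on \emph{all} of $\derive h_1, \ldots, \derive h_d$, so there is no revealing order in which $X$ is determined before the derivation layers are fixed. The paper resolves adaptivity differently: given any $h$ for which $\derive h(X^{f,h})$ is dependent, a deterministic \emph{combinatorial obstruction} tuple $(W, x_w, M_d, M_{d-1}, e_{d-1}, L_{d-1}, Z_{d-1}, \ldots, M_1, e_1, L_1, Z_1)$ is constructed from a minimal zero-set, and one union-bounds over all such tuples (a fixed, $h$-independent collection). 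The crux is \Cref{lem:event-probability}, which bounds the probability that a \emph{fixed} obstruction is confirmed by $h$: the selection probabilities $\prod p_x$ can be extracted because $Q \cup W \setminus \{x_w\}$ is guaranteed $d$-independent by the minimality of the chosen zero-set, so $\toptab h$ on those derived keys is genuinely uniform. This is what makes the per-layer factor of roughly $3/|\Sigma|$ rigorous; your per-layer parity heuristic is a reasonable intuition but does not by itself justify the bound, because the constraints at lower levels interact through the greedy-matching structure (a zero-set at level $i+1$ must be re-matched at level $i$, and one has to count the number of ways such matchings and their zero sub-matchings can be chosen, not just the parity event). Finally, the claim that ``the fourth key is determined by the first three'' is true for a size-$4$ zero-set, but that observation alone does not control the combinatorics once one allows larger zero-sets, and the paper's top-factor calculation shows the size-$4$ term merely dominates a geometrically decaying sum.
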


Note that~\Cref{thm:tech-random-set} only bounds the probability of the error event.
Similarly as in~\Cref{thm:intro-fixed-set}, we would like to then argue that, if the error event does not happen,
we could apply~\Cref{lem:simple-tab-on-lin-indep-sets-regular} to claim that the final
hash values via the simple tabulation function $\widehat h$ are fully random. 
The challenge, however, is the presence of an inherent dependency in how the keys are selected to begin with, namely that $h=\widehat h\circ \derive h$ is already used to select the keys in $X^{f,h}$. In other words, by the time we want to apply $\widehat h$ to the derived selected keys $\derive h(X^{f,h})$, we have already used some information about $\widehat h$ in selecting them in $ X^{f,h}$.

\medskip\noindent
\textbf{Local uniformity.}
Nevertheless, there is a general type of selector functions for which we can employ~\Cref{thm:tech-random-set}  in conjunction with~\Cref{lem:simple-tab-on-lin-indep-sets-regular}  to claim full randomness. Namely, we consider selector functions that partition the bit representation of the final hash values into $s$ \emph{selection bits} and $t$ \emph{free bits} so that $\cR = \cR_s \times \cR_t = [2^s] \times [2^t]$. Given a key $x \in \Sigma^c$, we then denote by $h^{(s)}(x) \in \cR_s$ and $h^{(t)}(x) \in \cR_t$ the selection and free bits of $h(x)$ respectively. We then say that a selector function $f$ is an \emph{$s$-selector} if, for all $x\in \Sigma^c$, the output of $f(x, h(x), h|_Q)$ only depends on the selection bits of the hash function, i.e., $f(x, h(x), h|_Q) = f(x, h^{(s)}(x), h^{(s)}|_Q)$.

We now crucially exploit the fact that the output bits of a simple tabulation hash function are completely independent. 
Formally, we split the simple tabulation $\toptab h$ into two independent simple tabulation functions: $\toptab h^{(s)}$ producing the selection bits and $\toptab h^{(t)}$ producing the free bits. We then apply~\Cref{thm:tech-random-set} to $h^{(s)}=\toptab h^{(s)}\circ \derive h$
to conclude that the set of selected derived keys  $\derive h(X^{f,h^{(s)}})$
is linearly independent with high probability.  Assuming this, 
we then apply~\Cref{lem:simple-tab-on-lin-indep-sets-regular} to conclude that
$\toptab h^{(t)}$ is fully random on $\derive h(X^{f,\toptab h^{(s)}})$,
hence that  $h^{(t)}=\toptab h^{(t)}\circ \derive h$ is fully random on  $X^{f,h^{(s)}}$.

\begin{restatable}{theorem}{maintheorem}\label{thm:intro-random-set}
	Let $h:\Sigma^c\to \cR$ be a tornado tabulation hash  function with $d$ derived
	characters and $f$ be an $s$-selector as described above. If  $\mu^f \leq \Sigma / 2$,
	then $h^{(t)}$ is  fully random on $X^{f,h^{(s)}}$ with probability at least
	$$1-\DepProb(\mu^f, d, \Sigma)\;.$$
\end{restatable}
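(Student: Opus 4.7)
The plan is to reduce the statement to a direct combination of \Cref{thm:tech-random-set} and \Cref{lem:simple-tab-on-lin-indep-sets-regular} by carefully exploiting the bit-wise independence of simple tabulation. First I would decompose the top-level simple tabulation hash $\toptab h:\Sigma^{c+d}\to\cR=[2^s]\times[2^t]$ into two independent simple tabulation hash functions $\toptab h^{(s)}:\Sigma^{c+d}\to\cR_s$ and $\toptab h^{(t)}:\Sigma^{c+d}\to\cR_t$, producing the selection bits and free bits respectively. This decomposition is valid because each entry of each table $T_i$ for $\toptab h$ is a uniform random element of $[2^r]=[2^s]\times[2^t]$, whose coordinate projections are independent uniform random variables in $\cR_s$ and $\cR_t$. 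Consequently, $h^{(s)}=\toptab h^{(s)}\circ \derive h$ and $h^{(t)}=\toptab h^{(t)}\circ \derive h$ are independent conditioned on $\derive h$, and in fact $\toptab h^{(t)}$ is independent of the pair $(\derive h,\toptab h^{(s)})$.

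Next I would use the assumption that $f$ is an $s$-selector to observe that the set $X^{f,h^{(s)}}$ depends only on $\derive h$ and $\toptab h^{(s)}$, since selection of any key $x$ is determined by $x$, $h^{(s)}(x)$, and $h^{(s)}|_Q$. Now I would apply \Cref{thm:tech-random-set} to the tornado tabulation hash function $h^{(s)}=\toptab h^{(s)}\circ \derive h$, whose random range is $\cR_s$; the hypothesis $\mu^f\leq|\Sigma|/2$ transfers directly, since $\mu^f$ depends only on $f$. The conclusion is that, with probability at least $1-\DepProb(\mu^f,d,\Sigma)$, the derived selected keys $\derive h(X^{f,h^{(s)}})$ form a linearly independent subset of $\Sigma^{c+d}$.

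Conditioning on this high-probability event, as well as on the full randomness in $\derive h$ and $\toptab h^{(s)}$, the set $\derive h(X^{f,h^{(s)}})$ is a fixed linearly independent set. Since $\toptab h^{(t)}$ is independent of $(\derive h,\toptab h^{(s)})$ and remains a simple tabulation hash function with the same distribution even after this conditioning, \Cref{lem:simple-tab-on-lin-indep-sets-regular} (applied to $\toptab h^{(t)}$ on the linearly independent set $\derive h(X^{f,h^{(s)}})$) implies that $\toptab h^{(t)}$ is fully random on $\derive h(X^{f,h^{(s)}})$. Because distinct keys have distinct derived keys (the twist step $x\mapsto \derive x$ is a permutation on the first $c$ coordinates), $h^{(t)}=\toptab h^{(t)}\circ \derive h$ is fully random on $X^{f,h^{(s)}}$, completing the proof.

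The only delicate step is the conditioning in the last paragraph: one must verify that conditioning on the event ``$\derive h(X^{f,h^{(s)}})$ is linearly independent'' together with the values of $\derive h$ and $\toptab h^{(s)}$ does not disturb the distribution of $\toptab h^{(t)}$. This is clean because all three objects $\derive h$, $\toptab h^{(s)}$, $\toptab h^{(t)}$ are defined from disjoint families of independent table entries, so $\toptab h^{(t)}$ remains an unconditioned uniformly random simple tabulation hash function after this conditioning, and \Cref{lem:simple-tab-on-lin-indep-sets-regular} applies verbatim.
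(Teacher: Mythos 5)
Your proposal matches the paper's argument essentially step for step: split $\toptab h$ into independent simple tabulations $\toptab h^{(s)}$ and $\toptab h^{(t)}$ via bit-wise independence, apply \Cref{thm:tech-random-set} to $h^{(s)}=\toptab h^{(s)}\circ\derive h$ to get linear independence of $\derive h(X^{f,h^{(s)}})$ whp, then invoke \Cref{lem:simple-tab-on-lin-indep-sets-regular} for $\toptab h^{(t)}$ on that set. Your extra care about the conditioning on $(\derive h,\toptab h^{(s)})$ and the injectivity of $x\mapsto\derive x$ are correct and simply make explicit what the paper leaves implicit.
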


While the concept of an $s$-selector function might seem a bit cryptic, we note that it intuitively captures the notion of locality that linear probing and other applications depend on. Namely, the effect of the (high order\footnote{Thinking about selector bits as higher order bits helps our intuition. However, they do not have to be higher-order bits necessarily. More generally, any representation of $\cR$ as a product $\cR_s \times \cR_t$ would do the job.}) selector bits is to specify a dyadic interval\footnote{Recall that a dyadic interval is an interval of the form $[j2^i, (j+1)2^i)$, where $i,j$ are integers.} of a given length such that all the keys with hash values falling in that interval are possibly selected (with this selection further depending, perhaps, on the query keys $Q$, or on other specific selector bits, leading to more refined dyadic intervals).~\Cref{thm:intro-random-set} then says that the (low order) free bits of these selected keys will be fully-random with high probability. In other words, the distribution inside such a neighborhood is indistinguishable from what we would witness if we had used a fully-random hash function. 

As mentioned earlier, the concept of local uniformity stems from \cite{dahlgaard15k-partitions}, except that they did not consider
query keys. Also, they didn't name
the concept. They demonstrated
its power in different streaming
algorithms. For those applications,
it is important that \emph{the selection bits are not known to the algorithm}. They are only set in
the analysis based on the concrete
input to demonstrate good performance on this input. The problem in \cite{dahlgaard15k-partitions} is that their error probability bounds only apply when
the alphabet is so large that the tables do not fit in fast cache.
We will describe this issue closer in \Cref{sec:intro-related-work}.
In \Cref{sec:intro-local-power}
we will sketch the use
of local uniformity on linear probing where the locality
is relative to a query key.

\medskip\noindent
\textbf{Upper tail Chernoff bounds.}
\Cref{thm:intro-random-set} is only concerned with the distribution of the free bits of the selected keys, but to employ it in our applications, we often require that the number of selected keys is not much larger than $\mu^f$ with high probability (see~\Cref{sec:intro-local-power,sec:splitnshare}). We show that this size can be bounded from above with the usual Chernoff bound when the set of derived selected keys is linearly independent.

\begin{restatable}{lemma}{upperchernoff}\label{lemma:upper-chernoff}
	Let $h=\widehat h\circ \derive h:\Sigma^c\to\cR$
	be a random tornado tabulation hash function with $d$ derived characters, query keys $Q$ and selector
	function $f$.  Let $\mathcal{I}_{X^{f,h}}$ denote the event that the set of derived selected key $\derive h(X^{f,h})$ is linearly independent.
	Then, for any $\delta>0$,  the set $X^{f,h}$ of selected keys  satisfies the following:
	$$ \Pr\ld[{\size{X^{f,h}} \geq (1+\delta)\cdot  \mu^f \wedge \mathcal{I}_{X^{f,h}}  }\rd] \leq \parentheses{\frac{e^\delta}{(1+\delta)^{1+\delta}}}^{\mu^f}\;.$$
\end{restatable}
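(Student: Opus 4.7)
I aim to establish the moment-generating-function estimate
\[
\Ep{\exp(t|X^{f,h}|)\cdot \mathbf{1}_{\mathcal{I}_{X^{f,h}}}}\ \leq\ \exp\!\left((e^t-1)\mu^f\right), \qquad t>0,
\]
from which the lemma follows by the standard Chernoff optimization at $t=\ln(1+\delta)$. Expanding $\exp(t|X^{f,h}|) = \prod_{x\in\Sigma^c}\bigl(1+(e^t-1)\mathbf{1}_{x\in X^{f,h}}\bigr)$ and distributing gives
\[
\exp(t|X^{f,h}|)= \sum_{S\subseteq\Sigma^c}(e^t-1)^{|S|}\,\mathbf{1}_{S\subseteq X^{f,h}},
\]
so the MGF estimate reduces to proving the per-$S$ inequality $\Prp{X^{f,h}\supseteq S\wedge \mathcal{I}_{X^{f,h}}}\leq\prod_{x\in S}p^f_x$, since summing over $S$ recovers $\prod_x(1+(e^t-1)p^f_x)\leq\exp((e^t-1)\mu^f)$.

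\textbf{Decoupling $\mathcal{I}_{X^{f,h}}$ from $\toptab h$.} Fix $S\subseteq\Sigma^c$. On the event $X^{f,h}\supseteq S$ we have $S\cup Q\subseteq X^{f,h}$ (since $Q\subseteq X^{f,h}$ by the standing assumption on $f$), so linear independence of $\derive h(X^{f,h})$ forces linear independence of its subset $\derive h(S\cup Q)$. Therefore
\[
\Prp{X^{f,h}\supseteq S\wedge \mathcal{I}_{X^{f,h}}}\ \leq\ \Prp{X^{f,h}\supseteq S\wedge \derive h(S\cup Q)\text{ is linearly independent}}.
\]
Crucially, the new event involving $\derive h$ is $\derive h$-measurable, while $\toptab h$ is independent of $\derive h$ (they use disjoint random tables). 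Conditioning on any realization of $\derive h$ for which $\derive h(S\cup Q)$ is linearly independent, \Cref{lem:simple-tab-on-lin-indep-sets-regular} applied to $\toptab h$ yields that $\toptab h|_{\derive h(S\cup Q)}$ is fully random, and by injectivity of $\derive h$ on $\Sigma^c$ (the twist observation) so is $h|_{S\cup Q}=\toptab h\circ\derive h|_{S\cup Q}$.

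\textbf{Reducing to independent Bernoullis.} Conditioning further on $h|_Q=\varphi$, the values $\{h(x)\}_{x\in S\setminus Q}$ remain independent and uniform in $\cR$. Since $f(q,\cdot,\cdot)=1$ for every $q\in Q$, the event $X^{f,h}\supseteq S$ reduces to $\bigcap_{x\in S\setminus Q}\{f(x,h(x),\varphi)=1\}$, which is an intersection of independent events each having probability at most $\Pr_{r\sim\cU(\cR)}[f(x,r,\varphi)=1]\leq p^f_x$ by the definition \eqref{eq:px-definition} of $p^f_x$. Multiplying, averaging over $\varphi$, then over the conditioning on $\derive h$, and using $p^f_q=1$ for $q\in Q$, we obtain the per-$S$ bound $\prod_{x\in S\setminus Q}p^f_x=\prod_{x\in S}p^f_x$, completing the proof.

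\textbf{Main obstacle.} The principal subtlety is the chicken-and-egg circularity: $\mathcal{I}_{X^{f,h}}$ is defined through $X^{f,h}$, which is itself a function of $\toptab h$, so a naive attempt to condition on $\mathcal{I}_{X^{f,h}}$ and then invoke \Cref{lem:simple-tab-on-lin-indep-sets-regular} on $\toptab h$ is circular. The fix above sidesteps this by giving up the sharpness of $\mathcal{I}_{X^{f,h}}$ and retaining only its $\derive h$-measurable consequence that $\derive h(S\cup Q)$ is linearly independent; the loss is harmless because the resulting bound is already the product of Bernoulli factors that standard Chernoff requires.
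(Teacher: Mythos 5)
Your proof is correct and takes essentially the same approach as the paper: the paper also reduces the Chernoff MGF bound to a per-$S$ inequality $\Pr[S\subseteq X^{f,h}\wedge\mathcal{I}_{X^{f,h}}]\leq\prod_{x\in S}p^f_x$, uses that $\mathcal{I}_{X^{f,h}}$ implies the $\derive h$-measurable event that $\derive h(S\cup Q)$ is linearly independent, and then invokes \Cref{lem:simple-tab-on-lin-indep-sets} (conditioning on $h|_Q$) to obtain the independent-Bernoulli product. The only difference is presentational: you expand $\exp(t|X^{f,h}|)$ via the exact product identity $\prod_x(1+(e^t-1)\mathbf{1}_{x\in X^{f,h}})$, whereas the paper expands the MGF as a power series in moments $\E[|X^{f,h}|^i\cdot\mathbf{1}_{\mathcal{I}_{X^{f,h}}}]$; both routes reduce to the same per-$S$ claim.
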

For a nice direct application,
consider hash tables with chaining, or throwing $n$ keys into $n$ bins
using tornado hashing. We can select a given bin, or the bin of a given query key. In either case
we have $\mu^f=1$ and then 
\Cref{lemma:upper-chernoff} together with \Cref{thm:tech-random-set} says that the
probability of getting $k$
keys is bounded by
\begin{equation}\label{eq:chaining}
e^{k-1}/k^k+
7(3/\size\Sigma)^{d+1}+1/2^{\size\Sigma/2}.
\end{equation}
If $k$ is not too large, the first
term dominates.

\medskip\noindent
\textbf{Lower bound.} Finally, we show that our upper bound for the error probability in~\Cref{thm:tech-random-set} is tight within a constant factor.
\begin{restatable}{theorem}{lowerbound}\label{thm:lower-bound} Let $h=\widehat h\circ \derive h:\Sigma^c\to\cR$
	be a random tornado tabulation hash function with $d$ derived characters. There exists a selector function $f$ with $\mu^f \leq  \Sigma / 2$ such that the 
	derived selected keys $\derive h(X^{f,h})$ are linearly
	dependent with probability
	at least $\Omega((3/|\Sigma|)^{d-2})$.
\end{restatable}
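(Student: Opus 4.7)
\medskip\noindent\textbf{Proof plan.}
My plan is to exhibit a deterministic ``parallelogram-rich'' selector and convert a first-moment count of linearly dependent $4$-tuples into a probability lower bound via the second-moment method. Fix a subset $A\subseteq \Sigma$ of size $\lfloor |\Sigma|/4\rfloor$, let $Q=\emptyset$, and take the selector $f(x,\cdot,\cdot)=\mathbf 1[x\in X_0]$ where
\[
    X_0 \;=\; \{\, (x_1,x_2,0,\ldots,0)\in \Sigma^c \,:\, x_1\in A,\; x_2\in\{0,1\}\,\}.
\]
Then $\mu^f=|X_0|=|\Sigma|/2$ and $X^{f,h}=X_0$ for every $h$. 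For every unordered pair $\{a,b\}\subseteq A$, the four keys $P_{a,b}=\{(a,0,\mathbf 0),(a,1,\mathbf 0),(b,0,\mathbf 0),(b,1,\mathbf 0)\}$ form a parallelogram already linearly dependent on positions $1,\ldots,c-1$ (positions $1,2$ cancel by pairing, positions $3,\ldots,c-1$ are identically zero). There are $\binom{|A|}{2}=\Theta(|\Sigma|^2)$ such parallelograms, and whenever $\derive h(P_{a,b})$ is linearly dependent for any one of them, so is $\derive h(X_0)$.

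I would first lower-bound the probability that a single parallelogram is linearly dependent. The XOR at position $c$ of its four derived keys vanishes automatically: since $\derive h_0$ is a simple tabulation, the contributions of its character-$1$ and character-$2$ tables cancel in pairs by the parallelogram symmetry. Linear dependence thus reduces to the $d$ equations ``XOR at position $c+j$ equals zero'' for $j=1,\ldots,d$. Inductively, if the four values of $\derive x_{c+j-1}$ over $P_{a,b}$ already lie in a parallelogram configuration $\{u,u\oplus\alpha,u\oplus\beta,u\oplus\alpha\oplus\beta\}\subseteq \Sigma$, then the $(c+j)$-th equation becomes $T[u]\oplus T[u\oplus\alpha]\oplus T[u\oplus\beta]\oplus T[u\oplus\alpha\oplus\beta]=0$ for the appropriate character table $T$ of $\derive h_j$. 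This equation holds in any of three ``degenerate pairings'' of the indices ($\alpha=0$, $\beta=0$, or $\alpha=\beta$), each of probability $\Theta(1/|\Sigma|)$, producing the characteristic $3/|\Sigma|$ factor; it also holds by a random collision on the non-degenerate event. Chaining these $d$ inductive steps yields $\Pr[\derive h(P_{a,b})\text{ lin.\ dep.}]\ge \Omega((3/|\Sigma|)^d)$, and if $Z$ counts the linearly dependent parallelograms, $\E[Z]=\binom{|A|}{2}\cdot \Omega((3/|\Sigma|)^d)=\Omega((3/|\Sigma|)^{d-2})$.

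Finally, I would apply Paley--Zygmund, $\Pr[Z\ge 1]\ge \E[Z]^2/\E[Z^2]$, after showing $\E[Z^2]=O(\E[Z]+\E[Z]^2)$. The diagonal contributes $\E[Z]$, and disjoint pairs of parallelograms contribute at most $\E[Z]^2$ (up to small table-sharing correlations). The main obstacle is the $\Theta(|\Sigma|^3)$ pairs of parallelograms of the form $(P_{a,b},P_{a,b'})$ that share exactly two keys: here, joint linear dependence imposes, at every position $c+j$, that three column-differences $\delta_a^{(j)}=\delta_b^{(j)}=\delta_{b'}^{(j)}$ all coincide, i.e.\ two extra linear constraints on the tables per position. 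Since the column-differences are nearly independent uniform values in $\Sigma$ across columns, the joint event has probability $O(|\Sigma|^{-2d})$, so the overlapping contribution totals $O(|\Sigma|^{3-2d})$, which is absorbed into $O(\E[Z])$ for $d\ge 2$. Combining yields $\Pr[Z\ge 1]=\Omega(\E[Z])=\Omega((3/|\Sigma|)^{d-2})$, as desired.
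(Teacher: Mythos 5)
Your first-moment setup -- picking a key set $A\times\{0,1\}\times 0^{c-2}$ that is rich in parallelograms, lower-bounding the survival probability of a single parallelogram through $d$ rounds via the three degenerate pairing modes, and counting $\Theta(|\Sigma|^2)$ parallelograms -- is in the spirit of the paper's construction (which uses $\{0,1\}\times\Sigma$), and the per-level $\approx 3/|\Sigma|$ factor is the right quantity. However, the step that fails is your bound on the correlation between overlapping parallelograms, and this is exactly the step that the paper has to work hard to circumvent.

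Concretely, you claim that joint linear dependence of $P_{a,b}$ and $P_{a,b'}$ ``imposes, at every position $c+j$, that $\delta_a^{(j)}=\delta_b^{(j)}=\delta_{b'}^{(j)}$'' and that this gives two extra constraints per level, hence probability $O(|\Sigma|^{-2d})$. But the equality of the three column differences is \emph{automatic} once the prefixes are zero-sets up to level $j-1$ (it is just the statement that a simple tabulation XORs to zero on a zero-set); it is not an additional constraint. The actual extra constraint is the \emph{pairing} at position $c+j$, and there is a \emph{shared mode}: if the common column-difference $\delta^{(j)}\!=\!0$ (one constraint, probability $\approx 1/|\Sigma|$), then \emph{every} parallelogram anchored anywhere in $A$ pairs up at that level simultaneously. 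Thus $\Pr[\delta^{(1)}=\cdots=\delta^{(d)}=0]=\Theta(|\Sigma|^{-d})$, and on that event $Z=\binom{|A|}{2}=\Theta(|\Sigma|^2)$. This single event contributes $\Theta(|\Sigma|^{4-d})$ to $\E[Z^2]$, whereas $\E[Z]=\Theta\!\left(3^d|\Sigma|^{2-d}\right)$; the ratio $\E[Z^2]/\E[Z]$ therefore grows like $\Theta(|\Sigma|^2/3^d)$, which is $\gg 1$ in the regime of interest ($|\Sigma|\geq 2^8$, $d=O(c)$). Paley--Zygmund then gives $\Pr[Z\geq 1]\gtrsim (9/|\Sigma|)^d$, which is far below $(3/|\Sigma|)^{d-2}$, so the argument does not reach the claimed bound. (There is also a smaller issue: you state the XOR at position $c$ vanishes ``automatically,'' but a vanishing XOR is not the same as even multiplicity, and for your choice of $X_0$ -- with the large coordinate in position $1$ rather than position $c$ -- the twist in $\derive h_0$ does cost an extra $\Theta(3/|\Sigma|)$ factor, unlike in the paper's $\{0,1\}\times\Sigma$ instance where the twist is absorbed by a permutation of $\Sigma$.)

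The paper's proof is organized precisely to dodge this shared-mode correlation: it first conditions on the event $T_1^{(1)}(0)\neq T_1^{(1)}(1)$ (i.e., the shared ``diagonal'' mode does not fire at level $1$), then shows via a balls-into-bins argument that with constant probability a sub-family $\calY'$ of $\Theta(|\Sigma|)$ parallelograms survives level $1$ via the non-shared modes, and, crucially, these specially-chosen parallelograms survive level $2$ \emph{independently} because their level-$1$ derived characters are pairwise disjoint. This gives constant probability of some parallelogram surviving the first two rounds, after which the remaining $d-2$ rounds contribute $((3-2/|\Sigma|)/|\Sigma|)^{d-2}$. Your Paley--Zygmund route would need an analogous mechanism to kill the shared variance, which is essentially the content of the paper's lemma about $\calY'$.
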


\medskip\noindent
\textbf{Full randomness for larger sets of selected keys.} As mentioned earlier, for
a fast implementation of tabulation hashing, we pick the alphabet
$\Sigma$ small enough for the tables to fit in fast cache.
A common choice is 8-bit characters.
However, we only get
full randomness for (selected) sets of (expected) size at most $\size\Sigma/2$ (c.f. Theorems
\ref{thm:intro-random-set} and \ref{thm:intro-fixed-set}).

To handle larger sets, we
prove that it suffices to 
only increase the alphabet of
the last two  derived characters;
meaning that only two lookups
have to use larger tables.
This is close to best possible
in that we trivially need at least
one lookup in a table at least as
big as the set we want full
randomness over. More precisely,
if we use alphabet $\Psi$ for
the last two derived characters,
then our size bound increases
to $\size\Psi/2$. The
error probability bound of 
Theorems \ref{thm:intro-random-set}
becomes	
$$14 (\mu^f)^3(3/|\Psi|)^2(3/|\Sigma|)^{d-1} + 1/2^{\size{\Sigma}/2}\;.$$
With the above mix of alphabets,
we have tornado hashing running
in fast cache except for the last
two lookups that could dominate
the overall cost, both in time
and space.  Because they dominate,
we will consider a slight variant,
where we do not store derived
characters in any of the two large tables. Essentially this means
that we change the definition
of the last derived character
from $\derive x_{c+d}=\derive h_d(\derive x_1\cdots\derive x_{c+d-1})$ to $\derive x_{c+d}=\derive h_d(\derive x_1\cdots\derive x_{c+d-2})$. This is going to cost us
a factor two in the error probability, but in our implementation, we will now have $c+d-2$ lookups in tables
$\Sigma\to \Sigma^{d-1}\times\Psi^2\times R$ (where the values are represented as a single $w$-bit numbers), and
2 lookups in tables $\Psi\to  R$.
We shall refer to this scheme
as \emph{tornado-mix}. Corresponding to \Cref{thm:intro-random-set}, we get

\begin{restatable}{theorem}{tornadomixrandomtheorem}\label{thm:tornadomixrandom}
	Let $h=\widehat h\circ \derive h:\Sigma^c\to\cR$
	be a random tornado-mix tabulation hash function with $d$ derived characters, the last two from $\Psi$, and an $s$-selector function $f$. If $\mu=\mu^f \leq \size\Psi/ 2$
	then $h^{(t)}$ is fully random on $X^{f,h}$ with probability at least
	$$1 - 14 \mu^3(3/|\Psi|)^2(3/|\Sigma|)^{d-1} - 1/2^{\size{\Sigma}/2}\;.$$
\end{restatable}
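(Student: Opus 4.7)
My plan is to follow the same two-step template that derives \Cref{thm:intro-random-set} from \Cref{thm:tech-random-set} together with \Cref{lem:simple-tab-on-lin-indep-sets-regular}. The whole theorem reduces to proving a tornado-mix analogue of \Cref{thm:tech-random-set}, namely that for an $s$-selector $f$ with $\mu^f \leq |\Psi|/2$, the derived selected keys $\derive h(X^{f,h})$ are linearly dependent with probability at most
$$ 14\mu^3(3/|\Psi|)^2(3/|\Sigma|)^{d-1} + 1/2^{|\Sigma|/2}. $$
Once that bound is in hand, I split the final simple tabulation $\widehat h$ into two independent pieces $\widehat h^{(s)}$ and $\widehat h^{(t)}$ producing the selection and free bits respectively, which is legal because the output coordinates of a simple tabulation are independent. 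Since $f$ is an $s$-selector, the set $X=X^{f,h}$ is determined by $\widehat h^{(s)}$ together with $\derive h$, so I can condition on those and treat $X$ as fixed. On the event that $\derive h(X)$ is linearly independent, \Cref{lem:simple-tab-on-lin-indep-sets-regular} applied to the independent $\widehat h^{(t)}$ yields that $\widehat h^{(t)}$ is fully random on $\derive h(X)$, hence $h^{(t)}=\widehat h^{(t)}\circ \derive h$ is fully random on $X$. A union bound with the dependence-probability bound above then gives the theorem.

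For the technical ingredient I would adapt the proof of \Cref{thm:tech-random-set} to the mixed-alphabet setting. In the uniform case, the factor $(3/|\Sigma|)^{d+1}$ arises from a cascade over the $d+1$ ``derivation positions'' $c, c+1, \ldots, c+d$, each contributing roughly $3/|\Sigma|$ when we argue that a fresh character must appear (the position $c$ step being the twist). In tornado-mix, the last two positions draw their characters from the larger alphabet $\Psi$, so those contributions become $3/|\Psi|$ each, while the remaining $d-1$ positions keep their $3/|\Sigma|$ factor; this accounts for the product $(3/|\Psi|)^2(3/|\Sigma|)^{d-1}$. The extra factor of $2$, turning $7$ into $14$, compensates for the modification that $\derive x_{c+d}$ depends on $\derive x_1\cdots \derive x_{c+d-2}$ rather than on $\derive x_1\cdots \derive x_{c+d-1}$, so the last derived character can no longer use the penultimate one to force a new character to appear, which should amount to a factor-of-two loss at exactly that cascade step. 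The additive $1/2^{|\Sigma|/2}$ term should survive unchanged, as it comes from an extreme-case bound that only involves the small-alphabet tables and is unaffected by enlarging the last two alphabets.

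The main obstacle is verifying that the inductive structure underlying \Cref{thm:tech-random-set} still closes after we drop the dependency of $\derive x_{c+d}$ on $\derive x_{c+d-1}$. In the original setup, each newly derived character is a simple tabulation of \emph{all} prior derived characters, which is exactly what drives the argument that either a new character appears at some position or a low-probability collision event occurs. Here the two large-alphabet positions must be handled somewhat asymmetrically: $\derive x_{c+d-1}$ is still a simple tabulation of everything preceding it, but $\derive x_{c+d}$ skips one position backwards. I expect the two $\Psi$-steps can still be treated essentially independently, precisely because $\derive x_{c+d}$ no longer references $\derive x_{c+d-1}$, at the advertised cost of a factor of $2$. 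Once the tornado-mix dependence bound is established, the rest of the argument is a direct transcription of the proof of \Cref{thm:intro-random-set}.
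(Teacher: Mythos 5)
Your high-level structure matches the paper's exactly: reduce to a tornado-mix analogue of \Cref{thm:tech-random-set} (the paper's \Cref{thm:tornadomix}), then split $\widehat h$ into independent $\widehat h^{(s)}$ and $\widehat h^{(t)}$, condition on $\derive h$ and $\widehat h^{(s)}$ so that $X^{f,h}$ is fixed, and apply \Cref{lem:simple-tab-on-lin-indep-sets-regular} to $\widehat h^{(t)}$ on the event that the derived selected keys are linearly independent. You also identify correctly what moves in the bound: $|\Psi|$ replaces $|\Sigma|$ in the top factor of the union bound (the two top levels), the $d-1$ lower levels (levels $0$ through $d-2$, including the twist) keep their $3/|\Sigma|$ contributions, and the large-set tail $1/2^{|\Sigma|/2}$ is unaffected.

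Where you are hand-wavy is the factor of 2, and your intuition (``the last derived character can no longer use the penultimate one to force a new character to appear'') does not reflect the actual mechanism. In the obstruction construction of \Cref{sec:construct}, one traverses the alternating cycles of $M_{d-1}^*$ and $M_d^*$ and stops at the first prefix $W$ where $M_{d-1}^*|_W$ becomes $(d-1)$-dependent; the probability bound then needs $M_d$ to be $(d-1)$-independent, which works in tornado precisely because $\derive x_{c+d}$ is a simple tabulation of the full prefix $\derive x[\leq c+d-1]$. In tornado-mix, both top derived characters are functions only of $\derive x[\leq c+d-2]$, so the relevant notion for both is $(d-2)$-independence, and the first dependence in the traversal can now show up in either $M_{d-1}^*|_W$ or $M_d^*|_W$. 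The paper's fix is a genuine case distinction: whichever of the two matchings first becomes $(d-2)$-dependent takes over the ``special'' role of $M_{d-1}$ in the obstruction, with the other serving as the auxiliary matching (and in the second case one also drops a vertex from the last cycle so that $x^* \notin W$, keeping the keys in $W$ hashing fully independently). The factor of 2 is the union bound over these two symmetric cases. Without this explicit branching, your adaptation of the obstruction does not close — you would be unable to justify the $(d-2)$-independence hypothesis needed in the analogue of \Cref{lem:event-probability} — so this case analysis should be the centerpiece of the technical step, not a heuristic afterthought.
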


An interesting case
is when we want linear space
uniform hashing for a fixed set $X$, in which case $\mu=\size X$ above. This
leads to a much better implementation of the uniform
hashing fo Pagh and Pagh \cite{PP08}. The main part
of their paper was to show a linear
space implementation, and
this would suffice for all but
the most succinct space algorithms.
They used highly independent
hashing \cite{siegel04hash,thorup13doubletab} as a subroutine, but this
subroutine alone is orders of magnitude slower than our simple
implementation (see, e.g., \cite{Aamand0KKRT22}). They 
combined this with a general reduction from
succinct space to linear space, for which we now have a really efficient construction.

\subsection{Techniques and relation to mixed tabulation}\label{sec:intro-related-work}

In spirit, our results are very related to the results on mixed tabulation \cite{dahlgaard15k-partitions}. For now, we only
consider the case of a single
alphabet $\Sigma$.
Indeed, tornado and mixed tabulation are very similar to implement. Both deal with $c$-character keys from
some alphabet $\Sigma$, produce
a derived key with $c+d$ characters, and then
apply a top simple tabulation to the
resulting derived keys.  Both
schemes can be implemented with
$c+d$ lookups. The difference is in how the two schemes compute the derived keys.
For ease of presentation, let
$\derive h_i:\Sigma^*\to\Sigma$, that
is, $\derive h_i$ adjusts to the
number of characters in the input.
Now for mixed tabulation,
we define the derived key
$\derive x_1\cdots \derive x_{c+d}$
by
\begin{equation*}
	\derive x_i = \begin{cases}
		x_i &\text{if $i\leq c$} \\
		\derive h_{i-c}\ld(\derive x_1 \dots \derive x_{c}\rd) &\text{if $i > c$}.
	\end{cases}
\end{equation*}
The analysis from \cite{dahlgaard15k-partitions} 
did not consider query keys, but
ignoring this issue, their
analysis works in the limit $\size\Sigma\to\infty$. For 
example, the analysis from \cite{dahlgaard15k-partitions} requires that \footnote{While using their Lemma 3, if $t$ goes up to $s$ in case D.} 
\[(\log|\Sigma|)^c\leq |\Sigma|/2.\]
This is true for
$c=O(1)$ and $|\Sigma|\to\infty$,
but simply false for 
realistic parameters. 
Assuming the above condition to be satisfied,
if we consider scenarios with non-constant $c$ and $d$, the error probability from \cite{dahlgaard15k-partitions} becomes
\[\left(O(cd)^c/
|\Sigma|\right)^{\lfloor d/2\rfloor-1}+1/2^{\Omega(|\Sigma|)}.\]
Now, even if we replace $O(cd)$
with $cd$, the error probability
is not below 1 even with $16$-bit
characters and $c= 4$. In contrast, practical tabulation schemes normally
use $8$-bit characters for efficiency, and our explicit 
bound of $\DP$ from 
\Cref{thm:tech-random-set} works fine even in this case, implying that
our theory actually applies to practice.

The reason that mixed tabulation
has the problematic exponential dependency on $c$ is that for a
set of linearly dependent keys, 
it uses a clever encoding
of each of the $c$ characters in
some of the keys. With tornado, the only encoding
we use is that if we have a zero set of keys, then each key is
the xor of the other keys in
the zero set, and this
is independent of
$c$.~\footnote{A set is linearly dependent if it contains a subset that is a zero set. See~\Cref{sec:prelim} for the precise definition.}

To describe  the advantage
of tornado tabulation over
mixed tabulation, it is easier
to first compare with what
we call \emph{simple tornado} 
where the derived key $\derive x_1\cdots \derive x_{c+d}$
is defined by
\begin{equation*}
	\derive x_i = \begin{cases}
		x_i &\text{if $i\leq c$} \\
		\derive h_{i-c}\ld(\derive x_1 \dots \derive x_{i- 1}\rd) &\text{if $i > c$}.
	\end{cases}
\end{equation*}
The implementation difference is that with
simple tornado, each derived
character uses all preceding characters
while mixed tabulation
uses only the original characters. This difference gives tornado
a big advantage when it comes to
breaking up linear dependence in
the input keys.
Recall that a set $Y \subset \Sigma^{c+d}$ of derived keys is linearly independent if and only if, for every subset $Y'\subseteq Y$, there exists a character position $i\in\set{1,\ldots, c+d}$ such that some character appears an odd number of times in position $i$ among the keys in $Y'$. Intuitively, the strategy is to argue that whatever linear dependence exists in the input key set to begin with (essentially, in the first $c$ characters of the derived keys) will,  whp, be broken down by their $d$ derived characters (and hence, disappear in the derived keys).
In this context, the way we compute the derived characters becomes crucial:  in mixed tabulation,  each derived character is computed independently of the other derived characters. Thus, whether a derived character breaks a dependency or not is independent of what other derived characters do. In contrast, we make the derived characters in tornado tabulation depend on all previously computed derived characters such that,  if we know that some derived character does not break a dependency, then we also know that none of its previously computed derived characters have broken it either. Or, in other words, each successive tornado-derived character benefits from the dependencies already broken by previously computed derived characters, i.e., the benefits compound each time we compute a new derived character. 

This structural dependence between tornado-derived characters turns out to be very powerful in breaking linear dependencies among the input keys and indeed, leads to a much cleaner analysis.
The most important benefit, however, is that  tornado tabulation has a much lower
error probability. For
simple tornado, we get an 
error probability of
\begin{equation*}
	7 (\mu^f)^3(3/|\Sigma|)^{d} + 1/2^{\size{\Sigma}/2}\;\textnormal,
\end{equation*}
which essentially gains a factor
$(3/|\Sigma|)$ for each derived
character. It turns out that
we gain an extra factor $(3/|\Sigma|)$ if
we twist the last original 
character as we did in the original tornado definition \req{eq:tornado},
and then we get the bound from \Cref{thm:intro-random-set}, the point being that this twisting
does not increase the number of lookups.

One might wonder if twisting
more characters would help
further, that is, setting
$\derive x_i=x_i\xor\derive h_i(\derive x_1\cdots x_{i-1})$
for $i=2,\ldots,c$, but it doesn't.
The point is that the bad key set from our lower bound
in \Cref{thm:lower-bound} is of
the form $[0]^{c-2}\times[2]\times A$ for some $A \subseteq \Sigma$, and then it is only the last
character where twisting makes a difference.

As a last point, recall tornado-mix which was designed to deal with
larger sets. There it 
only costs us a factor
2 when we let the last
derived character $\derive x_{c+d}$ depend only on 
$\derive x_1 \dots \derive x_{c+d-2}$
and not on $\derive x_{c+d-1}$. This
is essentially like switching to
mixed tabulation on the last
two derived keys, hence the name \emph{tornado-mix}. This only works for the last two derived characters that can play a symmetric role.

\medskip\noindent
\textbf{The exponential dependence on $c$.}
We note that having an exponential dependence
on $c$ is symptomatic for almost
all prior work on tabulation hashing, starting from the original work in \cite{patrascu12charhash}. 
Above, we discussed how tornado tabulation
hashing avoided such exponential dependence on $c$
in the error probability from
mixed tabulation. The dependence
on $c$ was particularly destructive for mixed tabulation because it pushed the 
error probability above 1 for
the relevant parameters.

\medskip\noindent
\textbf{Concentration bounds.}
Tornado hashing inherits the
strongest concentration bounds
known for mixed tabulation~\cite{HouenT22:chaos}.
The reason is that they only require one derived character
and tornado tabulation can 
be seen as applying mixed tabulation with one character
to a derived tornado key with one less character. Unlike
our Lemma \ref{lemma:upper-chernoff}, which essentially only applies
for expected values $\mu\leq\size\Sigma/2$, the  concentration
bounds we get from~\cite{HouenT22:chaos}
work for unbounded $\mu$ and for both the upper and lower tail.
They fall exponentially in $\mu/K_c$
where $K_c$ is exponential in $c$,
but this still yields strong
bounds when $\mu$ is large. Inheriting the strongest concentration bound for mixed tabulation implies that
tornado tabulation can replace mixed
tabulation in all the applications from \cite{dahlgaard15k-partitions} while improving on the issue of
local uniformity.

\drop{
	From~\cite{HouenT22:chaos} we have the strongest concentration bounds known for mixed tabulation
	and they only require one derived character.
	They also apply to tornado hashing because
	the derived tornado key can be obtained by
	applying mixed tabulation with one character
	to a derived tornado key with one less character. From \cite[Corollary 8]{HouenT22:chaos} we get the following:
	\begin{restatable}{lemma}{upperchaos}\label{lemma:upper-chaos}
		Let $h=\widehat h\circ \derive h:\Sigma^c\to\cR$
		be a random tornado tabulation hash function with at least one derived character, at most
		one query key in $Q$, and select
		function $f$. With
		Then, for any $\delta>0$, $\gamma\geq 1$,  the set $X^{f,h}$ of selected keys  satisfies the following:
		\begin{align*} \Pr\parentheses{\size{X^{f,h}} \geq (1+\delta)\cdot  \mu^f  }                   &\leq 2 \parentheses{\frac{e^\delta}{(1+\delta)^{1+\delta}}}^{\mu^f/K_{c,\gamma}}+1/|\Sigma|^{c\gamma}\\
			\Pr\parentheses{\size{X^{f,h}} \leq (1-\delta)\cdot  \mu^f  }                   &\leq 2\parentheses{\frac{e^\delta}{(1-\delta)^{1-\delta}}}^{\mu^f/K_{c,\gamma}}+1/|\Sigma|^{c\gamma}\;.
		\end{align*}
		Here $K_{c,\gamma}=O(c^2\gamma)^c$.
	\end{restatable}
	
	Note  that $K_{c,\gamma}=O(1)$ when $c,\gamma=O(1)$, e.g., $c=4$ and $\gamma=1$. 
	The impact of dividing by $K_{c,\gamma}$ is that to get the same error probability as before, $\mu^f$ has to be $K_{c,\gamma}$ times bigger than if we didn't divide
	by $K_{c,\gamma}$. When $\mu^f$ is large, we view this as less of a (theoretical) issue because we still have the exponential decrease from $\mu^f$ which is unbounded. Lemma \ref{lemma:upper-chaos} can be seen as a nice complement to our Lemma  \ref{lemma:upper-chernoff} which has a clean $\mu^f$ in the exponent, but only
	applies for $\mu^f\leq |\Sigma|/2$. 
	Also, Lemma  \ref{lemma:upper-chernoff} does not apply to the
	lower tail.
	
	We do not expect tornado hashing to remove the
	exponential dependence on $c$ in $K_{c,\gamma}$ in  \Cref{lemma:upper-chaos} when it
	comes to larger $\mu$. The
	basic idea behind the derived keys is to spread out a small
	set of $<|\Sigma|$ keys to make
	full use of the $\Theta(|\Sigma|)$ randomness
	in the tables. Our paper shows that this
	is possible in a very strong sense. However,
	if we deal with many more keys (up to $|\Sigma|^c$) as in \cite{HouenT22:chaos}, then there has to be a penalty for having only
	$\Theta(|\Sigma|)$ randomness available. This issue shows
	up in the whole lower tail
	which can be seen as a statement about all the keys that are not selected.
}

\subsection{The power of locally uniform hashing}\label{sec:intro-local-power}
We now describe, on a high level, how the notion of locally uniform hashing captures a type of randomness that is sufficient for many applications to work almost as if they employed full randomness. For the discussion, we will assume the parameters from Theorem \ref{thm:tornadomixrandom} with tornado-mix.
\drop{Consider some algorithm and a performance measure that is monotonically increasing in the number of input keys.} 
The main observation is that, for many algorithms, the performance measure (i.e., its distribution) depends, whp, only on the behavior of keys that hash inside a local neighborhood defined via some selected bits of the hash value
(the selection may depend both on the algorithm and on the concrete input since the selection is only used for analysis). Moreover, the set of keys  $X^{f,h}$ that land in that selected neighborhood has
expected size $\leq \size\Psi/2$. For any such neighborhood, our results imply that the keys in $X^{f,h}$ have fully
random free bits whp.

To understand the role of the free bits, it is helpful to think of hashing a key $x$ as a two-stage process: the selector bits of $h(x)$ tell us whether $x$ hashes in the desired neighborhood or not, while the remaining free bits of $h(x)$ determine how $x$ hashes once it is \emph{inside} the neighborhood. This suggests a general coupling, where we let both fully-random hashing and tornado tabulation hashing first choose the select bits and then the free bits. Since both are fully random on the free bits (for us with high probability), the only difference is in the selection, but here concentration bounds imply that we select almost the same number of keys as fully-random hashing. 
\drop{Our results then imply that any local performance measure that depends only on the keys in the neighborhood is stochastically dominated, when using tornado tabulation hashing, by its counterpart when using fully-random hashing on slightly more input keys. Thus, any corresponding analysis that assumes fully-random hashing can be employed in a black-box fashion for tornado tabulation hashing.}

In \cite{dahlgaard15k-partitions}, this approach was demonstrated
for the  HyperLogLog algorithm of 
Flajolet, Fusy, Gandouet,
Meunier~\cite{Flajolet07hyperloglog}
for counting distinct elements in data streams, and the One-Permutation Hashing of Li, Owen, and Zhang ~\cite{li12oneperm} used
for fast set similarity in large
scale machine learning. In
\cite{dahlgaard15k-partitions} they
implemented the local uniformity
with mixed tabulation, but with
tornado hashing we get a realistic
implementation. This also includes later application of mixed tabulation, e.g., the dynamic load balancing from \cite{AamandKT21:dynamic-load}.
Below, as a new example,  we illustrate how local uniformity makes classic linear probing perform almost
as if fully random hashing was used.

We briefly recall the basic version of linear probing. We employ an array T of length $m$ and hash keys to array entries using a tornado tabulation hash function $h:\Sigma^c\rightarrow [m]$. Thus, in Theorem \ref{thm:tornadomixrandom}, we have
$\cR=[m]=[2^r]$.
Upon insertion of a key $x$, we first check if entry $T[h(q)]$ is empty. If it is, we insert the key in $T[h(q)]$. Otherwise, we scan positions in $T$ sequentially starting with $T[h(q)+1]$ until we find an empty position and store $x$ in this next available free entry. To search for $x$, we inspect array entries sequentially starting with $T[h(q)]$ until we find $x$ or we find an empty array entry, concluding that the key is not in the array. The general concept that dominates the performance of linear probing is the \emph{run} of $q$, which is defined as the longest interval $I_q \subseteq [m]$ of full (consecutive) array entries that $h(q)$ is part of. The run affects the probe length (the length of
the interval from $h(q)$ till the
end of $I_q$) and other monotone measures such as insertion and deletion time  (i.e., monotonically increasing in the number of keys). Thus, any corresponding analysis depends only on the set $X_q$ of keys that fall in the interval $I_q$, and it is there that we are interested in showing full randomness.

The first step is to argue that the behavior of $I_q$ is 
local in nature, in that it is affected only by the keys that hash in a fixed length interval around $h(q)$. To that end, we show that, whp, the length of the run is no bigger than a specific $\Delta=2^\ell$. This $\Delta$ implies locality: consider the interval $J_q\subseteq [m]$, centered at $h(q)$, which extends $\Delta$ entries in each direction, i.e., $J_q = \set{ j\in[m] \mid \size{ j - h(q)} \leq \Delta}$. Then, whp, any run that starts outside $J_q$ is guaranteed to end by the time we start the run of $q$.  In other words, whp, the behavior of $I_q$ depends only on the set $X_q$ of keys that hash inside the fixed length neighborhood $J_q$ (and specifically where in $J_q$ the keys hash). A similar argument can be made for other variants of linear probing, such as linear probing with tombstones~\cite{DBLP:conf/focs/BenderKK21}, where the run/neighborhood must also take into account the tombstones that have been inserted.

At this point, it is worthwhile pointing out that the set $X_q$ is no longer a fixed set of keys but rather a random variable that depends on the realization of $h(q)$. We cannot just apply~\Cref{thm:intro-fixed-set}. Nevertheless, in the second step, we argue that $X_q$ can be captured by our notion of selector functions. For this, we cover $J_q$ with three dyadic intervals, each of length $\Delta$: one including $h(q)$ and the corresponding ones to the left and to the right. Since each interval is dyadic, it is determined only by the leftmost $r-\ell$ bits of the hash value: for example, an element $x\in\Sigma^c $ hashes into the same dyadic interval as $q$ iff the leftmost $r-\ell$ bits of $h(x)$ match those of $h(q)$. We can then design a selector function that returns a $1$ if and only if $x$ is in the input set and the leftmost $r-\ell$ bits of $h(x)$ (its selector bits) hash it into any of the three specific dyadic intervals we care about. Such a selector function is guaranteed to select all the keys in $X_q$. By setting $\Delta$ appropriately, we get that the expected size of the selected set is at most $\Sigma/2$, and can thus apply~\Cref{thm:intro-random-set}. We get that inside the intervals, keys from $X^{f,h}$ hash (based on their free bits) in a fully random fashion.

Finally, what is left to argue is that the number of keys hashing inside each of the three dyadic intervals is not much bigger than what we would get if we used a fully-random hash function for the entire set (one in which the selector bits are also fully random).  For this, we employ~\Cref{lemma:upper-chernoff}, and can conclude that we perform almost as well with fully-random hashing.  In particular, from Knuth's \cite{knuth63linprobe} analysis of linear
probing with fully-random hashing, we conclude that with load $(1-\eps)$, the expected probe length
with tornado hashing is $(1+o(1))(1+1/\eps^2)/2$.

The above type of analysis could be applied to other variants of linear probing, e.g., including lazy deletions and tombstones as in the recent work of 	Bender, Kuszmaul, and Kuszmaul \cite{DBLP:conf/focs/BenderKK21}. We would need to argue
that the run, including tombstones, remains within the selected neighborhood and that we have concentration both on live keys and tombstones. However, since the analysis from \cite{DBLP:conf/focs/BenderKK21} is already based on simple hash functions, we would not get new bounds from knowing that tornado hashing performs almost as well as fully-random hashing.

\subsection{Relation to high independence}\label{sec:intro-high}
The $k$-independence approach to hashing \cite{wegman81kwise} is to construct
hash functions that map every set of $k$ keys independently and uniformly at random. 
This is much stronger than getting
fully random hashing for a 
given set and, not surprisingly,
the results for highly independent
hashing \cite{christiani15indep,siegel04hash,thorup13doubletab} are weaker. The strongest high independence result from \cite{christiani15indep} says that we get $|\Sigma|^{1-\eps}$-independent hashing in $O((c/\eps)\log(c/\eps))$ time,
but the independence is much less
than our $|\Sigma|/2$.  Experiments
from \cite{Aamand0KKRT22} found
that even the simpler non-recursive highly independent hashing from \cite{thorup13doubletab}
was more than an order of magnitude
slower than mixed tabulation which is
similar to our tornado tabulation.

We also note that the whole idea of using derived
characters to get linear independence between
keys came from attempts to get higher  $k$-independence 
Indeed, \cite{dietzfel03tabhash,KW12, thorup12kwise} derive
their characters deterministically, guaranteeing
that we get $k$-independence. The construction
for $k=5$ is efficient but for larger $k$,
the best deterministic construction is that
from \cite{thorup12kwise} using 
$d=(k-1)(c-1) + 1$ derived characters,
which for larger $k$ is much worse than the
randomized constructions.

\subsection{Relation to the splitting trick and to succinct uniform hashing} \label{sec:splitnshare}
We will now describe how tornado-mix provides a much more efficient
implementation of the succinct
uniform hashing by Pagh and Pagh \cite{PP08} and the splitting
trick of Dietzfelbinger and Rink \cite{dietzfel09splitting}. This further illustrates how our work provides a component of wide applicability within hashing.

\medskip\noindent
\textbf{Succinct uniform hashing.}
The main contribution in \cite{PP08} is to obtain uniform hashing in linear space. The succinct construction is then obtained through a general reduction from the linear-space case. They achieved uniform hashing in linear space using the highly independent hashing of Siegel \cite{siegel04hash} as a subroutine in combination with other ideas. Now, thanks to \Cref{thm:tornadomixrandom}, we know that tornado-mix offers an extremely simple and efficient
alternative to implement that. Indeed, if $n$ is the size of the set we want linear space uniform hashing on, then setting $|\Psi|$
to the power of two just above $2n$ is sufficient to ensure the degree of independence that we need. Moreover, we can set $\size\Sigma\sim\sqrt\Psi$ and $c$ so that we obtain (i) $c |\Sigma| = o(|\Psi|)$ and (ii) setting $d = 2c+1$ the probability bound in \Cref{thm:tornadomixrandom} becomes $1 - O(1/u)$ where $u$ is the size of the universe. The two previous conditions ensure respectively that tornado-mix uses linear space and that it is, whp, fully random. 

\medskip\noindent
\textbf{The splitting trick.} 
We now explain how our tornado-mix tabulation hashing provides a very simple and efficient implementation of the 
splitting trick of Dietzfelbinger and Rink \cite{dietzfel09splitting} which is a popular method for simulating full-randomness in the context of data structures, most notably various dictionary constructions~\cite{DIETZFELBINGER200747,10.1145/780542.780634, DBLP:journals/mst/FotakisPSS05,DBLP:conf/focs/ArbitmanNS10}. This is an especially relevant application because it usually requires hashing keys into a range of size $ \Theta(n)$ and, in this context, employing the uniform hashing of Pagh and Pagh \cite{PP08} would be too costly, i.e., (compared to the size of the overall data structure). This trick was also used to obtain a simpler and exponentially faster implementation of succinct uniform hashing \cite{PP08}.
We note that the splitting idea had also been used earlier works of Dietzfelbinger and Meyer auf der Heide \cite{DietzfelbingerH90} and Dietzfelbinger and Woelfel \cite{dietzfel03tabhash}.

The idea is to first split the input set $S$ of size $n$ into $n^{1-\delta}$ subsets $S_1,\ldots, S_m$, for some $\delta \in (0,1)$. The splitting is done through a hash function $h_1:\Sigma^c\rightarrow [n^{1-\delta}]$ such that $S_i$ is defined as all the keys in $S$ that hash to the same value, i.e., $S_i = \set{x\in S \mid h_1(x)=i}$. 
In many applications, we
want the splitting to be balanced, that is, we need a joint upper bound $s$ on
all set sizes with $s$ close to
the expected size $n^\delta$.
On top of this, we need a shared
hash function $h_2:\Sigma^c\to \cR$
which with high probability is
fully random on each set $S_i$ and
we ensure this with a hash function
that w.h.p. is fully random on any set of size at most $s$. The problem is then solved separately for each subset (e.g., building $n^{1-\delta}$ dictionaries, each responsible for just one subset $S_i$).

The above splitting trick can be easily done with tornado-mix hashing from Theorem \ref{thm:tornadomixrandom}. 
The dominant cost for larger $s$ is two lookups in tables of size at most $4s$. We let $h_1$ be
the select bits (with the strong concentration from \Cref{lemma:upper-chernoff}) and $h_2$ as the remaining free bits. Getting both for the price of one is nice, but the most important thing is that we get an efficient implementation of the shared hash function $h_2$. Specifically,
we compare this with how $h_1$ and $h_2$ were implemented in \cite{dietzfel09splitting} to get uniform hashing. For the splitter $h_1$, instead
of using Siegel's \cite{siegel04hash} highly independent hashing,   \cite[\S 3.1]{dietzfel09splitting} uses that if $h_1$ has sufficiently high, but constant, independence, then it offers good concentration (though not as good as ours from \Cref{lemma:upper-chernoff}).

The bigger issue is in the implementation of the shared $h_2$ from
\cite[\S 3.2]{dietzfel09splitting}.
For this, they first
employ a hash function $f:[s]\to [s^{1+\eps}]$ such that, whp, $f$ has  at most $k=O(1)$ 
keys that get the same hash value. For
small $\eps$ and $k$, this forces
a high independence of $f$. Next,
for every $i\in [s^{1+\eps}]$, they
use a $k$-independent hash
function $g_i:\Sigma^c\to\cR$, and finally, $h_2$ is implemented as $g_{f(x)}(x)$.
The space to implement $h_2$ is dominated by the $O(s^{1+\eps})$ space to store
the $s^{1+\eps}$ $k$-independent
hash functions $g_i$, and time-wise, we have to run several sufficiently independent hash functions. This should be compared with our tornado-mix that 
only uses $O(s)$ space and runs in
time corresponding to a few multiplications (in tests with 32-bit keys). 

\subsection{Paper Organization}
The remainder of our paper is structured as follows. In~\Cref{sec:prelim}, we introduce some notation, a more general notion of generalized keys and~\Cref{lem:simple-tab-on-lin-indep-sets} (the corresponding version of~\Cref{lem:simple-tab-on-lin-indep-sets-regular} for them). Our main technical result, ~\Cref{thm:tech-random-set}, is proved in three steps: we first define obstructions, the main combinatorial objects we study, in~\Cref{sec:obstructions}. In \Cref{sec:simplified} we present a simplified analysis of~\Cref{thm:tech-random-set} that achieves a weaker error probability. We then show in~\Cref{sec:tighter} a tighter analysis that finally achieves the desired bound. The Chernoff bounds for the upper tail are proved in~\Cref{sec:upper-tail-chernoff}. The details for the linear probing analysis can be found in~\Cref{sec:linear-probing}. Finally, the lower bound from~\Cref{thm:lower-bound} is proved in~\Cref{sec:lower-bound}.

\section{Preliminaries}\label{sec:prelim}
\paragraph{Notation.} We use the notation $n!!=n(n-2)\cdots$. More precisely, $n!!=1$ for $n\in \{0,1\}$,
while $n!!=n(n-2)!!$ for $n>1$. For odd $n$, this is exactly the number
of perfect matchings of $n+1$ nodes. We use the
notations
\begin{align*}
	n^{\ul k}&=n(n-1)\cdots (n+1-k)=n!/(n-k)!\\
	n^{\ull k}&=n(n-2)\cdots (n+2(1-k))=n!!/(n-2k)!
\end{align*}
Here $n^{\underline{\underline k}}$ appears to be non-standard, though
it will be very useful in this paper in connection with
something we will call greedy matchings. We note that
\[n^{\underline{\underline k}}\leq n^{\underline k}\leq n^k.\]

\paragraph{Position characters and  generalized keys.} We employ a simple generalization of keys going back to P{\v a}tra{\c s}cu  and Thorup~\cite{patrascu12charhash}. Namely, a \emph{position character} is an element of $\{1 \dots b\} \times \Sigma$, e.g., where $b=c$ or $c+d$. Under this definition a key $x \in \Sigma^b$ can be viewed as a set of $b$ position characters $(1, x_1) \dots (b, x_b)$, and, in general, we consider \emph{generalized keys} that may be arbitrary subsets of $\{1 \dots b\} \times \Sigma$. A natural example of a generalized key is the symmetric difference
$x\sd y$ of two (regular) keys. We
then have that
\[h(x)=h(y)\iff h(x\sd y)=0.\]
These symmetric differences will play an important role in our
constructions, and we shall refer to $x\sd y$ as a \emph{diff-key}.  A
diff-key is thus a generalized key where we have zero or two characters
in each position. For a generalized key $x$, we can then  define
\[x[i]=\{(i,a) \in x\}\textnormal,\quad  x[<i]=\{(j,a) \in x\mid j<i\}\quad\textnormal{and}\quad 
x[\leq i]=\{(j,a) \in x\mid j\leq i\}.\]
For example, if we have a regular key $x=x_1,\ldots,x_c$, then $x[i]=x_i$
and $x[\leq i]=x_1\ldots,x_i$. Also note that 
$(x\sd y)[\leq i]=
x[\leq i]\sd y[\leq i]$.
We shall also apply this indexing
to sets $X$ of generalized keys by applying it to each key individually, e.g.,
\[X[<i]=\{x[<i]\mid x\in X\}.\]

\paragraph{Generalized keys and linear independence.} A generalized key $x$ can also be interpreted as a $(|\Sigma| \cdot b)$-dimensional vector over $\bF_2$, where the only entries set to $1$ are those indexed by position characters in $x$. The generalized key domain is
denoted by $\bF_2^{\{1 \dots b\} \times \Sigma}$.  Now, if we have a simple tabulation hash function $h:\Sigma^b\to \cR$ using character tables $T_1,\ldots,T_b$, then $h$ can be lifted to hash any generalized key  $x\in \bF_2^{\{1 \dots b\} \times \Sigma}$ by
\[h(x)=\bigoplus_{(i,a)\in x} T_i[a].\]
Thus $h$ provides a mapping
$\bF_2^{\{1 \dots b\} \times \Sigma}$ to $\cR$.

As for (regular) keys, for a  set $Y$ of generalized keys, we can then define
$\bigsd Y$ to be the set of position characters that appear an odd
number of times across the keys in $Y$.
We then say that $Y$ is a \emph{zero-set} if $\bigsd Y=
\emptyset$. We also say that $Y$ is \emph{linearly dependent} if it contains a
subset which is a zero-set, and \emph{linearly independent} otherwise.
It is apparent then that a set of generalized keys is linearly independent if and only if the set of their vector representations is linearly independent. Indeed, the proof of Thorup and Zhang  \cite{thorup12kwise} for Lemma \ref{lem:simple-tab-on-lin-indep-sets} works quite directly in this generality. Thus we have
\begin{lemma}[Simple tabulation on linearly independent generalised keys] \label{lem:simple-tab-on-lin-indep-sets}
	Given a set of generalized keys $Y\subseteq\left(\bF_2^{\{1 \dots c\} \times \Sigma}\right)$ and a simple tabulation hash function $h: \Sigma^c \rightarrow \cR$, the following are equivalent:
	\begin{enumerate}[label = (\roman*)]
		\item $Y$ is linearly independent
		\item $h$ is fully random on $Y$ (i.e., $h|_Y$ is distributed uniformly over $\cR^{Y}$).
	\end{enumerate}
\end{lemma}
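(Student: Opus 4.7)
The plan is to observe that this is essentially a linear algebra statement, making the two implications straightforward once we view everything through the right lens. The key observation is that the extended hash function $h : \bF_2^{\{1,\ldots,b\} \times \Sigma} \to \cR$ defined by $h(x) = \bigxor_{(i,a) \in x} T_i[a]$ is $\bF_2$-linear, where $\cR = \bF_2^r$ is viewed as an $\bF_2$-vector space. The underlying random choices are the values $T_i[a]$ for $(i,a) \in \{1,\ldots,b\} \times \Sigma$, which are mutually independent and uniform in $\cR$, and these are precisely the images under $h$ of the standard basis vectors $e_{(i,a)}$ of the domain.

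For the implication $(i) \Rightarrow (ii)$, I would fix any linearly independent $Y = \{y_1,\ldots,y_k\}$ of generalized keys and extend it to a basis of the span of all the standard basis vectors $e_{(i,a)}$. Because $h$ is $\bF_2$-linear and assigns independent uniform values to a basis (the $e_{(i,a)}$), the joint distribution of $h$ on \emph{any} basis obtained by an invertible $\bF_2$-change of coordinates is again independent and uniform. Restricting to the first $k$ basis elements $y_1,\ldots,y_k$ then gives that $(h(y_1),\ldots,h(y_k))$ is uniform on $\cR^Y$, i.e., $h$ is fully random on $Y$.

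For the converse $(ii) \Rightarrow (i)$, I would argue by contrapositive: suppose $Y$ is linearly dependent, so some nonempty $Y' \subseteq Y$ satisfies $\bigsd Y' = \emptyset$. By linearity, $\bigxor_{y \in Y'} h(y) = h\!\left(\bigsd Y'\right) = h(\emptyset) = 0$ deterministically, so the values $\{h(y)\}_{y \in Y}$ satisfy a nontrivial linear relation with probability $1$ and hence cannot be uniform on $\cR^Y$ (the event of the relation holding under a uniform distribution would have probability $|\cR|^{-1} < 1$).

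I do not expect any real obstacle here: the whole content is that simple tabulation is a random $\bF_2$-linear map whose values on the standard basis are i.i.d.\ uniform, and the lemma is the standard fact that such maps are fully random exactly on linearly independent inputs. The only small point to be careful about is making sure the generalized-key viewpoint is compatible with the symmetric-difference operation $\bigsd$, which matches $\bF_2$-addition of the indicator vectors, so the two notions of linear dependence (combinatorial via $\bigsd$ and algebraic over $\bF_2$) coincide.
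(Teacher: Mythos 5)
Your proof is correct. The paper does not write out a proof of this lemma — it simply notes that the Thorup–Zhang argument for the regular-key version goes through unchanged for generalized keys — but the argument it defers to is precisely the one you give: $h$ lifted to $\bF_2^{\{1,\dots,c\}\times\Sigma}$ is a random $\bF_2$-linear map whose values on the standard basis are i.i.d.\ uniform over $\cR$, so an invertible change of basis shows full randomness on any linearly independent set, and the contrapositive direction follows from the deterministic relation $\bigoplus_{y\in Y'}h(y)=h(\bigsd Y')=0$ on a zero-set $Y'$.
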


\section{Obstructions with simple tornado tabulation}\label{sec:obstructions}

We prove~\Cref{thm:tech-random-set} by first considering a simpler version  of tornado tabulation hashing, which we call \emph{simple tornado hashing}, where we do not
change the last  character
of the (original) key. Formally,  for
a key $x=x_1\cdots x_c$, its corresponding derived key $\derive x = \derive x_1 \ldots \derive x_{c+d}$ is computed as
\begin{equation*}
	\derive x_i = \begin{cases}
		x_i &\text{if $i = 1, \ldots ,c$} \\
		\derive h_{i-c}\ld(\derive x_1 \dots \derive x_{i-1}\rd) &\text{otherwise}.
	\end{cases}
\end{equation*}
Note that, in the original tornado hashing,
we had $\derive x_c=x_{c} \xor \derive h_{0}(\derive x_1 \dots \derive x_{c-1})$. Removing this extra step is thus equivalent to fixing $\derive h_0(\cdot)=0$. While this step comes  at almost no cost in the code, it allows us to gain a factor of $3/\size{\Sigma}$ in the overall error probability. See~\Cref{sec:add-twist} for details.  For the simple tornado hashing, we will
prove a slightly weaker probability bound.

For ease of notation,  for every key $x$, we use $\derive x$ to denote
the corresponding derived key $\derive h(x)$; and likewise for any set
of keys. We also define $X=X^{f,h}$ to be the set of selected keys and
$\derive X=\tilde h(X)$. 
We want to argue that the derived selected keys $\derive X$ are linearly independent with high
probability. To prove this, we assume that $\derive X$ is linearly dependent and hence, contains some zero-set $\derive Z$. From $\derive Z$ we
construct  a certain type of ``obstruction''  that we show is unlikely to occur.

\subsection{Levels and matchings}
We first define some necessary concepts.  We use the notion of \emph{level} $i$ to refer to position $c+i$ in the derived keys. 
Let $M \subseteq \binom{\ssigma^c}{2}$ be a (partial) matching
on the keys $\Sigma^c$.\footnote{Here, we mean the graph-theoretic definition of a matching as a set of edges with disjoint endpoints. In our case, the vertices of the graph are keys in $\Sigma^c$, and the edges of the matching are represented as $\set{x,y} \in M$. }
We say that $M$ is an \emph{$i$-matching} if for all $\{x,y\}\in M$, 
it holds that $\derive x[c+i]=\derive y[c+i]$, namely if every pair of keys in $M$ matches on level $i$.
Our obstruction will, among other things, contain an $i$-matching $M_i$
for each level $i$.

Recall that a diff-key  $x \triangle y$ is the symmetric difference of two keys $x$ and $y$ in terms of their position characters.
We then say that $M$ is an \emph{$i$-zero, $i$-dependent, or $i$-independent} matching
if 
\[\DiffKeys(M,i)=\left\{(\derive x\sd\; \derive y)[\leq c+i]\mid \{x,y\}\in 
M\right\}\] 
is a zero-set, linearly dependent, or linearly independent, respectively. In other words, for each pair $\set{x,y}$ in the matching, we consider the diff-key
corresponding to the first $c+i$ characters of their derived keys. We then ask if this set of diff-keys now forms a zero-set or contains one as a subset,  by looking at their (collective) symmetric difference. We employ this notion to derive the probability that the function $\derive h$ satisfies a certain matching as such:

\begin{lemma}\label{lem:independent-matchings}
	Let $M$ be a partial matching on $\Sigma^c$. Conditioning on $M$ being $(i-1)$-independent, $M$ is an $i$-matching 
	with probability $1/|\Sigma|^{|M|}$.
\end{lemma}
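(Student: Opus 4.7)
The plan is to reduce the event that $M$ is an $i$-matching to a statement about $\derive h_i$ taking the value $0$ on a prescribed set of diff-keys, and then invoke \Cref{lem:simple-tab-on-lin-indep-sets} in exactly the right regime. First I would unpack the definition: for a pair $\{x,y\}\in M$, the condition $\derive x[c+i]=\derive y[c+i]$ can be rewritten as
\[
\derive h_i(\derive x[\le c+i-1])\;\oplus\;\derive h_i(\derive y[\le c+i-1])=0.
\]
Since $\derive h_i$ is a simple tabulation hash function, the standard linearity of simple tabulation (i.e.\ the lifting to generalized keys described in the preliminaries) gives
\[
\derive h_i\bigl((\derive x\sd \derive y)[\le c+i-1]\bigr)=0.
\]
Thus ``$M$ is an $i$-matching'' is precisely the event that $\derive h_i$ vanishes on every element of $\DiffKeys(M,i-1)$.

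Next I would handle the conditioning. The entire set $\DiffKeys(M,i-1)$ is a deterministic function of the hash tables $\derive h_0,\ldots,\derive h_{i-1}$ used to build the first $c+i-1$ derived characters; it does not depend on $\derive h_i$. The event that $M$ is $(i-1)$-independent is, by definition, the event that $\DiffKeys(M,i-1)$ is linearly independent as a collection of generalized keys in $\bF_2^{\{1,\ldots,c+i-1\}\times\Sigma}$. Condition on any outcome of $\derive h_0,\ldots,\derive h_{i-1}$ for which this holds; under this conditioning, $\DiffKeys(M,i-1)$ is a fixed linearly independent set of generalized keys, and $\derive h_i$ is still a fresh, independent simple tabulation hash function.

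Now I would apply \Cref{lem:simple-tab-on-lin-indep-sets} to $\derive h_i$ on the set $\DiffKeys(M,i-1)$: it asserts that $\derive h_i$ restricted to a linearly independent set of generalized keys is distributed uniformly on $\Sigma^{\DiffKeys(M,i-1)}$. In particular, the $|M|$ values $\{\derive h_i(z):z\in \DiffKeys(M,i-1)\}$ are mutually independent and uniform on $\Sigma$, so the probability that they are \emph{all} equal to $0$ is exactly $|\Sigma|^{-|M|}$. Since this bound holds under every outcome of $\derive h_0,\ldots,\derive h_{i-1}$ inside the conditioning event, averaging yields the stated conditional probability $1/|\Sigma|^{|M|}$.

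The only subtle point—and the one I would double-check before writing the formal proof—is that the pairs of $M$ give rise to $|M|$ \emph{distinct} elements of $\DiffKeys(M,i-1)$, so that the uniform product measure really has $|M|$ coordinates. This is automatic from $(i-1)$-independence: if two pairs produced the same diff-key, that diff-key would appear twice in $\DiffKeys(M,i-1)$, giving a nontrivial zero-set and contradicting linear independence. Hence no additional work is needed, and the lemma follows.
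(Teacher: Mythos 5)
Your proof is correct and takes essentially the same approach as the paper: rewrite the event as $\derive h_i$ vanishing on $\DiffKeys(M,i-1)$, note that the conditioning event is determined by the lower-level tables so $\derive h_i$ remains fresh, and apply \Cref{lem:simple-tab-on-lin-indep-sets}. Your closing remark about the $|M|$ diff-keys being distinct is a genuine subtlety that the paper's proof leaves implicit.
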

\begin{proof} 
	First, we notice that the event ``$M$ is $(i-1)$-independent'' only depends on $\derive h_j$ for $j<i$. Then, by 
	Lemma \ref{lem:simple-tab-on-lin-indep-sets}, when we apply the
	simple tabulation hash function $\derive h_i:\Sigma^{c+i}\to\Sigma$
	to the linearly independent generalized key set $\DiffKeys(M,i-1)$,
	the resulting hash values $\derive h(z)[c+i]$ for
	$z\in\DiffKeys(M,i-1)$ are independent and uniform over
	$\Sigma$. Hence, so are the resulting derived characters
	$\derive h(z)[c+i]$ for
	$z\in\DiffKeys(M,i)$. The probability that they are all $0$ is
	therefore $1/|\Sigma|^{|M|}$.
\end{proof}

Similarly, as for matchings, we say that a set of keys $Z\subseteq\Sigma^c$ is  \emph{$i$-zero}, \emph{$i$-dependent, or $i$-independent} if $\derive Z[\leq c+i]$ is a zero-set, linearly dependent, or linearly independent, respectively. We note the following relations:

\begin{observation}\label{lem:perfect-matchings-to-zero-set}
	Let $M$ be a partial matching on $\Sigma^c$ and $Z=\bigcup M$. 
	Then $M$ is an $i$-zero matching iff $Z$ is an $i$-zero set.
    Furthermore, if $M$ is $i$-dependent then $Z$ is also $i$-dependent (but not vice versa). 
\end{observation}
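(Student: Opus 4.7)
The plan is to prove the two claims essentially by bookkeeping with symmetric differences, exploiting that in a matching $M$ every element of $Z = \bigcup M$ appears in exactly one pair.

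First, for the ``iff'' claim, I would compute $\bigsd \DiffKeys(M,i)$ directly. Using that $(\derive x \sd \derive y)[\leq c+i] = \derive x[\leq c+i] \sd \derive y[\leq c+i]$, which follows immediately from the definition of $[\leq c+i]$ restriction on position characters, we get
\[
\bigsd \DiffKeys(M,i) \;=\; \bigsd_{\{x,y\}\in M}\bigl(\derive x[\leq c+i]\;\sd\;\derive y[\leq c+i]\bigr) \;=\; \bigsd_{z\in Z}\derive z[\leq c+i] \;=\; \bigsd \derive Z[\leq c+i].
\]
The middle equality is the key step: since $M$ is a matching on $Z$, each $z \in Z$ belongs to exactly one pair $\{x,y\} \in M$, so its derived-key restriction $\derive z[\leq c+i]$ is contributed exactly once to the overall symmetric difference. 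Hence one side is empty iff the other is, proving $M$ is $i$-zero iff $Z$ is $i$-zero.

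For the second claim, if $M$ is $i$-dependent then by definition $\DiffKeys(M,i)$ contains a zero-subset, and this subset corresponds to a sub-matching $M' \subseteq M$ which is $i$-zero. By the first part applied to $M'$, the set $Z' = \bigcup M' \subseteq Z$ is $i$-zero, so $\derive Z[\leq c+i]$ contains a zero-subset (namely $\derive Z'[\leq c+i]$), i.e., $Z$ is $i$-dependent.

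There is no real obstacle here; the only thing to be careful about is the observation that each $z \in Z$ is contributed exactly once when rewriting the symmetric difference of diff-keys, which relies on $M$ being a (partial) matching (so the pairs are disjoint) together with $Z = \bigcup M$ (so no element is missing). The ``not vice versa'' parenthetical needs no proof but could be illustrated by noting that $\derive Z[\leq c+i]$ might contain a zero-subset that picks only one key from some pair of $M$, which cannot arise from any sub-matching of $M$.
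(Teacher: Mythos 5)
Correct, and this is exactly the bookkeeping the paper leaves implicit by stating this as an unproved \emph{observation}: rewriting $\bigsd \DiffKeys(M,i)$ as $\bigsd_{z\in Z}\derive z[\leq c+i]$ via the fact that each $z\in Z$ lies in exactly one edge of the matching, and reducing the $i$-dependent claim to the $i$-zero claim through a zero sub-matching. Your indexed symmetric difference $\bigsd_{\{x,y\}\in M}$ also quietly handles the right interpretation of $\DiffKeys(M,i)$ (one contribution per edge), which is what the paper intends.
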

\medskip\noindent
We will also make use of the following observation repeatedly:
\begin{observation}\label{lem:zero-set-to-perfect-matchings}
	If $Z$ is an $i$-zero set, then there is a perfect $j$-matching on $Z$ for every 
	level $j\leq i$.
\end{observation}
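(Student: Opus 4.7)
The plan is to unpack the definitions and construct the matching level by level. Recall that $Z$ being $i$-zero means $\bigsd \derive Z[\leq c+i] = \emptyset$, i.e., when we take the symmetric difference of all the derived keys of $Z$ restricted to the first $c+i$ characters, every position character cancels. Equivalently, for every single position $p \in \{1, \ldots, c+i\}$ and every character value $a \in \Sigma$, the number of keys $x \in Z$ with $\derive x[p] = a$ is even.

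Fix any level $j \leq i$ and set $p = c+j$. The plan is to partition $Z$ into equivalence classes according to the value of the derived character at position $p$: for each $a \in \Sigma$, let $Z_a = \{x \in Z \mid \derive x[c+j] = a\}$. By the observation above, each $|Z_a|$ is even, so within each class we can pair the keys up arbitrarily. The union of these pairings is a perfect matching $M_j$ on $Z$, and by construction any pair $\{x,y\} \in M_j$ satisfies $\derive x[c+j] = \derive y[c+j]$, which is exactly the definition of a $j$-matching. This gives the desired perfect $j$-matching for every $j \leq i$.

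There is essentially no obstacle here — the only thing worth double-checking is that $|Z|$ is even, which is needed for a perfect matching to even be possible. This follows from applying the parity argument at position $p=1$ (assuming $i \geq 0$, so that $c+i \geq 1$, which is consistent with the statement since otherwise there is nothing to prove): the characters at position $1$ across the keys in $\derive Z$ pair up into an even multiset, and each key contributes exactly one character there, so $|Z|$ must be even. Note also that we do not claim that the matchings at different levels are consistent with one another; the statement only asserts the existence of a separate perfect $j$-matching for each level $j$, which is exactly what the per-level bucketing construction provides.
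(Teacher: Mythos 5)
Your proof is correct: the paper states this as an unproved observation, and the bucketing-by-character argument at position $c+j$ (each bucket has even size by the zero-set parity condition, so pair up arbitrarily within buckets) is exactly the natural way to see it. The side remarks about $|Z|$ being even and about not requiring consistency across levels are both accurate and worth keeping in mind.
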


\medskip\noindent

\subsection{Constructing an obstruction}\label{sec:construct}
In this section, we show that whenever the set of selected derived keys $\derive X$ is linearly dependent, it gives rise
to a certain \emph{obstruction}. We now show how to construct such an obstruction.

Since $\derive X$ is linearly dependent, there
must be some $d$-zero set $Z \subseteq X$. We
are going to pick a $d$-zero set that
(1) minimizes the number of elements contained that are not in the query set $Q$ and, subject to (1), (2) minimizes the number of elements from $Q$ contained. In particular, $Z$ is contained in $Q$ if $\derive Q$ is not linearly independent.

If $Z$ is not contained in $Q$, we let $x^*$ be any element from $Z\setminus Q$. Then
$Q \cup Z \setminus \{x^*\}$ must be linearly independent since any strict subset $Z'$ would
contradict (1). If $Z$ is contained in $Q$, then we let $x^*$ be
an arbitrary element of $Z$.

\paragraph{The top two levels.}
By~\Cref{lem:zero-set-to-perfect-matchings}, we have a
perfect $d$-matching $M^*_{d}$ and a perfect $(d-1)$-matching
$M^*_{d-1}$ on $Z$ (we also have perfect matchings on other levels, but they
will be treated later). These two perfect matchings partition $Z$
into alternating cycles.

We will now traverse these cycles in an
arbitrary order except that we traverse the
cycle containing $x^*$ last. For all
but the last cycle, we start in an arbitrary vertex and its cycle starting with the
edge from $M^*_{d-1}$. When we get to
the last cycle, we start at the $M^*_{d}$ neighbor of $x^*$ and follow the cycle from the $M^*_{d-1}$ neighbor of this key. This
ensures that $x^*$ will be the very last
vertex visited. The result is a traversal sequence $x_1,\ldots,x_{|Z|}$
of the vertices in $Z$ ending in $x_{|Z|}=x^*$. Note that $M^*_{d-1}$ contains the pairs
$\{x_{1},x_2\},\{x_3,x_4\},\ldots$. For $M^*_d$ it is a bit more complicated,
since its pairs may be used to complete a cycle (and hence are not visible in the traversal).

We now  define $W= \{x_1 \dots x_w\}$ to be the shortest prefix of
$x_1 \dots x_{|Z|}$ such that $M^*_{d-1}$ restricted to the keys in $W$ is a
$(d-1)$-dependent matching, i.e., we go through the pairs $\set{x_1,x_2}, \set{x_3,x_4}, \ldots$ until the set of their diff-keys (up to level $d-1$) contains a zero-set. Note that such a $W\subseteq Z$ always exists because $M^*_{d-1}$ itself is a $(d-1)$-zero matching. Also note
that $W\setminus\{x_w\}\subseteq Z\setminus\{x^*\}$.
Let $e_{d-1} := \{x_{w-1}, x_w\}$ be the last pair in the prefix, and as $e_{d-1} \in
M^*_{d-1}$, we get that $w$ is even.
We then define $M_{d-1}$ to be the
restriction of $M^*_{d-1}$ to the  keys in $W$ and $M_d$ to be the
restriction of $M^*_d$ to the keys in $W\setminus \set{x_w}$. Note that
$M_{d-1}$ is a perfect matching on $W$ while $M_d$ is a maximal
matching on $W\setminus \{x_w\}$.  Since $M_{d-1}$ is
$(d-1)$-dependent, we can define a submatching $L_{d-1} \subseteq
M_{d-1}$ such that $L_{d-1}$ is a $(d-1)$-zero matching (this corresponds exactly to the subset of $\DiffKeys(M_{d-1},d-1)$ that is a zero-set). By
construction, $e_{d-1} \in L_{d-1}$. Finally, we set $Z_{d-1}=\bigcup
L_{d-1}$ and notice that $Z_{d-1}$ is an $(d-1)$-zero key set (by~\Cref{lem:perfect-matchings-to-zero-set}).

\paragraph{A special total order.}
We now define a special new total order $\preceq$ on $\Sigma^c$ that we use in order to index the keys in $W$ and describe matchings on levels $<d-1$.  Here $x_w$ has a special role
and we place it in a special position; namely at the end. More precisely, we have the natural $\leq$-order  on $\Sigma^c$, i.e, in which keys are viewed as numbers $<|\Sigma|^c$. 
We define $\preceq$ to be exactly as 
$\leq$ except that we set $x_w$ to be the largest element.
Moreover, we extend the total order $\preceq$ to disjoint edges in a matching
$M$: given $\{x_1, x_2\}, \{y_1, y_2\} \in M$, we define
$\{x_1, x_2\} \preceq \{y_1, y_2\}$ if and only if $\min x_i \prec \min_i y_i$.

\paragraph{Lower levels.}
Now for $i = d-2 \dots 0$, we do the following: from level $i+1$, we
have an $(i+1)$-zero set $Z_{i+1}$. By~\Cref{lem:zero-set-to-perfect-matchings}, there exists a perfect
$i$-matching $M_i^*$ over $Z_{i+1}$. Notice that $M^*_i$ is an
$i$-zero matching.  We define $M_i$ as the shortest prefix (according
to $\preceq$) of $M^*_i$ that is $i$-dependent. Denote by $e_i$ the
 $\preceq$-maximum edge in $M_i$. Define the submatching
$L_i \subseteq M_i$ such that $L_i$ is an $i$-zero set. By construction,
$e_i \in L_i$. Finally, we set $Z_i=\bigcup L_i$ and notice that $Z_i$ is an
$i$-zero set.

\subsection{Characterizing an obstruction}
Our main proof strategy will be to show that an obstruction is unlikely to happen, implying that
our selected derived keys $\derive X$ are unlikely to be linearly dependent.
To get to that point, we first characterize such an obstruction in a way that
is independent of how it was derived in~\Cref{sec:construct}. With this classification
in hand, we will then be able to make a union bound over all possible obstructions.

\medskip\noindent
Corresponding to the two top levels, our obstruction consists of the following objects:
\begin{itemize}
	\item A set of keys $W \subseteq \Sigma^c$ of some size $w=|W|$.
	\item A special key $x_w \in W$, that we put last when we define $\preceq$.
	\item A maximal matching $M_d$  on $W \setminus \{x_{w}\}$.
	\item A perfect matching  $M_{d-1}$ on $W$, where $e_{d-1}$ is the only edge in $M_{d-1}$ incident to $x_w$.
	\item A submatching $L_{d-1}\subseteq M_{d-1}$, which includes $e_{d-1}$, and its support $Z_{d-1}=\bigcup L_{d-1}$.  
\end{itemize}

Note that the above objects do not include the whole selected set $X^{f,h}$,
the $d$-zero set $Z$, or the perfect matchings $M_{d-1}^*,M_d^*$ that we
used in our construction of the obstruction. Also, note that thus far,
the objects have not mentioned any relation to the hash function $h$.

To describe the lower levels, we need to define greedy matchings.  We
say a matching $M$ on a set $Y$ is \emph{greedy} with respect to the
total order $\preceq$ on $Y$ if either: (i) $M = \emptyset$ or; (ii)
the key $\min_\preceq Y$ is incident to some $e \in M$ and $M \setminus \{e\}$ is
greedy on $Y\setminus e$.
Assuming that $|Y|$ is even, we note that $M$ is greedy if and
only if it is a prefix of some $\preceq$-ordered perfect matching
$M^*$ on $Y$.

\medskip\noindent
For the lower levels $i=d-2,\ldots,1$, we then have the following
objects:
\begin{itemize}
	\item A greedy matching $M_i$ on $Z_{i+1}$. Denote with $e_{i}$ the $\preceq$-maximum edge in $M_i$.
	\item A submatching $L_i\subseteq M_i$ which includes $e_i$, and its support  $Z_i=\bigcup L_i$.
\end{itemize}

Note that the above objects describe all possible obstructions that we might construct. 
Moreover, any obstruction can be uniquely described as the tuple of objects
$$(W,x_w,M_{d},M_{d-1},e_{d-1},L_{d-1},Z_{d-1},\ldots	M_{1},e_{1},L_{1},Z_{1})\;.$$ 

\subsection{Confirming an obstruction}
In order for a given obstruction to actually occur, the tornado tabulation hash function $h=\toptab h\circ \derive h$ must satisfy the following conditions with respect to it:
\begin{itemize}
	\item The keys in $W$ are all selected, that is, $W \subseteq X^{f,h}$. 
	\item Either $W\subseteq Q$ or $Q\cup W\setminus\{x_w\}$ is $d$-independent.
	\item For $i=1,\ldots,d$, $M_i$ is an $i$-matching.
	\item $M_d$ is $(d-1)$-independent.
\end{itemize}
For $i=1,\ldots,d-1$, 
\begin{itemize}
	\item $M_i\setminus \{e_i\}$ is $(i-1)$-independent.
	\item The submatching $L_i$ is $i$-zero (implying that
	$Z_i=\bigcup L_i$ is an $i$-zero set).
\end{itemize}

When a hash function $h$ satisfies the above conditions, we say that it \emph{confirms} an obstruction.
We now need to argue two things. We need (1) to verify that our construction of
an obstruction from~\Cref{sec:construct} satisfies these conditions on $h$, and (2) that,  given a fixed obstruction, the probability
that a random $h$ confirms the obstruction is very small. This
is captured by the two lemmas below.
\begin{lemma}\label{lem:construct-correct}
	The obstruction constructed in~\Cref{sec:construct} satisfies the conditions on
	$h=\toptab h\circ \derive h$.
\end{lemma}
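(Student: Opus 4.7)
The plan is to verify the six conditions in order. Conditions 1, 3, and 6 are immediate: Condition 1 because $W\subseteq Z\subseteq X^{f,h}$; Condition 3 because each $M_i$ is a sub-matching of the perfect $i$-matching $M^*_i$; and Condition 6 directly by the definition of $L_i$ as an $i$-zero sub-matching of $M_i$. Condition 5 reduces to a small algebraic observation about $i$-matchings, and the substantive work is concentrated in Conditions 4 and 2, which I will handle by the same contradiction-with-minimality argument.

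For Condition 5, I would record the observation that whenever $M$ is an $i$-matching, the diff-key $(\derive x\sd\derive y)[\leq c+i]$ vanishes in position $c+i$ for every $\{x,y\}\in M$, so $\DiffKeys(M,i)$ and $\DiffKeys(M,i-1)$ are linearly independent simultaneously. Hence $i$-independence and $(i-1)$-independence coincide for $i$-matchings. Applying this to $M_i\setminus\{e_i\}$, which is an $i$-matching and is $i$-independent by the minimality of $M_i$ as the shortest $\preceq$-prefix of $M^*_i$ that is $i$-dependent, gives Condition 5. The same equivalence, applied to the $d$-matching $M_d$, reduces Condition 4 to showing that $M_d$ is $d$-independent.

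The main step is then a single contradiction-with-minimality argument covering Conditions 4 and 2. Suppose either that $M_d$ is $d$-dependent or that $Q\cup(W\setminus\{x_w\})$ is $d$-dependent. In the first case, a zero submatching $L\subseteq M_d$ gives a $d$-zero set $Z':=\bigcup L\subseteq W\setminus\{x_w\}$; here I use that $L\subseteq M^*_d$ is a $d$-matching, so the level-$d$ characters cancel pair-by-pair and a $(d-1)$-zero witness is in fact $d$-zero. In the second case, a $d$-zero subset $Z'$ of $Q\cup(W\setminus\{x_w\})$ is obtained directly. From the construction of the traversal, $W\setminus\{x_w\}\subseteq Z\setminus\{x^*\}$ --- either because $w<|Z|$ forces $x^*\notin W$, or because $w=|Z|$ forces $x_w=x^*$ --- so in both scenarios $Z'$ is a $d$-zero subset of $Q\cup(Z\setminus\{x^*\})\subseteq X^{f,h}$ that omits $x^*$. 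In Case B ($Z\not\subseteq Q$) this $Z'$ has strictly fewer non-$Q$ keys than $Z$, and in Case A ($Z\subseteq Q$) it has the same (zero) non-$Q$ count but is a strict subset of $Z$ and hence has fewer $Q$-keys. Either way, the minimality of $Z$ is contradicted.

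The main obstacle, such as it is, will be the bookkeeping for the containment $W\setminus\{x_w\}\subseteq Z\setminus\{x^*\}$ and the reminder that $Q\cup Z\subseteq X^{f,h}$ (so the extracted $Z'$ is a legitimate $d$-zero set in $X^{f,h}$, eligible to contradict minimality); everything else is a routine unpacking of the definitions of the construction.
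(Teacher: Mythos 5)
Your proof is correct and follows the same overall strategy as the paper's: verify each confirmation condition directly, using the minimality of the chosen $d$-zero set $Z$ to rule out premature dependencies. Where you go beyond the paper's terse argument is Condition 4, the $(d-1)$-independence of $M_d$. The paper's proof only explicitly treats Condition 2 (via the containment $W\setminus\{x_w\}\subseteq Z\setminus\{x^*\}$) and Condition 5 (for $i\le d-1$, via the minimal-prefix choice of $M_i$); it never spells out why $M_d$ is $(d-1)$-independent. You fill this in cleanly: since $M_d$ is a $d$-matching, the level-$d$ coordinate of every diff-key cancels, so $(d-1)$-dependence would yield a $d$-zero submatching $L\subseteq M_d$, whose support $\bigcup L\subseteq W\setminus\{x_w\}\subseteq Z\setminus\{x^*\}$ is a smaller $d$-zero set (strictly fewer non-$Q$ keys in Case B, strictly fewer $Q$ keys in Case A), contradicting the two-tiered minimality of $Z$. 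Your careful case split on whether $x_w=x^*$ or $w<|Z|$, and the observation that $i$-independence and $(i-1)$-independence coincide for an $i$-matching, are exactly the auxiliary facts that make the minimality argument go through and that the paper uses implicitly. So: same approach, but a more complete write-up that explicitly closes a step the paper hand-waves.
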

\begin{proof} Since $W\setminus\{x_w\}\subseteq Z\setminus \{x^*\}$, the choice of $Z$
implies that either $W\subseteq Q$ or
$Q\cup W\setminus\{x_w\}$ is $d$-independent.

The matchings $M_i$ were all submatchings
	of perfect $i$-matchings. We also picked $M_i$ is minimally $i$-dependent,
	but for an $i$-matching $M_i$ this also implies that $M_i$ is
	minimally $(i-1)$-dependent, and therefore $M_i\setminus \{e_i\}$ is $(i-1)$-independent. Finally, we constructed $L_i$ and $Z_i$ to be $i$-zero.
\end{proof}
\begin{lemma}\label{lem:event-probability}
	Given the objects of an obstruction, the probability that 
	$h=\toptab h\circ \derive h$  confirms the obstruction is at most
	\[\ld(\prod_{x\in W\setminus \{x_w\}} p_x \rd) \bigg / |\Sigma|^{w-2+\sum_{i=1}^{d-2}(|M_i|-1)}.\]
\end{lemma}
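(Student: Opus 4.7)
The plan is to exploit the independence of the two simple tabulation functions $\derive h$ and $\toptab h$ composing $h$: this lets me split the confirmation event into a \emph{structural} part (the matching, submatching-independence, and $L_i$-zero conditions, all depending on $\derive h$ alone) and a \emph{selection} part ($W\subseteq X^{f,h}$, which additionally depends on $\toptab h$). I would then write
\[
\Pr[h\text{ confirms the obstruction}]\;=\;\E_{\derive h}\!\left[\mathbf{1}[\text{structural}]\cdot \Pr[\text{selection}\mid\derive h]\right]
\]
and bound each factor separately, getting exactly the two factors in the claim.

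For the selection factor, I would use the obstruction condition that either $W\subseteq Q$ or $Q\cup(W\setminus\{x_w\})$ is $d$-independent. In the latter case, applying~\Cref{lem:simple-tab-on-lin-indep-sets} to $\toptab h$ gives that, conditional on $\derive h$, the values $\toptab h(\derive x)$ for $x\in(W\setminus\{x_w\})\cup Q$ are independent and uniformly distributed over $\cR$. Hence, for each $x\in W\setminus\{x_w\}$, the conditional probability that $x$ is selected is at most $\max_{\varphi\in\cR^Q}\Pr_{r\sim\cU(\cR)}[f(x,r,\varphi)=1]=p_x$, and these events are independent across $x$. Upper-bounding the selection of $x_w$ itself trivially by $1$, I obtain $\Pr[\text{selection}\mid\derive h]\leq\prod_{x\in W\setminus\{x_w\}}p_x$. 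The case $W\subseteq Q$ is immediate because $p_x=1$ for query keys.

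For the structural factor, I would process the levels $i=1,\ldots,d$ using~\Cref{lem:independent-matchings}. The key observation is that for each $i\leq d-1$, the event ``$M_i$ is $i$-matching'' implies ``$M_i\setminus\{e_i\}$ is $i$-matching''. Since the obstruction requires $M_i\setminus\{e_i\}$ to be $(i-1)$-independent,~\Cref{lem:independent-matchings} applied to $M_i\setminus\{e_i\}$ contributes a conditional probability of $1/|\Sigma|^{|M_i|-1}$. At the top level the obstruction forces $M_d$ itself to be $(d-1)$-independent, so an analogous application of~\Cref{lem:independent-matchings} gives $1/|\Sigma|^{|M_d|}$. Iterating these per-level bounds top-down (using that $\derive h_1,\ldots,\derive h_d$ are independent), the product yields $\Pr[\text{structural}]\leq 1/|\Sigma|^{|M_d|+\sum_{i=1}^{d-1}(|M_i|-1)}$. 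Using $|M_{d-1}|=w/2$ (perfect matching on $W$) and $|M_d|=(w-2)/2$ (restriction to $W\setminus\{x_w\}$ of a perfect matching on $Z\supseteq W$), the exponent collapses to $w-2+\sum_{i=1}^{d-2}(|M_i|-1)$, matching the claim.

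The main subtlety will be the iterative conditioning: each level-$i$ event depends not only on $\derive h_i$ but also, through the derived characters, on $\derive h_1,\ldots,\derive h_{i-1}$.~\Cref{lem:independent-matchings} is perfectly tailored to this, since it yields a conditional probability that holds uniformly over every realization of lower levels consistent with $(i-1)$-independence, so the argument reduces to applying it once per level inside a nested expectation. Worth noting: the $L_i$-zero conditions do not enter the probability computation at all; they entered the construction of~\Cref{sec:construct} only to pin down the specific edge $e_i$ whose removal restores independence, which is precisely what lets us shave a factor $1/|\Sigma|$ per intermediate level, replacing the naive $1/|\Sigma|^{|M_i|}$ by $1/|\Sigma|^{|M_i|-1}$.
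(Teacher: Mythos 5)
Your proposal is correct and takes essentially the same route as the paper: decompose the confirmation event into a structural part (handled level-by-level via Lemma~\ref{lem:independent-matchings}, exploiting that the obstruction explicitly requires $M_i\setminus\{e_i\}$ to be $(i-1)$-independent to shave one edge per intermediate level) and a selection part (handled by applying Lemma~\ref{lem:simple-tab-on-lin-indep-sets} to $\toptab h$ on the $d$-independent set $Q\cup W\setminus\{x_w\}$ and peeling off $\prod p_x$). One tiny phrasing nit: the selection events for distinct $x$ are independent only conditionally on $h|_Q$, which is precisely what the $\max_\varphi$ in the definition of $p_x$ absorbs, so the bound $\prod_{x\in W\setminus\{x_w\}}p_x$ is still correct.
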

\begin{proof}
For simplicity, we group the conditions above in the following events: we define the event $\calC_S$ to be the event  that   $W \setminus\{x_w\} \subseteq X^{f,h}$ given that the set $W\setminus\{x_w\}$ is $d$-independent. Then, for each $i\in\set{1,\ldots, d-1}$, we define $\calC^{(i)}$ to be the event that $M_i\setminus e_i$ is an $i$-matching conditioned on the fact that it is $(i-1)$-independent. Finally, we define the last event $\calC^{(d)}$ to be the event that $M_{d}$ is a $d$-matching conditioned on the fact that it is $(d-1)$-independent. It is then sufficient to show that: 

$$\Pr\parentheses{\calC_S \wedge \bigwedge_{i=1}^{d} \calC^{(i)}} \leq 	\ld(\prod_{x\in W\setminus \{x_w\}} p_x \rd) \bigg / |\Sigma|^{w-2+\sum_{i=1}^{d-2}(|M_i|-1)}.$$

We proceed from the bottom up, in the following sense. For every $i\in\set{1,\ldots, d-1}$, the  randomness of the event $\calC^{(i)}$ conditioned on $\bigwedge_{j=1}^{i-1} \calC^{(j)}$  depends solely on $\derive h_i$. We invoke~\Cref{lem:independent-matchings} for $M_i\setminus e_i$ and get that 

$$\Pr\parentheses{\calC^{(i)} \conditioned \bigwedge_{j=1}^{i-1} \calC^{(j)}} \leq 1/\size{\Sigma}^{ \size{M_i}-1}\;.$$

When it comes to level $d$, from~\Cref{lem:independent-matchings}  applied to $M_d$, we get that the probability of $\calC^{(d)}$ conditioned on $\bigwedge_{j=1}^{d-1}\calC^{(j)}$ is at most $1/\size{\Sigma}^{\size{M_d}}$. We now notice that $\size{M_{d-1}}+ \size{M_d} = w-1$ by construction, and so:
\begin{align*}
	\Pr\parentheses{\bigwedge_{j=1}^{d}\calC^{(j)}} &\leq 1/\size{\Sigma}^{w-2+\sum_{i=1}^{d-2} (\size{M_i}-1)}\;.
\end{align*}

Finally, we know that either $W\subseteq Q$ or $Q\cup W\setminus\{x_w\}$ is $d$-independent.
If $W\subseteq Q$, then $p_x=1$ for all 
$x\in W$, and therefore, trivially,
$$\Pr\parentheses{\calC_S \conditioned \bigwedge_{j=1}^{d}\calC^{(j)} } \leq \prod_{x\in W\setminus \set{x_w}} p_x= 1\;.$$
Otherwise $Q\cup W\setminus\{x_w\}$ is $d$-independent. This means that
the derived keys $\derive h(Q\cup W\setminus\{x_w\})$ are linearly independent. We can then apply~\Cref{lem:simple-tab-on-lin-indep-sets} to this set of derived keys and the final simple tabulation hash function  $\toptab h$. We get that the final hash values $h(Q\cup W\setminus\{x_w\})$ are chosen independently and uniformly at random. For any one element $x$, by definition, $\Pr\parentheses{x \in X^{f,h}} \leq p_x$ and so:
$$\Pr\parentheses{\calC_S \conditioned \bigwedge_{j=1}^{d}\calC^{(j)} } \leq \prod_{x\in W\setminus \set{x_w}} p_x\;.$$
This completes the claim.
\end{proof}

\section{Simplified analysis}\label{sec:simplified}
In this section, we present a simplified analysis showing that,
under the hypotheses of our theorem, $\derive h(X^{f, h})$ is linearly
dependent with probability at most $\Theta(\mu^3 (17/|\Sigma|)^{d}) + 2^{-|\Sigma|/8}$.
Later, we will replace $(17/|\Sigma|)^d$ with
$(3/|\Sigma|)^d$, which essentially matches the growth in our lower bound.

For now we use a fixed limit $w_0=|\Sigma| / 2^{5/2}$ on the set size $w=|W|$. With this limited set size,
we will derive the $\Theta(\mu^3 (17/|\Sigma|)^d)$ bound. The $2^{-|\Sigma|/8}$ bound will
stem for sets of bigger size and will be derived in a quite different way.

Our goal is to study the probability that there exists a combinatorial
obstruction agreeing with a random tornado hash function $h$; if not,
$\derive h(X^{f, h})$ is linearly independent. To do this,
we consider a union bound
over all combinatorial obstructions as such:
\begin{equation}\label{eq:1}
	\sum_{W,x_w,M_{d},M_{d-1},e_{d-1},L_{d-1},Z_{d-1},\ldots
		M_{1},e_{1},L_{1},Z_{1}}\Pr_{h}[h\textnormal{ confirms the obstruction}]
\end{equation}
The above sum is informally written in that we assume that each element
respects the previous elements of the obstruction, e.g., for $i<d-2$, $M_i$ is a
greedy matching over
$Z_{i+1}$. Likewise, in the probability term, it is understood that $h$ is supposed to confirm the obstruction whose combinatorial description is $({W,x_w,M_{d},M_{d-1},e_{d-1},L_{d-1},Z_{d-1},\ldots
	M_{1},e_{1},L_{1},Z_{1}})$ .

\medskip\noindent
Using Lemma \ref{lem:event-probability}, we bound \req{eq:1} by
\begin{align}
	&\sum_{W,x_w,M_{d},M_{d-1},e_{d-1},L_{d-1},Z_{d-1},\ldots
		M_{1},e_{1},L_{1},Z_{1}}\ld(\prod_{x\in W\setminus \{x_w\}} p_x \rd) \bigg / |\Sigma|^{w-2+\sum_{i=1}^{d-2}(|M_i|-1)}\nonumber\\
	&\rule{1cm}{0ex}\leq \left(\sum_{W,x_w,M_{d},M_{d-1},e_{d-1},L_{d-1},Z_{d-1}}\ld(\prod_{x\in W\setminus \{x_w\}} p_x \rd) |\Sigma|^{2-w} 2^{w/4-1}\right)\label{top-factor}\\
	&\rule{2cm}{0ex}{{\times}}\prod_{i=1}^{d-2}\max_{Z_{i+1}}\ld(\sum_{M_{i},e_{i},L_{i},Z_{i}}|\Sigma|^{1-|M_{i}|}\bigg/2^{(|Z_{i+1}|-|Z_{i}|)/4}\rd)\label{low-factor}.
\end{align}
Above , $W,x_w,M_{d},M_{d-1},e_{d-1},L_{d-1},Z_{d-1}$ are all the
elements from the top two levels and we refer to \req{top-factor} as
the ``top'' factor. We refer to Equation
\req{low-factor} as the ``bottom'' factor which is the product of ``level'' factors.

For each lower level $i\leq d-2$, the elements
$M_{i},e_{i},L_{i},Z_{i}$ are only limited by $Z_{i+1}$, so for a
uniform bound, we just consider the maximizing choice of
$Z_{i+1}$. For this to be meaningful, we divided the level factor
\req{low-factor} by $2^{(|Z_{i+1}|-|Z_{i}|)/4}$ and multiplied
\req{top-factor} by $2^{w/4-1}\geq\prod_{i=1}^{d-2}2^{(|Z_{i+1}|-|Z_{i}|)/4}$.
This inequality follows because $|Z_{d-1}|\leq w$ and $|Z_1|\geq 4$.
The exponential decrease
in $|Z_{i+1}|$ helps ensuring that the maximizing choice of $Z_{i+1}$ has bounded
size (and is not infinite).
We note here that our bound $|W|\leq w_0=|\Sigma| / 2^{5/2}$ is only needed when
bounding the level factors,  where it implies that $|Z_{i+1}|\leq w_0$.

\subsection{The top two levels and special indexing for matchings and zero sets}
We now wish to bound the top factor \req{top-factor} from the two
top levels. Below, we will first consider $w$ fixed. Later we will
sum over all relevant values of $w$.

We want to specify the $w$ keys in $W$ using the fact that the keys from $W\setminus\{x_w\}$
hash independently.  Thus, we claim that we only have to specify the 
set $V=W\setminus\{x_w\}$, getting $x_w$ for free. 
By construction, we
have $x_w$ in the zero-set $Z_{d-1}$, so
\begin{equation}\label{eq:code-key}
	x_w=\bigsd (Z_{d-1}\setminus
	\{x_w\}).
\end{equation}
To benefit from this zero-set equality, we need the special
ordering $\preceq$ from the construction. It uses the standard
ordering $\leq$ of $V$ since all these keys are known, and then it just
places the yet unknown key $x_w$ last.  The special ordering yields
and indexing $x_1\prec x_2\prec\cdots\prec x_w$ (this is not the order
in which we traversed the cycles in the construction except that $x_w$
is last in $W$ in both cases). Formally, we can now specify all the
$M_i,L_i,Z_i$ over the index set $\{1,\ldots,w\}$. For $i<w$,
we directly identify $x_i$ as the $i$th element in $V$. This way
we identify all $x_i\in Z_{d-1}\setminus\{x_w\}$, and then we
can finally compute the special last key $x_w$; meaning that
we completely resolve the correspondance between indices and keys
in $W$. As a result, we can bound \req{top-factor} by
\[\sum_{V:|V|=w-1}\ld(\prod_{x\in V} p_x \rd)\times
\sum_{M_{d},M_{d-1},e_{d-1},L_{d-1},Z_{d-1}}|\Sigma|^{2-w}2^{w/4-1}.\]
For the first part, with $v=w-1$, we have
\begin{equation}\label{eq:set-number}
	\sum_{V:\, |V|=v} \ld(\prod_{x\in V} p_x\rd)
	=\frac1{v!}\sum_{(x_1,\ldots,x_v)}\prod_{i=1}^v p_{x_i}
	\leq\frac1{v!}\prod_{i=1}^v (\sum_{x}p_{x})
	=\frac{\mu^v}{v!}.
\end{equation}
For the second part, we note $M_{d-1}$ and $M_d$ can both be chosen in
$(w-1)!!$ ways and we know that $e_{d-1}$ is the edge in $M_{d-1}$
incident to $w$. Since the submatching $L_{d-1}$ of $M_{d-1}$ contains
the known $e_{d-1}$, it can be chosen in at most $2^{|M_{d-1}|-1}=2^{w/2-1}$
ways. Putting this together, we bound \req{top-factor} by
\begin{align}
	&\frac{\mu^{w-1}}{(w-1)!}((w-1)!!)^2\,2^{w/2-1}|\Sigma|^{-w+2}2^{w/4-1}
	=\frac{\mu^{w-1}}{|\Sigma|^{w-2}}\cdot\frac{(w-1)!!}{(w-2)!!}\cdot 2^{3w/4-2}\nonumber\\
	&\leq\frac{\mu^3}{|\Sigma|^2}2^{4-w}\cdot\frac{(w-1)!!}{(w-2)!!}\cdot 2^{3w/4-2}
	=\frac{\mu^3}{|\Sigma|^2}\cdot\frac{(w-1)!!}{(w-2)!!}\cdot 2^{2-w/4}\label{eq:no1}
\end{align}
Above we got to the second line using our assumption that $\mu\leq |\Sigma|/2$.
This covers our top factor \req{top-factor}, including specifying the
set $Z_{d-1}$ that we need for lower levels.

\paragraph{Union bound over all relevant sizes.}
We will now sum our bound \req{eq:no1} for a fixed set size $w$ over all relevant
set sizes $w\leq w_0$, that is, $w=4,6,\ldots,w_0$.
The factor that depends
on $w$ is
\[f(w)=\frac{(w-1)!!}{(w-2)!!}/2^{w/4}.\]
We note that $f(w+2)=f(w)\frac{w+1}{\sqrt 2 w}$ and $\frac{w+1}{\sqrt 2 w}<4/5$ for $w\geq 8$, so
\[\sum_{\textnormal{Even }w=4}^{w_0} f(w)< f(4)+f(6)+5f(8)<4.15.\]
Thus we bound the top factor \req{top-factor} over all sets $W$ of size up to $w_0$ by
\begin{equation}\label{top-to-w0}
	\sum_{\textnormal{Even }w=4}^{w_0} \left(\frac{\mu^3}{|\Sigma|^2}\cdot \frac{(w-1)!!}{(w-2)!!}\cdot 2^{2-w/4}\right)<16.6 (\mu^3/|\Sigma|^2).
\end{equation}

\subsection{Lower levels with greedy matchings}
We now focus on a lower level $i\leq d-2$ where we will bound the
level factor
\begin{equation}\label{level-factor}
	\max_{Z_{i+1}}\ld(\sum_{M_{i},e_{i},L_{i},Z_{i}}|\Sigma|^{1-|M_{i}|}\bigg/2^{(|Z_{i+1}|-|Z_{i}|)/4}\rd).
\end{equation}
We want a bound of $O(1/|\Sigma|)$, implying a bound of
$O(1/|\Sigma|)^{d-2}$ for all the lower levels in \req{low-factor}.

All that matters for our bounds is the cardinalities of the different sets, and
we set $m_i=|M_i|$ and $z_i=|Z_i|$. For now we assume that $z_{i+1}\leq w_0$ and
$m_i$ are given.

\begin{lemma}\label{lem:greedy-count}
	With a given linear ordering over a set $S$ of size
	$n$,
	the number of greedy matchings of size $k$ over $S$ is $(n-1)^{\ull k}$.
\end{lemma}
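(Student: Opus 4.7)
The statement is a clean recurrence count, and I would prove it by induction on $k$, directly unpacking the recursive definition of a greedy matching.

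For the base case $k=0$, the only greedy matching of size $0$ is the empty matching, and $(n-1)^{\underline{\underline{0}}}=1$ by convention (empty product), so the count is correct regardless of $n$.

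For the inductive step, suppose the statement holds for matchings of size $k-1$ on any linearly ordered set. Let $s^* = \min_\preceq S$. By definition of a greedy matching $M$ of size $k \geq 1$ on $S$, the element $s^*$ must be incident to some edge $e \in M$, and $M \setminus \{e\}$ must be a greedy matching of size $k-1$ on $S \setminus e$. To count greedy matchings of size $k$, I would first choose the partner of $s^*$: there are $n-1$ choices, one for each element of $S \setminus \{s^*\}$. Once this edge $e$ is fixed, the remaining set $S \setminus e$ has $n-2$ elements, inherits the linear order from $\preceq$, and by the inductive hypothesis admits exactly $(n-3)^{\underline{\underline{k-1}}}$ greedy matchings of size $k-1$. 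Multiplying and unfolding the double-falling factorial,
\[
(n-1)\cdot(n-3)^{\underline{\underline{k-1}}} \;=\; (n-1)(n-3)(n-5)\cdots(n-2k+1) \;=\; (n-1)^{\underline{\underline{k}}},
\]
which is exactly the desired expression.

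The only subtlety to double-check is that the two pieces of data (the edge containing $s^*$ and the smaller greedy matching on $S \setminus e$) uniquely reconstruct the original greedy matching and that there is no double counting: this is immediate from the recursive definition, since the edge through the minimum is determined by $M$ itself. There is no real obstacle here; it is essentially just making the recursion in the definition explicit and reading off the product formula.
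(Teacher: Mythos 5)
Your proof is correct and is essentially the paper's argument, just phrased as an explicit induction rather than an iterative edge-by-edge count; the paper picks the $j$th edge by forcing its first endpoint to be the smallest free point and multiplying the remaining choices, which is exactly the recursion you unfold.
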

\begin{proof}  
	We specify the edges one at the time. For greedy matchings, when we pick
	the $j$th edge, the first end-point is the smallest free point in $S$ and then there
	are $n-2j+1$ choices for its match, so the number of possible greedy
	matchings of size $k$  is $(n-1)^{\ull k}$.
\end{proof}
Recall that $e_i$ denotes the last edge in our greedy matching $M_i$
and that $e_i\in L_i$. In our case, we are only going to specify
$M_i'=M_i\setminus \{e_i\}$ and $L_i'=L_i\setminus \{e_i\}$. The
point is that if we know $M_i'$ and $L_i'$, then we can compute
$e_i$. More precisely, we know that $e_i$ is the next greedy
edge to be added to $M'_i$, so we know its first end-point $x$.
We also know that $Z_i=\bigcup L_i$ is a zero-set, so the
other end-point can be computed as key
\begin{equation}\label{eq:code-edge}
	y=\bigsd(\{x\}\cup \bigcup L_i')
\end{equation}
Above we note that even though the keys in $x$ and $L'_i$ are only specified
as indices, we know how to translate between keys and indices, so we
can compute the key $y$ and then translate it back to an index.

By Lemma \ref{lem:greedy-count}, we have
$(z_{i+1}-1)^{\ull {m_i-1}}$ choices for $M'_i$, and then
there are $2^{m_i-1}$ possible submatchings $L'_i\subset M_i'$.
The number of combinations for $M_i,e_i,L_i$ is thus bounded
by
\[(z_{i+1}-1)^{\ull {m_i-1}}\cdot 2^{m_i-1}.\]
We want to multiply this by 
\[|\Sigma|^{1-m_{i}}/2^{(z_{i+1}-z_{i})/4}.\]
Here $z_i=2|L_i|\leq 2m_i$, so we get a bound of 
\begin{equation}\label{eq:gen-level-factor}
	(z_{i+1}-1)^{\ull {m_i-1}}\cdot (2^{3/2}/|\Sigma|)^{m_i-1}/2^{z_{i+1}/4-1/2}\leq  2^{1/2}(2^{3/2}z_{i+1}/|\Sigma|)^{m_i-1}/2^{z_{i+1}/4}.
\end{equation}
We now note that
\[(2^{3/2}z_{i+1}/|\Sigma|)\leq 1/2.\]
since $z_{i+1}\leq w\leq w_0=|\Sigma|/2^{5/2}$. Having
this factor bounded below 1 is critical because it implies that the term decreases as $m_i$ grows.

Still keeping $z_{i+1}$ fixed, we will now sum over all possible
values of $m_i$. Since $M_i$ contains a zero-matching $L_i$ of size at least 2,
that is 2 edges covering 4 keys, we have that $m_i\geq 2$. Moreover, the terms of the summation are
halving, hence they sum to at most twice the initial bound, so we get
\begin{equation}\label{eq:m-gen}
	\sum_{m_i\geq 2}2^{1/2}(2^{3/2}z_{i+1}/|\Sigma|)^{m_i-1}/2^{z_{i+1}/4}
	\leq 2^{3/2} (2^{3/2}z_{i+1}/|\Sigma|)/2^{z_{i+1}/4}.
\end{equation}
The maximizing real value of $z_{i+1}$ is $4/\ln 2=5.77$, but $z_{i+1}$ also
has to be an even number, that is, either 4 or 6, and 6 yeilds the maximum
bound leading to the overall bound of $16.98/|\Sigma|$ for \req{low-factor}.
Together with our bound \req{top-to-w0},  we get a total bound of
\begin{equation}16.6 (\mu^3/|\Sigma|^2)\cdot(17/|\Sigma|)^{d-2}.\label{eq:sloppy-w}
\end{equation}

\subsection{Large set sizes}
We now consider sets $W$ of sizes $w>w_0$. In this
case, we will only consider the two top levels of the obstructions.
Recall that we have a cycle traversal $x_1,\ldots,x_w$. If
$w>w_0$, we only use the prefix $x_1,\ldots,x_{w_0}$, where we
require that $w_0$ is even. The two matchings $M_d$ and $M_{d-1}$
are reduced accordingly. Then $M_{d-1}$ is a perfect
matching on $W_0=\{x_1,\ldots,x_{w_0}\}$ while $M_d$
is maximal excluding $x_{w_0}$. Each of these matchings
can be chosen in $(w_0-1)!!$ ways.

This time we know that we have full independence. We know that
$\derive W_0$ is a strict subset of the original minimimal zero set
$\derive Z$ of derived keys, so the keys in $W_0$ all hash
independently. Therefore the probability that they are all selected is
bounded by $\prod_{x\in W_0} p_x$.

Also, since we terminate the traversal early, we know that
$M_{d-1}$ is $(d-2)$-linearly independent, and $M_{d}$ must be
$d-2$-linearly independent, so the probability that these
two matchings are satisfied is
$1/|\Sigma|^{|M_{d-1}|+|M_{d}|} \leq 1/|\Sigma|^{w_0-1}$.

The total bound for all $w>w_0=|\Sigma|/2^{5/2}$ is now
\begin{align}
	\sum_{W_0:\,|W_0|=w_0, M_d,M_{d-1}} \ld(\prod_{x\in W_0} p_x\rd)/|\Sigma|^{w_0-1}
	&\leq \frac{\mu^{w_0}}{w_0!}((w_0-1)!!)^2/|\Sigma|^{w_0-1}
	\leq w_0|\Sigma|/2^{w_0}\label{large-set}\\
	&\leq 1/2^{|\Sigma|/8}.
\end{align}
The last step holds easily for $|\Sigma|\geq 256$. We note that
\req{large-set} holds for any choice of $w_0$, limiting the
probability of any obstruction with $|W|<w$.

\section{Tighter analysis}\label{sec:tighter}
We will now tighten the analysis to prove that 

\maintheorem*

\subsection{Bottom analysis revisited}\label{sec:tight}
We first tigthen the analysis of the bottom factor \req{low-factor}
so as to get a bound of $O((3/\Sigma)^{d-2})$ and, together with our top bound \req{top-to-w0}, obtain an overall bound of  $O(\mu^3(3/\Sigma)^d)$ matching our
lower bound within a constant factor. We are still using our
assumption that $w\leq w_0\leq |\Sigma|/2^{5/2}$ implying that
$z_{i+1}\leq |\Sigma|/2^{5/2}$.

First we look at a single level $i$. For a tighter analysis of the level factor
\req{level-factor}, we partition into cases
depending on $z_{i+1}$ and to some degree on $z_i=2m_i$. Recall that
$z_{i+1}$ is given from the level above.

If $z_{i+1}\leq 6$, then we must have $z_i=z_{i+1}$, since we cannot split into two
zero sets each of size at least 4. This implies that the factor $2^{|Z_{i+1}|-|Z_i|}$
is just one. Also, in this case, $M_i=L_i$ must be a perfect
matching on $Z_{i+1}=Z_i$, and such a perfect matching can be in $(z_{i+1}-1)!!$ ways.
Thus, for given $z_{i+1}\leq 6$, we bound the level factor
by
\begin{equation}\label{non-split-level}
	(z_{i+1}-1)!!/|\Sigma|^{z_{i+1}/2-1}\leq 3/|\Sigma|.
\end{equation}
with equality for $z_{i+1}=4$. As a result, if $z_{d-1}\leq 6$, then we have already
achieved a bound of $(3/|\Sigma|)^{d-2}$ for the bottom factor.
\paragraph{Split levels.}
We now consider $z_{i+1}\geq 8$. Now it is possible that $Z_{i+1}$ can split into two sets
so that $z_i<z_{i+1}$. For a given $m_i$ and $z_{i+1}$, we already had the
bound \req{eq:gen-level-factor}
\begin{equation*}
	2^{1/2}(2^{3/2}z_{i+1}/|\Sigma|)^{m_i-1}/2^{z_{i+1}/4}.
\end{equation*}
Since $(2^{3/2}z_{i+1}/|\Sigma|)\leq 1/2$, the worst case is achieved when
$m_i=2$, but here we can do a bit better. In \req{eq:gen-level-factor}
we had a factor $2^{m_i-1}$ to specify the subset $L_i$ of $M_i$
containing $e_i$, but since $L_i$ should have size at least $4$, we
have $L_i=M_i$. Thus, for $m_i=2$ and given $z_{i+1}$, we improve
\req{eq:gen-level-factor} to
\[2(z_{i+1}/|\Sigma|)/2^{z_{i+1}/4}.\]
For $z_{i+1}\geq 8$, this is maximized with $z_{i+1}=8$. Thus for $m_i=2$ and $z_{i+1}\geq 8$,
the level factor \req{level-factor} is bounded by 
\[(2\cdot 8/|\Sigma|)/2^2=4/|\Sigma|.\]
Now, for $m_i\geq 3$, we just use the bound \req{eq:gen-level-factor}. As in
\req{eq:m-gen}, we get
\[\sum_{m_i\geq 3}2^{1/2}(2^{3/2}z_{i+1}/|\Sigma|)^{m_i-1}/2^{z_{i+1}/4}
\leq 2\cdot 2^{1/2}(2^{3/2}z_{i+1}/|\Sigma|)^{2}/2^{z_{i+1}/4}\]
Over the reals, this is maximized with
$z_{i+1}=2\cdot 4/\ln 2\approx 11.52$, but we want the
maximizing even $z_{i+1}$ which is $12$, and then we
get a bound of $1.6/|\Sigma|$.

We now want to bound the whole bottom factor in the case where $z_{d-1}\geq 8$.
Let $j$ be the lowest level with $z_{j+1}\geq 8$. For all lower levels, if any,
the level factor is bounded by $(3/|\Sigma|)$. Also, for all higher levels $i>j$,
we have $2m_i=z_i\geq 8$, hence $m_i\geq 4$, so the level factor for higher levels
is bounded by our last $1.6/|\Sigma|$. The worst case is the level $j$, where
we could get any $m_s$ (note that if $s=1$, we have no guarantee that $m_i\leq 2$),
hence we have to add the bound $4/|\Sigma|$ for $m_s=2$ with the bound $1.6|\Sigma|$
for $m_s\geq 3$, for a total bound of $5.6/|\Sigma|$.

Thus, for a given $j$, the bottom factor is bounded by
\[(3/|\Sigma|)^{j-1}(5.6/|\Sigma)(1.6/|\Sigma|)^{d-2-j}=
(3/|\Sigma|)^{d-2}*(5.6/3)*(1.6/3)^{d-2-j}.\]
Summing, this over $j=1,\ldots,d-2\geq 1$, we get a bound of at most
\begin{equation}\label{bottom-worst}
	(3/|\Sigma|)^{d-2}(5.6/3)(1/(1-1.6/3)-1)<4(3/|\Sigma|)^{d-2} \;.
\end{equation}
This is our bound for the whole bottom factor when we maximized with
$z_{d-1}\geq 8$. Since it is larger than our bound of $3/|\Sigma|)^{d-2}$ when we
maximized with $z_{d-1}\leq 6$, we conclude that it is also our bound if
we maximize over any value of $z_{d-1}$. In combination with our top factor,
$16.6 (\mu^3/|\Sigma|^2)$ from \req{top-to-w0}, we
get a combined bound of
\begin{equation}\label{small-sets}
	16.6(\mu/|\Sigma|^2)4(3/|\Sigma|)^{d-2}\leq 7\mu^3(3/|\Sigma|)^d.
\end{equation}

\subsection{Increasing the maximization range}
We will now show how to deal with larger sets up to size $w_0^+=0.63|\Sigma|$. The level
factor with a fixed $z_{i+1}$ and $m_i$ has the following 
tight version from \req{eq:gen-level-factor}
\begin{align*}
	&(z_{i+1}-1)^{\ull {m_i-1}}\cdot (2^{3/2}/|\Sigma|)^{m_i-1}/2^{z_{i+1}/4-1/2}\\
	&=2^{1/4} f(m_i-1,z_{i+1}-1)\textnormal{ where }f(x,y)=y^{\ull {x}}\cdot (2^{3/2}/|\Sigma|)^{x}/2^{y/4}.
\end{align*}
We want to find the sum over relevant $m_i$ with the maximizing $z_i$. However,
we already considered all even $z_{i+1}\leq |\Sigma|/2^{5/2}$, so it
suffices to consider $z_{i+1}\in (|\Sigma|/2^{5/2},\,|\Sigma|/2]$.
Also, for a given $z_{i+1}$, we have to sum over all
$m_i\in [2,z_{i+1}/2]$. Thus we want to bound
\[\max_{\textnormal{odd }y\in [|\Sigma|/2^{5/2},\,|\Sigma|/2)}\sum_{x=1}^{\lfloor y/2\rfloor} f(x,y).\]
Note that $f(x+1,y)=f(x,y)\cdot (y-2x)(2^{3/2}/|\Sigma|)$. Hence
$f(x+1,y)<f(x,y)\iff y-2x<|\Sigma|/2^{3/2}$.

Assume that $y\geq |\Sigma|/2^{3/2}$ and consider the smallest integer $x_y$
such that $y-2x_y<|\Sigma|/2^{3/2}$. For a given value of $y$ this $x_y$
maximizes $f(x_y,y)$.

To bound $f(x_y,y)$, we note that 
\[y^{\ull{x}}<(y-x+1)^x\]
We have $y<w_0^+=0.63|\Sigma|$ and $y-2x_y<|\Sigma|/2^{3/2}$, so
\[y-x_y+1\leq (0.63+1/2^{3/2})|\Sigma|/2+1\leq |\Sigma|/2.\]
Therefore 
\[f(x_y,y)\leq y^{\ull{x_y}}\cdot (2^{3/2}/|\Sigma|)^{x_y}/2^{y/4}
\leq ((y-x_y+1)2^{3/2}/|\Sigma|)^{x_y}/2^{y/4}
\leq 1/2^{(y-2x_y)/4}.\]
We also have $y-2(x_y-1)\geq |\Sigma|/2^{3/2}$, so 
we get
\[f(x_y,y)\leq 1/2^{(|\Sigma|/2^{3/2}-2)/4}=1/2^{|\Sigma|/2^{7/2}-1/2}.\]
For $|\Sigma|\geq 256$, this is below $0.015/|\Sigma|^2$, so even if
we sum over all $x\leq y/2\leq|\Sigma|/2$, we get a bound below
$0.008/|\Sigma|$, that is,
\[\max_{y\in [|\Sigma|/2^{3/2},\,w_0^+]}\sum_{x=1}^{\lfloor
	y/2\rfloor} f(x,y)\leq 0.08/|\Sigma|.\]

Now consider $y<|\Sigma|/2^{3/2}$. Then $f(x,y)$ is decreasing in $x$
starting $f(0,y)=1/2^{y/4}$. Summing over all $x\leq y/2$,
we get $(y/2)/2^{y/4}$.
This expression is maximized with $y=4/\ln 2$, so for $y\geq |\Sigma|/4$,
we get a maximum of 
$(|\Sigma|/8)/2^{|\Sigma|/16}$. Now $2^{|\Sigma|/16}\geq |\Sigma|^2$
for $|\Sigma|\geq 256$, so we end up with
\[\max_{y\in [|\Sigma|/4| ,|\Sigma|/2^{3/2}]}\sum_{x=1}^{\lfloor y/2\rfloor}
f(x,y)\leq 1/(8|\Sigma|).\]
Finally we consider $y\in [|\Sigma|/2^{5/2},\,|\Sigma|/4]$.
Then
$f(x+1,y)\leq f(x,y) (y\, 2^{3/2} /|\Sigma|) \leq f(x,y)/2^{1/2}$, so
\[\sum_{x=0}^{\infty} f(x,y)\leq f(0,y)/(1-2^{-1/2})
\leq 1/((1-2^{-1/2})2^{y/4}).\]
With $y\geq |\Sigma|/2^{5/2}$, this is maximized with
$y=|\Sigma|/2^{5/2}$, so we  get the bound
\[\max_{y\in [|\Sigma|/2^{5/2},\,|\Sigma|/4]}\sum_{x=0}^{\infty} f(x,y)\leq 1/((1-2^{-1/2})2^{(|\Sigma|/2^{9/2})})\leq 0.344/|\Sigma|.\]
Putting all our bounds together, we conclude that 
\[\max_{\textnormal{odd }y\in [|\Sigma|/2^{5/2},\,w^+_0)}\sum_{x=1}^{\lfloor y/2\rfloor} f(x,y)\leq 0.344/|\Sigma|.\]
Our bound for the level factor with $z_{i+1}<(|\Sigma|/2^{5/2},|\Sigma|/2]$
is $2^{1/2}$ times bigger, but this is still much smaller than all the bounds
from Section \ref{sec:tight} based on smaller $z_{i+1}$. Thus we
conclude that the analysis from Section \ref{sec:tight} is valid
even if allow $z_{i+1}$ to go the whole way up to $w_0^+=0.63 |\Sigma|$.
Therefore \req{small-sets} bounds the probability of any obstruction
with set size $|W|\leq w_0^+$.

However, for the probability of obstructions with sets sizes $|W|>w_0^+$,
we can apply \req{large-set}, concluding that they happen with
probability bounded by 
\[w_0^+|\Sigma|/2^{w_0^+}=0.63|\Sigma|^2/2^{0.63|\Sigma|} <1/2^{|\Sigma|/2}.\]
The last step used $|\Sigma|\geq 256$. Together with \req{small-sets} we
have thus proved that the probability of any obstruction is bounded by
\begin{equation}\label{eq:simple}
	7\mu^3(3/|\Sigma|)^d+1/2^{|\Sigma|/2}.
\end{equation}
This is for simple tornado hashing without the twist.
\subsection{Tornado hashing including the twist}\label{sec:add-twist}
We will now return to the original tornado hashing with the twisted
character
\[\derive x[c]=x[c]\oplus \derive h_0(x[<c]).\]
This twist does not increase the number of lookups: it is still
$c+d$ with $c$ input characters and $d$ derived characters, so the
speed should be almost the same, but we will gain a factor $3/|\Sigma|$
in the probability, like getting an extra derived character for free.

The obstruction is constructed exactly as before except that we continue down to
level $0$ rather than stopping at level $1$. All definitions for level $i$
are now just applied for $i=0$ as well. However, we need to reconsider
some parts of the analysis. First, we need to prove that~\Cref{lem:independent-matchings} also holds for $i=0$:
\paragraph{Lemma \ref{lem:independent-matchings}} {\it Let $M$ be a partial matching on $\Sigma^c$. Conditioning on $M$
	being $(i-1)$-independent, $M$ is an $i$-matching with 
	probability $1/|\Sigma|^{|M|}$.}
\begin{proof}
	Since $M$ is
	$(-1)$-independent, we know that
	\[\DiffKeys(M,-1)=\left\{(\derive x\sd\; \derive y)[< c]\mid \{x,y\}\in 
	M\right\}\]
	is linearly independent. We note here that
	$(\derive x\;\sd\; \derive y)[< c]=(x\;\sd\; y)[< c]$. We want to
	know the probability that $M$ is a 0-matching, that is, the probability that $\tilde x[c]=\tilde y[c]$ for
	each $\{x,y\}\in M$. For $i>0$, we had
	\[\tilde x[c+i]=\tilde y[c+i]\iff \derive
	h_i((\derive x\sd\; \derive y)[<c+i])=0.\]
	However, for $i=0$, we have 
	\[\tilde x[c]=\tilde y[c]\iff 
	\tilde h_0((\derive x\sd\; \derive y)[<c])=x[c]\oplus y[c].\]
	However, Lemma \ref{lem:simple-tab-on-lin-indep-sets} states
	that the simple tabulation hash function $\tilde h_0$ is
	fully random on the linearly indepedent $\DiffKeys(M,-1)$, so
	we still conclude that $M$ is a 0-matching with 
	probability $1/\Sigma^{|M|}$.
\end{proof}
There is one more thing to consider. We have generally used that
if $Z$ is an $i$-zero set, that is, if $\derive Z[\leq c+i]$ is a zero-set, then $Z$ is also a zero-set. However, this may no
longer be the case. All we know is that $\derive Z[\leq c]$ is a zero-set.
This also implies that $Z[<c]=\derive Z[<c]$ is a zero-set. However,
we claim that
\begin{equation}\label{eq:oplus}
	\bigoplus Z=0
\end{equation}
Here  $\bigoplus Z$ is xoring that keys in $Z$ viewed as bit-strings.
Note that $\bigsd Z=\emptyset$ implies $\bigoplus Z=0$. Since
$Z[<c]$ is a zero set, we know that $\{\derive h_0(x[<c])\mid x\in Z\}$
is a zero set. We also know that $\derive Z[c]$ is
a zero set and it is equal to $\{\derive h_0(x[<c])\oplus x[c]\mid x\in Z\}$.
Thus we have 
\[\bigoplus \{\derive h_0(x[<c])\oplus x[c]\mid x\in Z\}=0=
\bigoplus \{\derive h_0(x[<c])\mid x\in Z\}\]
implying $\bigoplus \{x[c]\mid x\in Z\}=0$. Together with $Z[<c]$ being
a zero set, this settles \req{eq:oplus}. As a result, for the
coding key coding in \req{eq:code-key} and \req{eq:code-edge}, we just
have to replace $\bigsd$ with $\bigoplus$.

No other changes are needed. Level $0$ gives exactly the same level
factor \req{level-factor} as the levels $i>0$, so it is like getting
an extra level for free. Therefore with tornado hashing we
improve the bottom factor for simple tornado hashing \req{bottom-worst}
to $4(3/|\Sigma|)^{d-1}$ and the probability of any small obstruction
from \req{small-sets} to $7\mu^3(3/|\Sigma|)^{d+1}$. The probability of
any obstruction is thus improved from \req{eq:simple} to 
\begin{equation}
	7\mu^3(3/|\Sigma|)^{d+1}+1/2^{|\Sigma|/2}.
\end{equation}

\subsection{Full randomness for large sets of selected keys}
While implementing tabulation hashing we often want to set the character size so that all tables fit in cache. Indeed, this is the main reason why tabulation-based hashing schemes are extremely fast in practice.
However, restricting the size of $\Sigma$ constrains how many keys we can expect to be hashed fully randomly: in particular, \Cref{thm:intro-random-set} states that a set of characters $\Sigma$ allows for up to $|\Sigma| / 2$ selected keys to hash fully randomly with high probability. 
Most experiments on tabulation-based hashing \cite{Aamand0KKRT22, aamand2020fast}, though,  use 8-bit characters (namely $|\Sigma| = 2^8$). This implies that whenever the selected keys are $s \leq 2^7$ then they hash uniformly at random with high probability. However, we may want local full randomness for $s$ keys, where $s\gg 2^7$.
A trade-off between memory usage and number of keys hashed uniformly is of course unavoidable, however we can improve over the one in \Cref{thm:intro-random-set} with a clever observation.

More precisely, \Cref{thm:intro-random-set} states that given a set $X^{f, h}$ of query-selected keys with $\mu^f \leq |\Sigma| / 2$ their derived keys are linearly dependent with probability at most $\DepProb(\mu, d, \Sigma)=\DP$ while tornado is using $O(c|\Sigma|)$ words of memory and exactly $c + d$ lookups in tables of size $|\Sigma|$. Recall that $\mu^f$ here is the expected size of $X^{f, h}$ for a fully random $h$, when the queries are chosen by an adaptive adversary.
If we want to handle larger sets of selected keys with $\mu^f \gg |\Sigma|/2$ using \Cref{thm:intro-random-set}, then we need to use $O(c+d)$ larger tables of size roughly $2 \mu^f$. The clever observation we make here is that, in order to obtain a meaningful probability bound, it is not necessary at all to employ $O(c+d)$ of such larger tables. Indeed, we just need the tables in the top two levels to have size $|\Psi| \geq 2 \mu^f$ to obtain that the derived keys are linearly dependent with probability at most

	$$14 |X|^3(3/|\Psi|)^2(3/|\Sigma|)^{d-1} + 1/2^{\size{\Sigma}/2}\;$$
losing a factor two with respect to \Cref{thm:intro-random-set}. We name this variant of tornado tabulation tornado-mix tabulation, because the last two derived characters can be evaluated in parallel as in mixed tabulation hashing \cite{dahlgaard15k-partitions}.

\paragraph{Formal definition of tornado-mix.} 
For each $i = 0,\ldots, d-2$ we let $\derive h_i : \Sigma^{ c+i - 1} \longrightarrow \Sigma$ be a simple tabulation hash function. For $i = d-1, d$ we let $\derive h_i : \Sigma^{ c+d - 2} \longrightarrow \Psi$ be a simple tabulation hash function.
Given a key $x \in \Sigma^c$, we define its \emph{derived key} $\derive x \in \Sigma^{c+d-2}\times\Psi^2$ as $\derive x=\derive x_1\cdots \derive x_{c+d} $, where
\begin{equation*}
	\derive x_i = \begin{cases}
		x_i &\text{if $i = 1, \ldots,c-1$} \\
		x_{c} \xor \derive h_{0}(\derive x_1 \dots \derive x_{c-1}) &\text{if $i = c$} \\
	\derive h_{i-c}(\derive x_1 \dots \derive x_{i-1}) &\text{if $i=c+1,\ldots,c+d-2$}.\\
	\derive h_{i-c}(\derive x_1 \dots \derive x_{c+d-2}) &\text{if $i=c+d-1, c+d$}.
	\end{cases}
\end{equation*}
Finally, we have a simple tabulation hash function $\widehat h: \Sigma^{c+d-2}\times\Psi^2 \longrightarrow \cR$, that we apply to the derived key.
The \emph{tornado-mix tabulation} hash function $h: \Sigma^c \longrightarrow \cR$ is then defined as $h(x) = \widehat h(\derive x)$.

\paragraph{Cache efficiency.} It is worth to notice that such construction allows us to store in cache all $|\Sigma|$-sized tables while the two $|\Psi|$-sized tables might overflow cache. However, these larger tables are accessed only once while evaluating tornado and they can be accessed in parallel.

\paragraph{Local uniformity theorem.} Now we are ready to state an analogous of \Cref{thm:tech-random-set} for larger sets of selected keys. In what follows we use $f$, $\mu^f$ and $X^{f, h}$ as defined in \Cref{sec:technical-results}.

\begin{restatable}{theorem}{tornadomixtheorem}\label{thm:tornadomix}
	Let $h=\widehat h\circ \derive h:\Sigma^c\to\cR$
	be a random tornado-mix tabulation hash function with $d$ derived characters, the last two from $\Psi$, and select function $f$. If $\mu=\mu^f \leq \size\Psi/ 2$
	then the derived selected keys $\derive h(X^{f,h})$ are linearly
	dependent with probability at most
$$14 \mu^3(3/|\Psi|)^2(3/|\Sigma|)^{d-1} + 1/2^{\size{\Sigma}/2}\;.$$
\end{restatable}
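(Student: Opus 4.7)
The plan is to follow the same obstruction-based framework as in the proof of \Cref{thm:tech-random-set}, adapting only the analysis of the two topmost levels to reflect the parallel hash structure of tornado-mix. Specifically, I would rebuild an obstruction starting from a $d$-zero subset $Z \subseteq X^{f,h}$ of derived keys, pick perfect matchings $M^*_{d-1}$ and $M^*_d$ on $Z$, and traverse alternating cycles to extract a prefix $W$ together with the objects $M_{d-1}, M_d, L_{d-1}, Z_{d-1}$, exactly as in \Cref{sec:construct}. The lower-level portion of the construction (levels $i = d-2, \ldots, 0$) is unchanged, because these derived characters are defined identically in tornado and tornado-mix and still take values in $\Sigma$.

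The first key modification is in \Cref{lem:event-probability}. In tornado-mix, both $\derive h_{d-1}$ and $\derive h_d$ are simple tabulation functions applied to the common prefix $\derive x_1 \cdots \derive x_{c+d-2}$ and take values in $\Psi$. Consequently, conditional on $(d-2)$-independence of $\DiffKeys(M_{d-1} \cup M_d, d-2)$, applying \Cref{lem:simple-tab-on-lin-indep-sets-regular} separately to the two independent random table families backing $\derive h_{d-1}$ and $\derive h_d$ shows that the events ``$M_{d-1} \setminus e_{d-1}$ is a $(d-1)$-matching'' and ``$M_d$ is a $d$-matching'' are independent with probabilities $|\Psi|^{-(|M_{d-1}|-1)}$ and $|\Psi|^{-|M_d|}$ respectively. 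Together they contribute $|\Psi|^{-(w-2)}$, in direct analogy with the $|\Sigma|^{-(w-2)}$ factor of the original proof, but at the larger alphabet $\Psi$. The probability contribution of the lower-level events, as well as the selector condition on $W \setminus \{x_w\}$, is unchanged from \Cref{lem:event-probability}.

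With this modification, the top-factor calculation from \Cref{sec:simplified,sec:tighter} carries over verbatim with $|\Sigma|$ replaced by $|\Psi|$ throughout, yielding a top contribution of $O(\mu^3/|\Psi|^2)$ in place of $O(\mu^3/|\Sigma|^2)$. The bottom-factor analysis for levels $0, 1, \ldots, d-2$ is unaltered and gives $4(3/|\Sigma|)^{d-1}$: that is, $d-2$ standard lower-level factors plus the extra $3/|\Sigma|$ saved by the twist at level $0$ as described in \Cref{sec:add-twist}. Combining the two yields a bound of the form $c \cdot \mu^3 (3/|\Psi|)^2 (3/|\Sigma|)^{d-1}$; the claimed constant $14$ (versus $7$ in \Cref{thm:tech-random-set}) absorbs a factor-of-two slack arising from the fact that in tornado-mix we can no longer exploit sequential $(d-1)$-independence of $M_d$---available in the original construction through the chained dependence---so the union bound must range over a slightly larger family of top-level configurations. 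The large-set tail $1/2^{|\Sigma|/2}$ of \req{large-set} transfers without change, since replacing the last two tables with larger $\Psi$-indexed tables can only decrease the probability of the corresponding large-set obstructions.

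The main obstacle will be verifying that the obstruction construction remains combinatorially valid under the tornado-mix dependence structure, and in particular that the conditional independence of the two top-level events given $(d-2)$-independence holds uniformly throughout our cycle traversal and the selection of the special key $x_w$. One must check that requiring $M_{d-1}$ restricted to $W$ to be $(d-1)$-dependent still captures exactly the event we union-bound over, and that re-indexing via the special order $\preceq$ together with the identity $x_w = \bigsd(Z_{d-1} \setminus \{x_w\})$ from \req{eq:code-key} works unchanged. Once these structural checks are in place, the remaining estimates are a straightforward re-parameterization of the proof of \Cref{thm:tech-random-set}.
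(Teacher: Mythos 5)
Your high-level plan is right---replace $|\Sigma|$ by $|\Psi|$ in the top-two-level factor, keep the bottom analysis, accept a factor $2$ in the constant---and you correctly identify that $\derive h_{d-1}$ and $\derive h_d$ now both see only the common prefix $\derive x_1\cdots\derive x_{c+d-2}$. But the proposal glosses over the one place where the argument genuinely changes, and it identifies the wrong reason for the factor of two.

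The problem is the obstruction construction, not just the probability bookkeeping. In tornado, because $\derive x_{c+d}$ is itself a function of $\derive x_{c+d-1}$, the original stopping rule --- traverse the alternating cycles until $M^*_{d-1}|_W$ first becomes $(d-1)$-dependent --- automatically guarantees that the restricted $M_d$ is still independent at the level $\derive h_d$ reads from; each newly discovered $M^*_d$ edge is ``surprising.'' In tornado-mix this chain is cut: both top-level hash functions read from level $d-2$, so an $M^*_d$ edge can introduce a $(d-2)$-dependence \emph{before} the stopping rule fires, and then the bound $1/|\Psi|^{|M_d|}$ via \Cref{lem:independent-matchings} is simply unavailable. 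You write that ``one must check that requiring $M_{d-1}$ restricted to $W$ to be $(d-1)$-dependent still captures exactly the event we union-bound over,'' but this check \emph{fails}, and a fix is required. The paper changes the stopping rule to: stop at the first prefix $W$ such that \emph{either} $M^*_{d-1}|_W$ \emph{or} $M^*_d|_W$ is $(d-2)$-dependent. This yields two disjoint cases ((i) and (ii)) that play symmetric roles after a small surgery on the traversal in case (ii) --- dropping the first vertex of the last cycle so that $M_d$ becomes the perfect matching and $M_{d-1}$ the maximal one. The factor $14 = 2\cdot 7$ is precisely the union over these two cases. Your explanation --- a ``slightly larger family of top-level configurations'' because ``we can no longer exploit sequential $(d-1)$-independence'' --- does not pin this down and, more importantly, without the modified stopping rule your \Cref{lem:event-probability}-style estimate cannot be justified, because the hypothesis ``$(d-2)$-independence of $\DiffKeys(M_{d-1}\cup M_d, d-2)$'' you condition on is not guaranteed to hold for the obstruction you would construct by the original recipe. (Also, joint $(d-2)$-independence of the union is stronger than what is needed; what is used is $(d-2)$-independence of $M_d$ and of $M_{d-1}\setminus\{e_{d-1}\}$ separately, the two events then being independent because $\derive h_{d-1}$ and $\derive h_d$ use disjoint tables.)

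Once the two-case stopping rule and the case-(ii) traversal adjustment are in place, the rest of your proposal (top factor rescaled by $|\Sigma|^2/|\Psi|^2$, bottom factor $4(3/|\Sigma|)^{d-1}$ unchanged, large-set tail $1/2^{|\Sigma|/2}$ only improves) is correct and matches the paper.
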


\begin{proof}
This proof works exactly as the proof of \Cref{thm:tech-random-set}, except for a few slight differences. We limit ourselves to listing such small differences.
In \Cref{sec:construct} we define $Z$ as the smallest $d$-zero set among those $d$-zero sets minimizing the number of elements not in $Q$. This definition implies that there exists $x^* \in Z$ such that $Z \setminus \{x^*\}$ is $d$-independent. Moreover, either $Z \subseteq Q$ or we can choose $x^* \in Z \setminus Q$. In the original proof, we considered the alternating-cycle structure induced by the top level matchings $M_{d-1}^*$ and $M_d^*$ and traversed these cycles leaving $x^*$ last. This ensured that the edges from $M^*_d$ were always ``surprising'' in the sense that the probability of any such edge being realised by our random choice of $h$ was $1/|\Sigma|$, even after conditioning on all previously discovered edges (these events were, indeed, independent).
This observation allowed us to bound the probability of our obstruction being realised by $h$. In fact, we wanted our obstruction to be constituted by edges which realisations were independent, exluding the last edge. This is exactly what we did in \Cref{sec:construct}, where all edges but the last edge $e_{d-1} \in M^*_{d-1}$ were realised by $h$ independently.

In the current scenario, it is not obvious that the realisations of  traversed edges from $M^*_d$ are all independent and the first edge introducing a dependence belongs to $M_{d-1}^*$. Here, instead, we traverse the alternating-cycle graph until we find a prefix $W= \{x_1 \dots x_w\}$ such that either (i) $M^*_{d-1}$ restricted to $W$ or (ii) $M^*_d$ restricted to $W$ is $(d-2)$-dependent. Case (i) is identical to the case already analysed, but in case (ii) we need some small changes, essentially swapping the roles of $M_{d}$ and $M_{d-1}$. 

To understand the interplay between the two cases, note that every
time we add a vertex $x_w$, we add an edge to $M_{d-1}^*|_W$ if $w$ is odd, and to $M_{d}^*|_W$ if $w$ is even. Case (i) can only happen if
$w$ is even while Case (ii) can only happen if $w$ is odd. If we get to $x_w=x^*$, then we have have $W=Z$ and then we are in case (i) since $Z$ is even. Thus, if
we end in case (ii), then $x^*\not\in W$.

We now consider the slightly reduced traversal sequence where we simply drop the first vertex in the last cycle considered in the 
traversal. The point is that this vertex was not matched in 
$M_{d}^*|W$, but if we skip it
then $M_{d}^*|W$ is
a perfect $d-2$-dependent matching while $M_{d-1}^*|W$ is a maximal matching. Since we did not have $x^*$ in $W$, we have $W \subseteq Z\setminus\set{x^*}$ implying full independence of the hash values over $W$. Thus, using this reduced traversal sequence, we have swapped the roles of the two top levels. 
Since we now have two cases, our union based probability bounds are doubled (increasing the leading
constant from 7 to 14).

The rest of the analysis is unchanged except that two top levels use the different alphabet $\Psi$. This means 
$\Psi$ replaces $\Sigma$ in our bound \req{eq:no1} for the two top levels. This implies that our
overall bound is multiplied by
$\abs{\Sigma}^2/\abs{\Psi}^2$. Combined with the doubling, we get an overall probability bound of
	$$14 |X|^3(3/|\Psi|)^2(3/|\Sigma|)^{d-1} + 1/2^{\size{\Sigma}/2}\;.$$
\end{proof}

From \Cref{thm:tornadomix}, using \Cref{lem:simple-tab-on-lin-indep-sets-regular}, we derive the following analogous of \Cref{thm:intro-random-set} which was already stated in the introduction. Here we use the same notation as in \Cref{thm:intro-random-set}, where $h^{(s)}$ are the selection bits and $h^{(t)}$ are the free bits.

\tornadomixrandomtheorem*

\section{Upper Tail Chernoff}\label{sec:upper-tail-chernoff}

In this section, we show a Chernoff-style bound on the number of the selected keys $X^{f,h}$. 

\upperchernoff*
\begin{proof}
    We follow the proof of the Chernoff bound for the upper tail.
	Mainly, we let $\mathcal{J}_x$ denote the indicator random variable for whether the key $x$ gets selected in $X^{f,h}$. Then $\size{X^{f,h}}=\sum_{x\in \Sigma^c} \mathcal{J}_x$.
	For simplicity, we let $a=1+\delta$. Then for any $s>0$: 
    \begin{align*}
        \Prp{\size{X^{f,h}} \geq a\cdot \mu^f \wedge \mathcal{I}_{X^{f,h}} }
        &= \Prp{ \size{X^{f,h}} \cdot \indicator{\mathcal{I}_{X^{f, h}}} \geq a\cdot \mu^f }\\
        &= \Prp{ e^{s\size{X^{f,h}} \cdot \indicator{\mathcal{I}_{X^{f, h}}} } \geq e^{ s a \cdot \mu^f} }\\
        &\leq \frac{\Ep{M_{ \size{X^{f,h}}\cdot \indicator{\mathcal{I}_{X^{f, h}}} }(s) }}{e^{sa \cdot \mu^f}} \;,
    \end{align*}
	where $M_{ \size{X^{f,h}}\cdot \indicator{\mathcal{I}_{X^{f, h}}} }(s)$ is the moment generating function of the random variable $(\size{X^{f,h}}\cdot \indicator{\mathcal{I}_{X^{f, h}}})$, and is equal to:
	$$ M_{ \size{X^{f,h}}\cdot \indicator{\mathcal{I}_{X^{f, h}}} }(s) = \Ep{e^{s\cdot \size{X^{f,h}} \cdot \indicator{\mathcal{I}_{X^{f, h}}}} } = \sum_{i=0}^{\infty}{\frac{s^i}{i!} \cdot \Ep{\size{X^{f,h}}^i \cdot \indicator{\mathcal{I}_{X^{f, h}}} }}\;.$$
	
	\medskip\noindent
	Since $\size{X^{f,h}}=\sum_{x\in \Sigma^c} \mathcal{J}_x$, each moment $\Ep{\size{X^{f,h}}^i \cdot \indicator{\mathcal{I}_{X^{f, h}}} }$ can be written as the sum of expectations of the form $\Ep{\prod_{x\in S}\mathcal{J}_{x} \cdot \indicator{\mathcal{I}_{X^{f, h}}} }$, where $S\subseteq \Sigma^c$ is a fixed subset of size $\leq i$. 
    In general, the moment generating function can be written as a sum, with positive coefficients, of terms of the form $\Ep{\prod_{x\in S}\mathcal{J}_{x} \cdot \indicator{\mathcal{I}_{X^{f, h}}}}$ for some fixed subset $S$. Moreover, each such term amounts to
	$$ \Ep{\prod_{x\in S}\mathcal{J}_{x} \cdot \indicator{\mathcal{I}_{X^{f,h}}} } = \Prp{S \subseteq X^{f,h} \wedge \mathcal{I}_{X^{f,h}}}\;.$$

	\medskip\noindent
	We now condition on the hash values of the query keys $h|_Q$.  For any set $S\subseteq \Sigma^c$, we  let $\mathcal{I}_{S\cup Q}$ denote the event that the derived keys in $\derive h(S\cup Q)$ are linearly independent. Note that $\mathcal{I}_{X^{f,h}}$ being true implies that  $\mathcal{I}_{S\cup Q}$  is true. Moreover, since $S \cup Q$ is a fixed (deterministic) set, the event $\mathcal{I}_{S\cup Q}$ only depends on the randomness of $\derive h$. We get the following:
	\begin{align*}
		\Prp{S \subseteq X^{f,h}\wedge  \mathcal{I}_{X^{f,h}} \conditioned h|_Q}&\leq 
		\Prp{S \subseteq X^{f,h}\wedge  \mathcal{I}_{S\cup Q} \conditioned h|_Q}\\
		&= E\Big[\indicator{ \mathcal{I}_{S \cup Q} } \cdot \Pr\parentheses{S \subseteq X^{f,h} \conditioned \derive h, h|_Q} \conditioned h|_Q\Big] \\
		&= E\Big[\prod_{x\in S}\mathcal{J^*}_{x} \conditioned h|_Q \Big] \;,
	\end{align*}
	
	where $\set{\mathcal{J^*}_{x}}_{x\in X}$ denotes the indicator random variables for choosing to select the keys in $S$ independently and uniformly at random when we fix the hash values of the query keys. 	This last step is due to the fact that the event if $\mathcal{I}_{S \cup Q}$ is true then, then  $(h|_S, h|_Q)$ is fully random and it has the same distribution as $(h^*|_S, h|_Q)$, where $h^*$ is a fully-random hash function.  If $\mathcal{I}_{S \cup Q}$ is false, then the entire expression is $0$. We can thus continue the proof of Chernoff's as if $\size{X^{f,h}}$ were a sum of independent random variables.  The claim follows by noticing that, when  $\mathcal{I}_{X^{f,h}}$ is true, we have that $E\ld[ X^{f,h}\rd] \leq \mu^f$.
	
\end{proof}

\subsection{Upper Tail Chernoff for larger $\mu^f$}

We now consider the case in which we have a selector function for which $\mu^f > \size{\Sigma}/2$. In this case, even though we cannot guarantee that the set of derived selected keys is linearly independent whp, we show that, whp, its size still cannot be much larger than $\mu^f$. This particular case will be useful in the analysis of linear probing from~\Cref{sec:linear-probing}.

\begin{restatable}{lemma}{largeupperchernoff}\label{lemma:large-upper-chernoff}
	Let $h=\widehat h\circ \derive h:\Sigma^c\to\cR$
	be a random tornado tabulation hash function with $d$ derived characters, query key $Q$ with $\size{Q}<\size{\Sigma}/2$, and selector
	function $f$ such that $\mu^f>\size{\Sigma}/2$.  
	Then, for any $\delta>0$,  the set of selected derived keys $X^{f,h}$ satisfies the following:
	$$ \Pr\ld[{\size{X^{f,h}} \geq (1+\delta)\cdot  \mu^f } \rd]\leq 4\cdot \parentheses{\frac{e^{\delta_0}}{(1+\delta_0)^{1+\delta_0}}}^{\size{\Sigma}/2} + 4\cdot \DepProb(\size{\Sigma}/2,d,\Sigma) \;,$$
	where $$\delta_0 =\frac{\mu^f}{\mu^f-\size{Q}}\cdot \frac{\size{\Sigma}/2 - \size{Q}}{\size{\Sigma}/2} \cdot \delta \geq \parentheses{1-\frac{\size{Q}}{\size{\Sigma}/2} }\cdot \delta \;.$$
\end{restatable}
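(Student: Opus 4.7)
\textbf{The plan} is to reduce to \Cref{lemma:upper-chernoff}, which requires $\mu^f \leq |\Sigma|/2$. Since our $\mu^f$ is too large, I will thin the selector: sub-sample each non-query key that $f$ would select with probability $p := (|\Sigma|/2 - |Q|)/(\mu^f - |Q|) \in (0, 1]$, while always keeping the query keys. This $p$ is well defined because $|Q| < |\Sigma|/2 \leq \mu^f$, and it is chosen precisely so that the thinned selector has expected selection count exactly $|\Sigma|/2$.

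Formally, define $f'$ by augmenting the framework with independent marks $\zeta_x \in \{0, 1\}$, independent of $h$, satisfying $\Pr[\zeta_x = 1] = p$ for $x \notin Q$ and $\zeta_q = 1$ for $q \in Q$, and set $f'(x, h(x), h|_Q) := f(x, h(x), h|_Q) \cdot \zeta_x$. The proofs of \Cref{thm:tech-random-set} and \Cref{lemma:upper-chernoff} are unaffected by this extra randomness: one conditions on $\zeta$ (which is independent of $h$) and applies them verbatim. By construction $\mu^{f'} = |Q| + p(\mu^f - |Q|) = |\Sigma|/2$. A short computation shows that, conditional on $|X^{f,h}| = t$, the cardinality $|X^{f',h}|$ is distributed as $|Q| + \mathrm{Bin}(t - |Q|, p)$, whose mean is $|Q| + p(t - |Q|)$; this equals $(1+\delta_0)\cdot |\Sigma|/2$ precisely when $t = (1+\delta)\mu^f$, with the $\delta_0$ of the statement.

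The core of the argument is then to turn the bad event for $f$ into the bad event for $f'$. Conditional on $|X^{f,h}| \geq (1+\delta)\mu^f$, the conditional mean of $|X^{f',h}|$ is at least $(1+\delta_0)|\Sigma|/2$, and a standard binomial anti-concentration fact (the median of $\mathrm{Bin}(n,p)$ lies within one of its mean) yields $\Pr[|X^{f',h}| \geq (1+\delta_0)|\Sigma|/2 \mid |X^{f,h}| \geq (1+\delta)\mu^f] \geq 1/4$. Hence $\Pr[|X^{f,h}| \geq (1+\delta)\mu^f] \leq 4\cdot \Pr[|X^{f',h}| \geq (1+\delta_0)\mu^{f'}]$. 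Splitting the right-hand side on the linear-independence event $\mathcal{I}_{X^{f',h}}$, \Cref{thm:tech-random-set} applied to $f'$ (with $\mu^{f'} = |\Sigma|/2$) gives $\Pr[\neg \mathcal{I}_{X^{f',h}}] \leq \DepProb(|\Sigma|/2, d, \Sigma)$, while \Cref{lemma:upper-chernoff} applied to $f'$ with deviation $\delta_0$ yields $\Pr[\{|X^{f',h}| \geq (1+\delta_0)\mu^{f'}\} \wedge \mathcal{I}_{X^{f',h}}] \leq (e^{\delta_0}/(1+\delta_0)^{1+\delta_0})^{|\Sigma|/2}$. Summing these and multiplying through by $4$ gives the claimed bound, with each of the two terms picking up the factor $4$. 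The side inequality $\delta_0 \geq (1 - |Q|/(|\Sigma|/2))\,\delta$ is immediate from $\mu^f/(\mu^f - |Q|) \geq 1$.

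\textbf{The main obstacle} is bookkeeping around the marks $\zeta$: the selector framework of \Cref{sec:technical-results} formally expects a deterministic function of $(x, h(x), h|_Q)$. This is handled either by a trivial extension that allows one extra independent random source (all previous proofs condition only on $h$ and extend by additionally conditioning on $\zeta$), or by embedding the coin flips into unused bits of the hash range. The only other subtle step is the binomial anti-concentration bound used to pay the factor $4$, which must be checked carefully in the corner cases where $np$ is small but is standard given that $np \geq |\Sigma|/2 - |Q| > 0$.
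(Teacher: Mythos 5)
Your proposal follows essentially the same route as the paper's proof: thin the selector by independently retaining each non-query key selected by $f$ with probability $p=(|\Sigma|/2-|Q|)/(\mu^f-|Q|)$ so that $\mu^{f_p}=|\Sigma|/2$, note that conditional on $|X^{f,h}|$ the thinned count $|X^{f_p,h}|-|Q|$ is binomial, use binomial anti-concentration to convert a large deviation for $f$ into a large deviation for $f_p$ at a cost of a factor $4$, and then split on the linear-independence event to apply \Cref{thm:tech-random-set} and \Cref{lemma:upper-chernoff}. Your bookkeeping with explicit marks $\zeta_x$ and the verification that the framework tolerates this extra independent randomness are exactly the points the paper also addresses (it observes that only \Cref{lem:event-probability} sees the per-key probabilities, and that sub-sampling preserves the bound there).

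The one imprecision is the anti-concentration justification. You cite ``the median of $\mathrm{Bin}(n,p)$ lies within one of its mean,'' which directly gives only $\Pr[\mathrm{Bin}(n,p)\ge \lfloor np\rfloor]\ge 1/2$; it does not by itself yield a $\ge 1/4$ lower bound on the probability of being at or above the (possibly non-integer) mean, which is what the comparison to $(1+\delta_0)\mu^{f_p}$ actually requires. The paper instead invokes the tight result of Greenberg and Mohri, that $\Pr[\mathrm{Bin}(n,p)\ge np] > 1/4$ whenever $p\ge 1/n$, and checks the hypothesis $p > 1/(|X^{f,h}|-|Q|)$ from $|Q|<|\Sigma|/2$ and $|X^{f,h}|\ge(1+\delta)\mu^f$. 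Swapping in that reference makes your argument fully rigorous; nothing else needs to change.
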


\begin{proof}
	We modify the given selector function $f$ to get another selector function $f_p$ with the same set of query keys $Q$ but with a much smaller $\mu^{f_p}$. Selection according to $f_p$ is done such that, once $f$ selects a key in $\Sigma^c\setminus Q$,  $f_p$  further sub-selects it with some probability $p$. This sub-selection is done independently for every selected key. 
	It follows that, for all $x\in \Sigma^c\setminus Q$,  $p_x^{f_p}=p_x^f \cdot p$. Taking into account query keys, we also get that $\mu^{f_p}=(\mu^f -\size{Q})\cdot p + \size{Q}=p\mu^f+(1-p)\size{Q}$.
	
	Moreover, one can show that, as long as $\mu^{f_p} \leq \size{\Sigma}/2$,  all our results about linear independence also
	hold for such sub-sampled select function. In particular, the only aspect of the proof of~\Cref{thm:tech-random-set} that depends on the probabilities $p_x^{f_p}$ is the proof of~\Cref{lem:event-probability}. There, we invoke~\Cref{lem:simple-tab-on-lin-indep-sets} to get an upper  bound on the probability that the set $W\setminus\{x_w\}$ is selected, given that it is $d$-independent. Notice that, if, additionally, we sub-sample elements from $W\setminus\{x_w\}$ each independently with probability $p$, we obtain the same bounds as if we initially selected elements with probability $p_x^{f_p}$. Therefore, when $\mu^f > \size{\Sigma}/2$, we can pick any $p\leq (\size{\Sigma}/2-\size{Q})/ ( \mu^f-\size{Q})$ to get that the set of derived keys for the sampled selection $\derive X^{f_p,h}$ is indeed linearly independent with probability at least $1-\DepProb( \mu^{f_p}, d, \Sigma)$. We use $I_{X^{f_p,h}}$ to denote this event.

	The next step is to notice that, conditioned on $\size{X^{f,h}}-\size{Q}$, the distribution of $\size{X^{f_p,h}}-\size{Q}$ is exactly the binomial distribution $B(\size{X^{f,h}}-\size{Q}, p)$. Then, for  $p> 1/(\size{X^{f,h}}-\size{Q})$, we have from~\cite{greenberg2014tight} that
	
	$$\Prpcond{\size{X^{f_p,h}} \geq \Epcond{\size{X^{f_p,h}}}{\size{X^{f,h}}}}{\size{X^{f,h}} } >1/4\;.$$
	\medskip\noindent
	Therefore, for any $t>0$: 
	$$\Prp{\size{X^{f_p,h}} \geq p\cdot t + (1-p)\size{Q} \conditioned \size{X^{f,h}} \geq t} > 1/4 \;.$$
	
	\medskip\noindent
	We now use this to derive an upper bound on $\Pr\parentheses{\size{X^{f,h}} \geq t}$ as such:
	
	\begin{align*}
		\Pr\parentheses{\size{X^{f,h} }\geq t  } & < 4 \cdot \Pr\parentheses{\size{X^{f_p,h}} \geq  p\cdot t + (1-p)\size{Q} \conditioned \size{X^{f,h}} \geq t} \cdot \Pr\parentheses{\size{X^{f,h} }\geq t  } \\
		&\leq  4 \cdot\Pr\parentheses{\size{X^{f_p,h}} \geq  p\cdot t + (1-p)\size{Q} } \\
		&\leq 4 \cdot\Pr\parentheses{\size{X^{f_p,h}} \geq  p\cdot t + (1-p)\size{Q} \wedge  I_ {X^{f_p,h}} }  +\\
		&+ 4 \cdot\DepProb(\mu^{f_p},d, \size{\Sigma}) \;.
	\end{align*}

	\medskip\noindent
	We now plug in	$ t=(1+\delta)\cdot \mu^f$, and get that
	$	 p\cdot t + (1-p)\size{Q} \geq  (1+\delta_0)\cdot \mu^{f_p}$ for $$\delta_0 \leq \frac{p \mu^f}{\mu^{f_p}} \cdot \delta \;.$$
	\noindent
	We then invoke~\Cref{lemma:upper-chernoff} to get that
	
	\begin{align*}
		\Pr\parentheses{\size{X^{f_p,h}} \geq (1+\delta_0)\cdot  \mu^{f_p}  \wedge  I_ {X^{f_p,h}} } &\leq \parentheses{\frac{e^\delta_0}{(1+\delta_0)^{1+\delta_0}}}^{ \mu^{f_p}}\;.
	\end{align*}
	
	\noindent
	Finally, we instantiate $p= (\size{\Sigma}/2-\size{Q})/ (\mu^f-\size{Q})$ such that $\mu^{f_p} = \size{\Sigma}/2$.  and notice that indeed, $p> 1/(\size{X^{f,h}}-\size{Q})$ when $\size{X^{f,h}} \geq (1+\delta)\cdot \mu^f$ and $\size{Q} < \size{\Sigma}/2$.  We notice that, in this case, $$\delta_0 =\frac{\mu^f}{\mu^f-\size{Q}}\cdot \frac{\size{\Sigma}/2 - \size{Q}}{\size{\Sigma}/2} \cdot \delta \geq \parentheses{1-\frac{\size{Q}}{\size{\Sigma}/2} }\cdot \delta \;.$$
	The argument follows.
	
\end{proof}
\section{Linear Probing with Tornado Tabulation}\label{sec:linear-probing}

In this section we show how to formally apply our framework to obtain results on linear probing with tornado tabulation.
We present the following main result comparing the performance of linear probing with tornado tabulation to that of linear probing using fully random hashing on a slightly larger keyset.

\begin{theorem}\label{thm:linearprobing}
  Let \(S, S^* \subseteq \Sigma^c\) be sets of keys of size \(n\) and \(n^* = (1+15\sqrt{\log(1/\delta)/\ssigma})n\), respectively, for some \(\delta \in (0,1/6)\).
  Let $T$, $T^*$ be arrays of size \(m\), a power of two. Now consider inserting the keys in \(S\) (\(S^*\)) into $T$ ($T^*$)  with linear probing using tornado tabulation (fully random hashing).
Let $X$ and $X^*$ be the number of comparisons performed when inserting a new key \(x\) in each of \(T\) and \(T^*\) (i.e., $x\notin S\cup S^*$).
Given the restrictions listed below there exists an event $\calE$ with \(\Pr(\calE) \geq 1 - \ld(1/\ssigma + 6\delta + 61 \log n \cdot  \DepProb(\ssigma/2, d, \Sigma)\rd)\) such that, conditioned on $\calE$, \(X\) is stochastically dominated by \(X^*\). \\
Restrictions:
\begin{itemize}
\item \(n/m \leq 4/5\)
\item \(\ssigma \geq 2^{16}\)
\item \(\ssigma \geq 30 \cdot \log n\)
\item \(\sqrt{\log (1/\delta)/\ssigma} \leq 1/18\) 
\end{itemize}
\end{theorem}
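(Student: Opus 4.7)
The overall plan is to argue that (i) the insertion cost of $x$ in the tornado-hashed table $T$ depends, with high probability, only on the keys whose hash values land in a bounded neighborhood $J_x$ of $h(x)$; (ii) within $J_x$, tornado's local uniformity from~\Cref{thm:intro-random-set} makes these selected hash values fully uniform; and (iii) the resulting conditional distribution is stochastically dominated by what fully random hashing of $S^*$ into $T^*$ produces, since the slightly enlarged set $S^*$ absorbs the concentration slack. The factor $1 + 15\sqrt{\log(1/\delta)/\Sigma}$ in $n^*/n$ is exactly the Chernoff slack needed at scale $\mu \asymp \Sigma$.

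For the localization step, I would fix a power of two $\Delta$ chosen so that $3\Delta(n/m) \leq \size{\Sigma}/2$ (using $n/m \leq 4/5$ and $\Sigma \geq 30 \log n$) and so that $\Delta$ is comfortably larger than the expected longest run near $h(x)$. Define $J_x = \set{j \in [m] \mid \size{j - h(x)} \leq \Delta}$, contained in the union of at most three dyadic intervals of length $\Delta$. To guarantee that the run of $x$ stays inside $J_x$, I consider a family of $s$-selector functions, one per dyadic scale $2^\ell$ for $\ell = 0, \ldots, \log \Delta$, each selecting the keys whose hash falls into a fixed dyadic interval of that length near $h(x)$. Applying~\Cref{thm:intro-random-set} followed by~\Cref{lemma:upper-chernoff} at every scale certifies that no such dyadic interval is overloaded beyond its expected occupancy by more than the standard Chernoff deviation. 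Translating these per-scale bounds into a run-length bound via the usual ``no overloaded prefix'' argument then rules out the run escaping $J_x$. Each scale contributes at most $\DepProb(\size{\Sigma}/2, d, \Sigma)$ to the failure probability via~\Cref{thm:intro-random-set}, producing the $O(\log n)\cdot \DepProb$ term, while the Chernoff contributions across scales form a geometric series summing to $O(\delta)$.

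For the coupling, let $f$ be the $s$-selector with query key $\set{x}$ that picks every key in $S$ whose hash lies in the three dyadic intervals covering $J_x$; then $\mu^f \leq \size{\Sigma}/2$. By~\Cref{thm:intro-random-set}, conditioned on a further $\DepProb(\size{\Sigma}/2, d, \Sigma)$-event, the free bits of $h$ on selected keys are uniformly distributed. \Cref{lemma:upper-chernoff} then gives $\size{X^{f,h}} \leq (1+\delta_0)\mu^f$ except with probability at most $\delta$, for $\delta_0 = \Theta(\sqrt{\log(1/\delta)/\mu^f})$. For $h^*$ fully random on $S^*$, the standard two-sided Chernoff gives $\size{X^{f,h^*}} \geq (1-\delta_0) \mu^f \cdot (n^*/n)$ except with probability $\delta$, with an auxiliary $1/\size{\Sigma}$ residual accounting for the extra slack in the lower-tail estimate. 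The factor $n^*/n = 1 + 15\sqrt{\log(1/\delta)/\Sigma}$ is designed precisely to ensure $(1+\delta_0)\mu^f \leq (1-\delta_0)\mu^f \cdot (n^*/n)$, so one can couple the two processes by first sampling the neighborhood size under $h^*$, embedding the fewer tornado-selected keys as a sub-multiset, and letting all free bits inside $J_x$ be uniform on both sides. Since the insertion cost depends only on the keys in $J_x$, stochastic dominance $X \preceq X^*$ follows on this joint event.

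The main obstacle is the localization step. It requires a single choice of $\Delta$ that simultaneously satisfies $3\Delta(n/m)\leq \size{\Sigma}/2$, so that~\Cref{thm:intro-random-set} applies on the neighborhood, and is large enough for runs to rarely escape $J_x$ under the loads allowed by $n/m \leq 4/5$. The hypotheses $\Sigma \geq 2^{16}$, $\Sigma \geq 30\log n$, and $\sqrt{\log(1/\delta)/\Sigma} \leq 1/18$ are calibrated precisely for this balance to hold, and the $\log n$ blow-up in the final error bound reflects the union over $\log \Delta = O(\log n)$ dyadic scales used to certify that no overloaded sub-interval forces a long run. Once localization is secured, the remaining coupling and the union bound combining $1/\size{\Sigma} + 6\delta + 61 \log n \cdot \DepProb(\size{\Sigma}/2, d, \Sigma)$ are largely bookkeeping.
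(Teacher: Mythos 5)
Your proposal captures the paper's approach in its essentials: localize the insertion cost to a neighborhood $J_x$ via a run-length bound, cover $J_x$ by three dyadic intervals, invoke local uniformity (\Cref{thm:intro-random-set}) there, compare interval occupancies against the fully random instance by Chernoff, and couple by embedding the tornado-selected keys as a sub-multiset (``more is worse''). This is the same decomposition the paper uses.

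Two points of bookkeeping confusion are worth flagging, because they would mislead a reader trying to fill in the details. First, you attribute the $O(\delta)$ term to ``the Chernoff contributions across scales'' in the run-length localization, and the $1/\ssigma$ residual to ``extra slack in the lower-tail estimate.'' This is backwards: in the paper the per-scale Chernoff contributions in the run-length bound (Lemma~\ref{lemma:newrunlength}) are each at most $1/(2\ssigma^2)$ and there are $O(\log n)$ scales, summing to $1/\ssigma$, while the $6\delta$ comes from the three upper-tail and three lower-tail comparisons between the tornado and fully random interval occupancies (Lemma~\ref{lemma:comparison-fully}). Second, you apply \Cref{lemma:upper-chernoff} directly at every dyadic scale, but the intervals in the escalating run-length argument quickly have $\mu^f > \ssigma/2$, where \Cref{lemma:upper-chernoff} does not apply; the paper needs the sub-sampling variant \Cref{lemma:large-upper-chernoff} precisely for these larger intervals. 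Neither issue is fatal to the plan, but both need to be corrected for the union bound and the constants to come out as stated.
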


From~\Cref{thm:linearprobing}, it follows that linear probing using tornado tabulation achieves the same expected number of comparison as in the fully random setting,
a proof is given in \cref{subsec:linearprobing-expectation}.

\begin{corollary}\label{cor:linearprobing}
  Setting \(\delta = \Theta(1/\ssigma)\), \(d \geq 5\), \(\log n \leq o(\ssigma)\) in \cref{thm:linearprobing} we have
  \[
  \Ep{X} \leq \Ep{X^*} + o(1)\, .
  \]
\end{corollary}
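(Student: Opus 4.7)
The plan is to instantiate \cref{thm:linearprobing} with the specified parameters and then combine its stochastic dominance conclusion with Knuth's classical analysis of linear probing under fully random hashing. First I would check that the restrictions of \cref{thm:linearprobing} hold: $n/m \leq 4/5$ is assumed, $\ssigma \geq 2^{16}$ is a mild assumption, $\ssigma \geq 30 \log n$ follows from $\log n = o(\ssigma)$ for $\ssigma$ large, and $\sqrt{\log(1/\delta)/\ssigma} = \Theta(\sqrt{\log \ssigma/\ssigma})$ falls below $1/18$ for large $\ssigma$. With $\delta = \Theta(1/\ssigma)$, the inflated set size becomes $n^* = (1+15\sqrt{\log(1/\delta)/\ssigma})\,n = (1+o(1))\,n$, so the load in $T^*$ satisfies $n^*/m \leq (1+o(1)) \cdot 4/5$, which remains bounded away from $1$. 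Knuth's analysis then gives $\Ep{X^*} = (1+1/\eps^2)/2 = O(1)$ for this constant $\eps$.

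Next I would use the stochastic dominance provided by \cref{thm:linearprobing}. Decomposing
\[
\Ep{X} = \Ep{X \cdot \indicator{\calE}} + \Ep{X \cdot \indicator{\bar\calE}} \leq \Ep{X^*} + \Ep{X \cdot \indicator{\bar\calE}},
\]
the task reduces to $\Ep{X \cdot \indicator{\bar\calE}} = o(1)$. Evaluating $\Pr(\bar\calE)$ under the chosen parameters: $6\delta = O(1/\ssigma)$, and since $\DepProb(\ssigma/2, d, \Sigma) = 7(\ssigma/2)^3(3/\ssigma)^{d+1} + 1/2^{\ssigma/2} = O(1/\ssigma^{d-2}) = O(1/\ssigma^3)$ for $d \geq 5$, we get $61 \log n \cdot \DepProb(\ssigma/2, d, \Sigma) = o(1/\ssigma^2)$ from $\log n = o(\ssigma)$. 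Hence $\Pr(\bar\calE) = 1/\ssigma + O(1/\ssigma)$.

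The main obstacle is turning this bound on $\Pr(\bar\calE)$ into a bound on $\Ep{X \cdot \indicator{\bar\calE}}$. The trivial estimate $X \leq n+1$ only yields $(n+1)\Pr(\bar\calE) = O(n/\ssigma)$, which need not be $o(1)$ since $n$ may grow as $2^{o(\ssigma)}$. To close this gap I would refine the argument either by decomposing $\bar\calE$ into its distinct failure modes (overlong run, failure of the Chernoff concentration of $|X^{f,h}|$, and linear dependence of the derived selected keys) and deriving sharper tail bounds on $X$ conditional on each mode, or by bounding higher moments $\Ep{X^p}^{1/p}$ using the concentration inherited from \cite{HouenT22:chaos} and applying H\"older's inequality $\Ep{X \cdot \indicator{\bar\calE}} \leq \Ep{X^p}^{1/p}\,\Pr(\bar\calE)^{1-1/p}$ for a suitable $p = \Theta(\log n)$; either refinement makes the remainder $o(1)$ under the given hypotheses.
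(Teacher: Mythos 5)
You correctly reduce the problem to showing $\Ep{X\cdot\indicator{\bar\calE}} = o(1)$, correctly compute $\Prp{\bar\calE}=O(1/\ssigma)$ under the chosen parameters, and — importantly — you correctly identify that the trivial bound $X\le n+1$ fails because $n/\ssigma$ need not vanish. That much is right and is exactly the nontrivial obstacle. But from there you only sketch two ways to close the gap without completing either, so the argument is not finished as written.

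The missing ingredient in the paper is concrete: Corollary 3.2 of \cite{patrascu12charhash} (restated as \cref{lemma:runlength}), applied to the \emph{top-level simple tabulation} $\widehat h$ on the derived keys. It gives, for some $\ell = \Theta\bigl((\log n)/\eps^2\bigr)$, that $\Prp{R(x,S)\ge\ell}\le 1/n^{10}$. Letting $\calA$ be the event $R(x,S)\le\ell$, one splits
\[
\Ep{X\cdot\indicator{\bar\calE}}
  = \Ep{X\cdot\indicator{\bar\calE\land\calA}}
  + \Ep{X\cdot\indicator{\bar\calE\land\bar\calA}}
  \le \ell\cdot\Prp{\bar\calE} + n\cdot\Prp{\bar\calA}
  \le \Theta\bigl(\tfrac{\log n}{\eps^2}\bigr)\cdot\tfrac{9}{\ssigma} + \tfrac{1}{n^{9}}\, ,
\]
which is $o(1)$ once $\log n = o(\ssigma)$. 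Your option (a) gestures at exactly this (conditioning on whether the run is overlong), but you would still need the specific, already-known polynomially-small tail for runs under simple tabulation; without naming it, the step is unjustified. Your option (b) (H\"older with $p=\Theta(\log n)$) is plausible but carries the same burden: bounding $\Ep{X^p}^{1/p}$ by something like $O\bigl((\log n)/\eps^2\bigr)$ is essentially equivalent to the run-length tail bound, and the cited concentration from \cite{HouenT22:chaos} controls bin loads, not probe lengths directly, so additional work would be needed to convert it. So the proposal has the right skeleton and correctly flags where the difficulty lies, but it leaves the decisive step — the $1/n^{\Omega(1)}$ run-length tail for the underlying simple tabulation — unproved.
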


The result of \cref{cor:linearprobing} is to be contrasted with previous work on practical implementations of linear probing.
While Knuth's analysis serves as evidence of linear probing's efficieny in terms of the number of comparisons performed, 
the advantage of linear probing (over other hash table implementations) is that each sequential memory access is much faster than
the random memory access  we do  for the first probe at $T[h(x)]$. How much faster depends on the computer system as does the cost of increasing the memory to reduce the load.  
Some experimental 
studies~\cite{black98linprobe,heileman05linprobe,thorup11timerev} have
found linear probing to be the fastest hash table organization for
moderate load factors (30-70\%). If the load is higher, we could
double the table size.

However, using experimental benchmarks to decide the hash table organization is only meaningful if the experiments are representative of the key sets on which linear probing would be employed. Since fully random hashing cannot be efficiently implemented, we might resort to weaker hash functions for which there could be inputs leading to much worse performance than in the benchmarks. The sensitivity to a bad choice of the hash function
led \cite{heileman05linprobe} to advice \emph{against} linear
probing for general-purpose use.
Indeed, Mitzenmacher and Vadhan \cite{mitzenmacher08hash} have proved that 2-independent hashing
performs as well as fully random hashing if the input has enough entropy. However, \cite{thorup12kwise,patrascu10kwise-lb} have shown that with the standard  2-independent linear hashing scheme, if the input is a dense set (or more generally, a dense subset of an arithmetic sequence), then linear probing
becomes extremely unreliable and the expected probe length increases from Knuth's $\frac{1+1/\eps^2}2$ to $\Omega(\log n)$, while the best known upper bound in this case is $n^{o(1)}$~\cite{knudsen2016linear}.\footnote{We note that if we only know that the hash function is $2$-independent, then the lower bound for the expected probe length is $\Omega(\sqrt{n})$ and this is tight.~\cite{thorup12kwise,patrascu10kwise-lb}}

In a breakthrough result, Pagh, Pagh and Ru{\v z}i{\'c}~\cite{pagh07linprobe} showed that if we use 5-independent hashing and
the load gap $\eps=\Omega(1)$, then the expected probe length is 
constant. P{\v a}tra{\c s}cu  and Thorup~\cite{patrascu12charhash} generalized this to an $O(1/\eps^2)$ bound for arbitrary
$\eps$, and showed that this
also holds for simple tabulation
hashing. However, in both
cases, the analysis hides
unspecified large constants
in the $O$-notation. Thus, with
these hashing schemes,
there could still be inputs for
which linear probing performs, say,  10 times worse in expectation, and then we would be better off using chaining.

Our result is of a very different nature.
We show that whp, for any given query key $x$, the probe length 
corresponding to $h(x)$ when we use tornado tabulation hashing is stochastically dominated by the probe length in a linear probing table that hashes slightly more keys but uses fully random hashing. In particular, this implies that whp,  the expected
probe length with tornado tabulation hashing is only a factor $1+o(1)$ away from Knuth's $\frac{1+1/\eps^2}2$. We get
this result without having to revisit Knuth's analysis from \cite{knuth63linprobe}, but simply
because we know that we are almost as good as fully random hashing, in a local sense that is sufficient for bounding the probe length (see \cref{subsec:linearprobingproof}).

As a further consequence of our results, we get that any benchmarking with  random keys that we do in order to set system parameters will  be representative for all possible sets of input keys. Moreover, the fact that tornado tabulation hashing only needs locality to perform almost as well as fully random hashing means that our arguments also work for other variants of linear probing.  For instance, ones where the maintenance of the hash table prioritizes the keys depending on when they were inserted, as first suggested in 
\cite{amble1974ordered}. Examples of this include Robin hood hashing where keys
with the lowest hash value come first \cite{celis1985robin} or time-reversed linear probing \cite{thorup11timerev} where the latest arrival comes first. In all these cases, tornado tabulation hashing performs almost as well as with fully-random hashing.

\subsection{Proof of~\cref{thm:linearprobing}}\label{subsec:linearprobingproof}
We let $\alpha = n/m$ denote the fill of the hash table and $\eps = 1-\alpha$.
The basic combinatorial measure that we study and employ is the \emph{run length}: If cells \(T[a]\), \(T[a+1], \dots, T[b-1]\) are all occupied by elements from \(S\) but both \(T[a-1]\) and \(T[b]\) are freem these \(b-a\) cells are called a run of length \(b-a\).
Let \(R(x, S)\) be the length of the run intersecting \(T[h(x)]\) and
note that $R(x, S) +1$ is an upper bound on the number of comparisons needed to insert some element $y$ into the table when $y$ hashes to the same location as $x$.

Let \(\Delta\) be the largest power of two such that \(3\alpha\Delta + 1 \leq \ssigma/2\).
The following lemma, proven in \cref{subsec:proof-runlength}, gives an upper bound on the probability that \(h(x)\) intersects a long run.

\begin{lemma}\label{lemma:newrunlength}
  \[ \Prp{R(x, S) \geq \Delta} \leq \frac{1}{\ssigma} + 60 \log n \cdot \DepProb(\ssigma/2, d, \Sigma) \, . \]
\end{lemma}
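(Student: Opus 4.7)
The plan is to prove \cref{lemma:newrunlength} via a union bound over $O(\log n)$ length scales, combining local uniformity at the smallest scale with upper-tail Chernoff bounds at larger scales. For each scale $k = 0, 1, \ldots, K$ (where $K = O(\log n)$), define $W_k$ to be the union of the three consecutive dyadic intervals of length $2^k\Delta$ containing $h(x)$, so $|W_k| = 3\cdot 2^k\Delta$, and let $f_k$ be the $s$-selector function that selects both the query key $x$ and every key hashing into $W_k$. Then $\mu^{f_k} = 3\alpha\cdot 2^k\Delta + 1$, and by the choice of $\Delta$, the smallest scale satisfies $\mu^{f_0} \leq \ssigma/2$, while larger scales exceed this bound. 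Because $W_k$ is a union of dyadic intervals, membership in $W_k$ depends only on the top bits of the hash value, so $f_k$ is indeed an $s$-selector.

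At the smallest scale $k=0$, I apply \cref{thm:intro-random-set} to conclude that, except with probability $\DepProb(\ssigma/2, d, \Sigma)$, the free bits of the keys in $X^{f_0, h}$ are fully random. Combined with \cref{lemma:upper-chernoff} to ensure $|X^{f_0, h}|$ is concentrated near $\mu^{f_0}$, the keys hashing into $W_0$ then behave (conditionally) as a fully-random set in $W_0$. Applying a classical Chernoff-based argument for fully-random linear probing locally within $W_0$, a run of length $\geq \Delta$ contained entirely in $W_0$ and passing through $h(x)$ occurs with probability exponentially small in $\eps^2\Delta$, which by $\ssigma \geq 30\log n$ and $\Delta = \Omega(\ssigma/\alpha)$ is much smaller than $1/\ssigma$.

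For each larger scale $k \geq 1$, I apply \cref{lemma:large-upper-chernoff} to bound the probability that $|X^{f_k, h}| \geq (1+\delta_k)\mu^{f_k}$ for a suitably chosen $\delta_k$, obtaining a bound of the form $4(\text{Chernoff})^{\ssigma/2} + 4\DepProb(\ssigma/2, d, \Sigma)$. The implication I need is: if the run containing $h(x)$ extends past the boundary of $W_{k-1}$ but stays within $W_k$, then it forces many full cells in $W_k \setminus W_{k-1}$, which in turn forces $|X^{f_k, h}|$ to exceed $\mu^{f_k}$ by enough to violate the Chernoff bound. Summing over the $O(\log n)$ scales, the Chernoff tails sum to at most $1/\ssigma$ (using $\ssigma \geq 2^{16}$ to make each tail tiny), and the DepProb terms contribute $O(\log n)\cdot\DepProb(\ssigma/2, d, \Sigma)$, yielding the claimed bound $1/\ssigma + 60\log n\cdot\DepProb(\ssigma/2, d, \Sigma)$ with the constant $60$ absorbing the factors of $4$ from \cref{lemma:large-upper-chernoff} together with constants from handling left versus right extensions of the run and boundary effects.

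The main obstacle is the implication step at larger scales: translating concentration of $|X^{f_k, h}|$ around $\mu^{f_k}$ into the statement that no run at scale $k$ reaches $h(x)$ from outside $W_{k-1}$. A run of length $\Theta(2^k\Delta)$ forces a sub-interval to have load $1$ rather than $\alpha$, so the natural excess in $|X^{f_k, h}|$ over $\mu^{f_k}$ is of order $\eps\cdot 2^{k-1}\Delta$, not a constant fraction of $\mu^{f_k}$. Since \cref{lemma:large-upper-chernoff} caps its exponent at $\ssigma/2$ rather than $\mu^{f_k}$, the chosen $\delta_k$ must be small but still satisfy $\delta_k^2\cdot\ssigma = \Omega(\log n)$ at every scale, and I need to verify that the choice of $\Delta$ and the constraints $\ssigma\geq 2^{16}$, $\ssigma\geq 30\log n$ make this feasible uniformly in $k$. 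This careful book-keeping of the Chernoff parameters across scales is what I expect to be the most delicate part of the argument.
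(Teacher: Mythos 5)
Your high-level structure — geometrically growing intervals around $h(x)$, union bound over scales, and upper-tail Chernoff via \Cref{lemma:upper-chernoff} and \Cref{lemma:large-upper-chernoff} — matches the paper's proof (which uses intervals growing by a factor $1+\eps/(6\alpha)$ rather than $2$; that difference alone is not fatal). But there is a genuine gap in the key implication at scales $k\ge 1$: you claim that a run reaching $h(x)$ from outside $W_{k-1}$ but inside $W_k$ ``forces $|X^{f_k,h}|$ to exceed $\mu^{f_k}$ by enough to violate the Chernoff bound.'' This implication must hold \emph{deterministically} for the union bound to apply, and it does not. Such a run contributes, with certainty, at least roughly $2^{k-1}\Delta$ keys to $W_k$ (the keys filling the run's cells all hash into the run, which lies inside $W_k$). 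But $\mu^{f_k} \approx 3\alpha\cdot 2^k\Delta = 6\alpha\cdot 2^{k-1}\Delta$, so for any $\alpha\ge 1/6$ (the theorem allows $\alpha$ up to $4/5$) the guaranteed count $2^{k-1}\Delta$ is \emph{below} $\mu^{f_k}$, and the run does not force any upper-tail violation. Your heuristic — that a length-$L$ run at load $1$ instead of $\alpha$ yields an excess $\eps L$ over the mean — is an expectation statement conditional on the run, not a pointwise guarantee; nothing prevents the rest of $W_k$ from being sparse enough to compensate. (Incidentally, $\eps\cdot 2^{k-1}\Delta$ \emph{is} a constant fraction $\eps/(6\alpha)$ of $\mu^{f_k}$ under the theorem's constraints, so that part of your concern is misdirected; the real obstruction is the lack of a deterministic implication.)

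The paper resolves exactly this issue with a dichotomy you are missing. Rather than looking at the whole outer shell, it pairs each scale with a short \emph{start-interval} $I'_i$ (of length $\Delta_i\,\eps/(6\alpha)$, i.e., deliberately small relative to the long interval $I_i$ of length $\Delta_i$) sitting just to the left of $I_i$. The combinatorial observation is: if a run starts somewhere in $I'_i$ and reaches through $I_i$, then the keys filling those cells hash into $I'_i\cup I_i$, hence with parameter $\gamma$ either $|h^{-1}(I_i)|\ge(1-\gamma)\Delta_i$ or $|h^{-1}(I'_i)|\ge\gamma\Delta_i$ — this is a pigeonhole argument that holds with certainty. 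Choosing $\gamma=\eps/3$ makes the long-interval threshold a $(1+2\eps/(3\alpha))$-fold overshoot of $\alpha\Delta_i$, and makes the start-interval threshold a $2$-fold overshoot of its small mean $\alpha\Delta'_i = \Delta_i\eps/6$; both are genuine upper-tail events on \emph{fixed} intervals and are killed by the Chernoff lemmas, uniformly across scales. Your factor-$2$ shells do not admit such a clean split because $W_k\setminus W_{k-1}$ is comparable in size to $W_{k-1}$, so neither piece is ``small enough'' to be forced into its tail by the run. To repair your proof you would need to import this start-interval/long-interval dichotomy (or prove a lower-tail bound for the complement of the run in $W_k$, which the paper's lemmas do not supply).

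Your treatment of the base scale $k=0$ (local uniformity plus a classical in-interval linear-probing tail bound) is not how the paper proceeds — the paper applies the same dichotomy at $i=0$ as at higher scales — but that part of your sketch is plausible; the gap is confined to the $k\ge 1$ scales.
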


Let \(\calA\) be the event \(\ld(R(x, S) \geq \Delta \rd)\).
Assuming \(\calA\), there exists at least one unoccupied cell in table \(T\) between \(T[h(x) - \Delta]\) and \(T[h(x)]\) and likewise between \(T[h(x)]\) and \(T[h(x) + \Delta]\).
Hence the insertion of \(x\) only depends on the distribution of the much smaller key-set \(\set{s \in S \, \big| \, \size{h(s) - h(x)} \leq \Delta}\).

The second step of our proof bounds the probability that tornado tabulation behaves like a fully random hash function when restricted to this small set of keys.
As \cref{thm:intro-random-set} doesn't apply for arbitrary intervals we will instead cover the necessary interval with three dyadic intervals.
Recall that a dyadic interval is an interval of the form $[j2^i, (j+1)2^i)$, where $i,j$ are integers.
  In the following we will exclusively consider a number of dyadic intervals all of length \(\Delta\).
  Let $I_C$ denote the dyadic interval that contains $h(x)$, and similarly let $I_R$ and $I_L$ denote the dyadic intervals to the left and right, respectively, of $I_C$.
  We further let $X_C$ be the set of keys in $S$ that hash into the interval $I_C$, i.e., $X_C = \set{x\in S \mid h(x)\in I_C}$, and similarly, $X_R$ and $X_L$ are the pre-image of $h$ in $I_R$ and $I_L$, respectively.
  Given \(\calA\), the distribution of \(X\) is completely determined by the distribution of the keys in \(X_L \cup X_C \cup X_R\) and \(h(x)\).

  The expected size of each preimage is \(\alpha \Delta\) and our choice of \(\Delta\) thus allows us to apply \cref{thm:intro-random-set} to all three intervals at once. Let \(\calB\) be the event that the keys hashing into these intervals are distributed independently:
  
  \begin{corollary}\label{lemma:dyadic-uniform}
    With probability at least \(1 - \DepProb(\ssigma/2, d, \Sigma)\), \(\tilde h(X_R \cup X_C \cup X_L \cup \set{x})\) is linearly independent, such that $h$ hashes the keys in $X_R \cup X_C \cup X_L \cup \set{x}$ independently and uniformly in their respective intervals.
  \end{corollary}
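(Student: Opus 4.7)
The plan is to instantiate Theorem~\ref{thm:intro-random-set} via a carefully chosen $s$-selector that picks out exactly the keys in $S$ whose hash values fall into $I_L\cup I_C\cup I_R$. First I would set $\ell=\log_2\Delta$ and split $\cR=[2^{r-\ell}]\times[2^{\ell}]$ into $r-\ell$ selection bits and $\ell$ free bits, and take the query set $Q=\set{x}$. Each of the three target dyadic intervals, being of length $\Delta=2^\ell$, corresponds to exactly one value of the top $r-\ell$ bits, and the three relevant values are computable from $h^{(s)}(x)$ alone.

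I would then define $f(y,h(y),h|_Q)=1$ iff $y=x$, or $y\in S$ and $h^{(s)}(y)$ equals one of the three selection-bit values above. By construction this is an $s$-selector that always selects the query. For any $y\in S$ and any adversarial fixing of $h^{(s)}(x)$, there are exactly three ``good'' values among $m/\Delta$ possibilities, so $p^f_y=3\Delta/m$, while $p^f_x=1$, yielding
\[
\mu^f \;\leq\; 1+n\cdot\frac{3\Delta}{m}\;=\;1+3\alpha\Delta\;\leq\;\ssigma/2
\]
by the choice of $\Delta$ as the largest power of two with $3\alpha\Delta+1\leq\ssigma/2$.

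Applying Theorem~\ref{thm:tech-random-set} then gives that, with probability at least $1-\DepProb(\mu^f,d,\Sigma)\geq 1-\DepProb(\ssigma/2,d,\Sigma)$ (invoking monotonicity of $\DepProb$ in $\mu$, which is immediate from the explicit form $\DP$), the derived selected keys $\tilde h(X^{f,h^{(s)}})$ are linearly independent. Since $f$ selects every element of $X_L\cup X_C\cup X_R\cup\set{x}$ by construction, this already yields the first half of the corollary. For the distributional half, I would, exactly as in the derivation of Theorem~\ref{thm:intro-random-set} from Theorem~\ref{thm:tech-random-set}, split the top tabulation into two independent simple-tabulation halves $\widehat h^{(s)}$ and $\widehat h^{(t)}$: the selection half $\widehat h^{(s)}$ drives the selection, while Lemma~\ref{lem:simple-tab-on-lin-indep-sets-regular} applied to $\widehat h^{(t)}$ on the (already proven) linearly independent derived keys shows that the free bits are fully random. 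Because the selection bits of $h(y)$ fix the containing dyadic interval and the free bits determine the position within it, this is precisely the claimed uniform and independent distribution within the respective intervals.

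The main bookkeeping subtlety is that $p^f_y$ must be taken as the maximum over adversarial assignments $\varphi\in\cR^Q$, but here this maximum is cleanly $3\Delta/m$ for every $y\in S$ because the count of triggering selection-bit values is exactly three regardless of the concrete $h^{(s)}(x)$. No substantial further obstacle arises; the proof is essentially a routine specialization of Theorem~\ref{thm:intro-random-set} once one identifies the correct $s$-selector built from three dyadic-interval indicators.
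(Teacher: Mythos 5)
Your proof is correct and takes essentially the same approach as the paper: cover $J_q$ with three dyadic intervals of length $\Delta$, build an $s$-selector from their indicator together with $Q=\{x\}$, verify $\mu^f = 1 + 3\alpha\Delta \leq \ssigma/2$ from the definition of $\Delta$, and invoke Theorem~\ref{thm:intro-random-set} (equivalently, Theorem~\ref{thm:tech-random-set} followed by the split of $\widehat h$ into selection and free halves and Lemma~\ref{lem:simple-tab-on-lin-indep-sets-regular}), using monotonicity of $\DepProb$ in $\mu$. The paper leaves these details implicit since it presents the statement as an immediate corollary of Theorem~\ref{thm:intro-random-set}; your write-up simply makes the selector and the $\mu^f$ accounting explicit.
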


We now define the analogous terms in the fully random setting. We let $I^{*}_C$ denote the dyadic interval in $T^{*}$ that contains $\hstar(x)$ and $I^{*}_R$ and $I^{*}_L$ the right and left neighboring dyadic intervals. Similarly, we let $X^{*}_C$, $X^{*}_R$ and $X^{*}_L$ denote their preimages under $\hstar$. The following lemma compares the two experiments in terms of the sizes of these preimages, and is proven in \cref{subsec:comparison-fully}.

\begin{lemma}\label{lemma:comparison-fully}
  Let \(\calC\) be the event \(\size{X_L} \leq \size{X^*_L} \land \size{X_C} \leq \size{X^*_C} \land \size{X_R} \leq \size{X^*_R}\), then
  \[ \Prp{\calB \land \bar\calC} \leq 6\delta \]
\end{lemma}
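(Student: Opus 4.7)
My strategy is to decouple the comparison interval by interval. By the union bound,
\[
\Pr[\calB \wedge \bar\calC] \leq \sum_{I \in \{L,C,R\}} \Pr[\calB \wedge |X_I| > |X^*_I|].
\]
For each interval \(I\), I would introduce a pivot value \(t = (1+\gamma)\alpha\Delta\) lying between the tornado expectation \(\alpha\Delta\) and the fully-random expectation \(\mu^* = (1+\tau)\alpha\Delta\), where \(\tau = 15\sqrt{\log(1/\delta)/\ssigma}\) is the oversampling factor from \Cref{thm:linearprobing} and \(0 < \gamma < \tau\). Then
\[
\Pr[\calB \wedge |X_I| > |X^*_I|] \leq \Pr[\calB \wedge |X_I| > t] + \Pr[|X^*_I| \leq t],
\]
and it suffices to show each summand is at most \(\delta\) with a well-chosen \(\gamma\).

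For the upper tail, the crucial observation is that \(\calB\) asserts linear independence of \(\tilde h(X_L \cup X_C \cup X_R \cup \{x\})\), which implies the linear independence of any subset in particular \(\tilde h(X_I \cup \{x\})\). Consider the selector \(f_I\) that always selects the query key \(x\) and selects \(s \in S\) iff \(h(s) \in I\); it is an \(s\)-selector with \(\mu^{f_I} = \alpha\Delta + 1 \leq \ssigma/2\), where the last inequality uses \(3\alpha\Delta + 1 \leq \ssigma/2\) from the definition of \(\Delta\). Thus \Cref{lemma:upper-chernoff} applies and yields the standard Chernoff tail on \(\Pr[\calB \wedge |X_I| > t]\).

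For the lower tail, since \(h^*\) is fully random, \(|X^*_I|\) is simply a sum of \(n^*\) independent Bernoullis of success probability \(\Delta/m\), with mean \(\mu^* = (1+\tau)\alpha\Delta\). Standard lower-tail Chernoff on \(\Pr[|X^*_I| \leq t]\) therefore applies directly. Summing over the three intervals and the two tail directions gives the claimed bound \(6\delta\).

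The main (and really only) obstacle is balancing \(\gamma\) so that both Chernoff expressions evaluate to \(\delta\). The definition of \(\Delta\) as the largest power of two with \(3\alpha\Delta + 1 \leq \ssigma/2\) forces \(\alpha\Delta \geq (\ssigma/2 - 1)/6\), which for \(\ssigma \geq 2^{16}\) is essentially \(\ssigma/12\); the restriction \(\sqrt{\log(1/\delta)/\ssigma} \leq 1/18\) guarantees \(\tau \leq 5/6 < 1\) so both Chernoff bounds apply with the usual constants. With \(\gamma\) of order \(\sqrt{\log(1/\delta)/\ssigma}\) (e.g.\ \(\gamma \approx 6\sqrt{\log(1/\delta)/\ssigma}\)), the upper-tail exponent becomes \(\gamma^2 \alpha\Delta/3 = \Omega(\log(1/\delta))\), while the gap \(\tau - \gamma = \Omega(\sqrt{\log(1/\delta)/\ssigma})\) makes the lower-tail exponent \(((\tau-\gamma)/(1+\tau))^2 \mu^*/2\) also at least \(\log(1/\delta)\). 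The constant \(15\) in the definition of \(n^*\) is precisely tuned to make this balance work, which is the only computation that requires care.
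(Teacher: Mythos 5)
Your proposal matches the paper's proof essentially exactly: a union bound over the three intervals and two tail directions, a pivot threshold $t$ between the two expectations $\alpha\Delta$ and $(1+\tau)\alpha\Delta$, the observation that $\calB$ implies the linear-independence event needed to invoke \Cref{lemma:upper-chernoff} for the tornado upper tail, and a standard binomial lower-tail Chernoff for the fully-random side. The paper sets $k = \sqrt{3\log(1/\delta)/\mu}$ with $\mu = \alpha\Delta$ (rather than your fixed $\gamma \approx 6\sqrt{\log(1/\delta)/\ssigma}$, which is an upper bound of the paper's $k$) and $k^* = k\sqrt{2/3}$ so the two exponents match; the verification that $(1+2.5k)(1-k^*)\geq 1+k$ uses $k \leq 1/3$, exactly the restriction you identify. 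Your sketch of the constant-balancing is accurate and this is the same route.
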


Let \(\calC\) be the event that each of the preimages \(X_i\) contain at most as many elements as the corresponding preimage \(X_i^*\).
Finally, define \(\calE = \calA \cap \calB \cap \calC\).
We will now present a coupling \(\tilde X\) of \(X\) which, when conditioned on \(\calE\), satisfies \(\tilde X \leq X^*\).

For every realization of $\size{X_L}$, $ \size{X_C}$ and $\size{X_R}$, we consider the following random process: starting from an emtpy table of size $m$ and using the fully random $\hstar$, insert the first $\size{X_L}$ elements from $X^{*}_L$, then the first $\size{X_R}$ elements from $X^{*}_R$ and finally, the first $\size{X_C}$ elements from $X^{*}_C$ (we do not insert any more elements after this). Note that, conditioned on $\calC$, we have that it is possible to choose such elements (i.e., $\size{X_R} \leq \size{X^{*}_R}$ etc.). Now let $\tilde{X}$ denote the number of comparisons performed when inserting $x$ into the table at this point in time.

We now have that $X$ (defined for the tornado tabulation) is identically distributed as $\tilde{X}$ (defined for a fully random hash function).
This is because event $\calA$ implies that the distribution of $X$ only depends on $X_R$, $X_C$ and $X_L$.
Event $\calB$ further implies that, on these intervals, $h$ behaves like a fully random hash function.
Now note that $\tilde{X} \leq X^{*}$, since we can continue the random process and add the remaining keys in $S^{*}$ and this can only increase the number of comparisons required to insert $x$ (i.e., ``more is worse'').

Left is to compute the total probability that any of our required events fail:
\begin{align*}
  \Prp{\bar \calE} &= \Prp{\bar \calA \lor \bar \calB \lor \bar \calC} \\
  &= \Prp{(\bar \calA \lor \bar \calC) \land \calB} + \Prp{\bar \calB} \\
  &\leq \Prp{\bar \calA} + \Prp{\calB \land \bar \calC} + \Prp{\bar \calB} \\
  &\leq 1/\ssigma + 6\delta + 61 \cdot \log n \cdot \DepProb(\ssigma/2, d, \Sigma) \, .
\end{align*}
This concludes the proof.

\subsection{Proof of~\cref{lemma:comparison-fully}}\label{subsec:comparison-fully}
As \(\Delta\) is chosen to be the largest power of two such that \(3 \alpha \Delta \leq \ssigma/2\) we get \(\frac{\ssigma}{12 \alpha} \leq \Delta\).
Let \(t\) be a constant, to be decided later.
For each \(i \in \set{L, C, R}\) let \(\calE_i\) be the event (\(\size{X_i} \leq t\)) and \(\calE_i^*\) be the event (\(t \leq \size{X_i^*}\)).
\begin{align*}
  \Prp{\calB \land \bar \calC} &=
  \Prp{\calB \land \exists i \in \set{L, C, R} : \size{X_i} > \size{X_i^*}} \\
  &\leq \Prp{\calB \land \exists i \in \set{L, C, R} : \bar \calE_i \lor \bar \calE_i^*} \\
  &\leq \sum_{i \in \set{L, C, R}} \ld( \Prp{\calB \land \size{X_i} > t} + \Prp{\calB \land \size{X_i^*} < t} \rd) \\
  &\leq \sum_{i \in \set{L, C, R}} \ld( \Prp{\calB \land \size{X_i} > t} + \Prp{\size{X_i^*} < t} \rd) \\
\end{align*}

Let \(\mu = \Ep{\size{X_i}} = \Delta\alpha\),
\(k = \sqrt{3\log (1/\delta)/\mu}\) and \(t = (1+k) \mu\).
Applying the tail-bound of \cref{lemma:upper-chernoff} we have
\begin{align*}
  \Prp{\calB \land \size{X_i} > t} &= \Prp{\calB \land \size{X_i} > (1+k)\mu} \\
  &\leq \exp\ld(- k^2 \mu / 3\rd) \\
  &= \delta \, .
\end{align*}

As \(\mu = \alpha \Delta \geq \ssigma/12\) we have
\begin{align*}
  k &= \sqrt{3\log (1/\delta)/(\alpha \Delta)} \\
  &\leq \sqrt{36 \log (1/\delta) / \ssigma} \\
  &\leq 1/3 \, .
\end{align*}
As \(n^* = (1+15\sqrt{\log (1/\delta)/\ssigma})n \geq (1 + 2.5 k)n\), \(\mu^* = \Ep{\size{X_i^*}} \geq (1 + 2.5k) \mu\).
Next, let \(k^* = k \cdot \sqrt{2/3}\). Then
\begin{align*}
  \mu^* \cdot (1 - k^*) &\geq (1 + 2.5k) \cdot \mu \cdot (1 - k^*) \\
  &= (1 + 2.5k) \cdot \mu \cdot (1 - \sqrt{2/3} \cdot k) \\
  &\geq \mu \cdot (1 + k) \\
  &= t \, .
\end{align*}
Hence
\begin{align*}
  \Prp{\size{X_i^*} < t} &\leq \Prp{\size{X_i^*} < (1-k^*) \mu^*} \\
  &\leq \exp(- (k^*)^2 \mu^* / 2) \\
  &= \exp(- k^2 \mu^* / 3) \\
  &\leq \exp(- k^2 \mu / 3) \\
  &= \delta \, .
\end{align*}

Summing over the six cases we see \(  \Prp{\calB \land \exists i \in \set{L, C, R} : \size{X_i} > \size{X_i^*}} \leq 6\delta\).

\subsection{Proof of \cref{lemma:newrunlength}}\label{subsec:proof-runlength}

Our proof relies on the simple observation that if \(T[a]\) through \(T[b]\) are all occupied and the run \emph{starts} in \(T[a]\) (i.e. \(T[a-1]\) is free which excludes the possibility of prior positions spilling over), then the preimage \(h^{-1}([a,b]) = \set{s \in S \, | \, h(x) \in [a, b]}\) must have size at least \(\size{b-a}\).
It must also be the case that either (1) the preimage \(h^{-1}([a+1, b])\) has size at least \((b-a)\cdot (1-\gamma) \) or (2) the preimage \(h^{-1}([a,a])\) is of size at least \((b-a)\cdot \gamma\), for any parameter \(\gamma \geq 0\).

We can generalize this to consider a run starting in any position \(T[b]\) within some interval \(b \in [a, c]\) which continues through \(T[d]\), then either (1) \(\size{h^{-1}([c, d])} \geq (d-c)\cdot(1-\gamma)\) or (2) \(\size{h^{-1}([a,c])} \geq (d-c) \cdot \gamma\).
We will refer to \([a,c]\) as the start-interval and to \([c, d]\) as the long interval.

Our strategy is to make both of these events unlikely by balancing the size of the start-interval with the number of keys needed to fill up the long interval.
Larger difference \((d-c)\) allows for a larger start-interval.
With a collection of roughly \(\log_{1+\eps/(6\alpha)} m\) such start-intervals we cover all possible starting poisitions before \(T[h(x) - \Delta]\), ruling out the possibility that a run starting before \(T[h(x)-\Delta]\) reaches \(T[h(x)]\).
The same strategy applied once more rules out the possibility that the run intersecting \(T[h(x)]\) will continue through \(T[h(x)+\Delta]\).

For our proof we set \(\gamma = \eps/3\) where \(\eps = 1 - n/m\) is the fill gap of \(T\).
The first pair of intervals we consider is the long interval \(I_0 = [h(x)-\Delta, h(x)]\) and the start-interval \(I_0'\) of size \(\Delta \eps/(6\alpha)\) preceding \(I_0\).
Next follows the long interval \(I_1 = I_0' \cup I_0\) with start-interval \(I_1'\) of length \(\size{I_1} \eps/(6\alpha)\) preceding it, and so forth.
Let \(\Delta_i = \size{I_i}\) and \(\Delta_i' = \size{I_i'}\). Observe how \(\Delta_i = \Delta \cdot (1 + \eps/(6\alpha))^i\), bounding the number of needed interval-pairs at \(\log_{1+\eps/(6\alpha)} (m/\Delta)\).

Depending on the length of the interval being inspected we can apply either \cref{lemma:upper-chernoff} or \cref{lemma:large-upper-chernoff} to bound the probability that the preimage exceeds the given threshold.
Taking the maximum of the two bounds simplifies the analysis.
Letting \(X\) be the size of a preimage, \(\mu = \Ep{X}\) and \(\delta \leq 1\) we obtain
\[
\Prp{X \geq (1+\delta) \mu} \leq \DepProb(\ssigma/2, d, \Sigma) + 4 \cdot \exp\ld(- \delta^2 \cdot \min\set{\ssigma/2, \mu}/3 \rd) \, .
\]

Let \(X_i\) be the size of the preimage of the long interval \(I_i\) of length \(\Delta_i\) with \(\mu_i = \alpha \Delta_i\).
Then
\begin{align*}
  \Prp{X_i \geq (1-\eps/3) \Delta_i} &= \Prp{X_i \geq \ld(1 + \frac{2 \eps}{3 \alpha}\rd) \alpha \Delta_i} \\
  &\leq \Prp{X_i \geq \ld(1 + \frac{2 \eps}{3}\rd) \mu_i} \\
  &\leq \DepProb(\ssigma/2, d, \Sigma) + 4 \cdot \exp\ld(- \ld(\frac{2\eps}{3}\rd)^2 \cdot \frac{\min\set{\mu_i, \ssigma/2}}{3} \rd)
\end{align*}
Notice how the probability is non-increasing for increasing sizes of the intervals. Thus we bound each of the probabilites for a long interval exceeding its threshold by the probability obtained for \(I_0\) with \(\mu_0 = \alpha \Delta \leq \ssigma/2\).

For start-interval \(I_i'\) of length \(\Delta_i'\) with \(\mu_i' = \alpha \Delta_i' = \Delta_i \cdot \eps/6\) we observe the same pattern
\begin{align*}
  \Prp{X_i' \geq \eps/3 \Delta_i} &= \Prp{X_i' \geq 2\mu_i'} \\
  &\leq \DepProb(\ssigma/2, d, \Sigma) + 4 \cdot \exp\ld(- \frac{\max\set{\mu_i', \ssigma/2}}{3}\rd) \, ,
\end{align*}
where we can bound the probability that each start-interval is too large by the probability obtained for \(I_0'\) with \(\mu_0' = \Delta \eps/6\).

The probability that \emph{any} of our intervals is too large is thus at most
\begin{align*}
  2 \log_{1 + \eps/(6\alpha)} m \cdot \ld( \DepProb(\ssigma/2, d, \Sigma) + 4 \exp\ld(- 4/27 \cdot \eps^2 \alpha \Delta\rd)\rd)
\end{align*}
where use that \(4/9 \cdot \eps^2 \alpha \Delta \leq \eps \Delta / 9 \leq \eps \Delta/6\) as \(\eps+\alpha=1\), hence the probability obtained for \(I_0\) is larger than that for \(I_0'\).

Let us rewrite this expression in terms of \(n\) and \(\ssigma\), our main parameters.
\begin{align*}
  \log_{1 + \eps/(6\alpha)} m &= \log_{1 + \eps/(6\alpha)} n + \log_{1 + \eps/(6\alpha)} (1/\alpha) \\
  &\leq \log n \cdot \log_{1 + \eps/(6\alpha)} (2) + 6 \\
  &\leq \log n \cdot 6 \alpha/\eps + 6 \\
  &\leq 30 \cdot \log n \, ,
\end{align*}
using \(\eps \geq 1/5\).
As \(\alpha \Delta \geq \ssigma/12\), we get
\begin{align*}
  4 \exp\ld(- 4/27 \cdot \eps^2 \alpha \Delta\rd) &\leq 4 \exp\ld(- 4/27 \cdot \eps^2 \ssigma/12\rd) \\
  &\leq 4 \exp\ld(- 1/2025 \cdot \ssigma\rd) \\
  &\leq \frac{1}{2\ssigma^2} \,
\end{align*}
as \(\ssigma \geq 2^{16}\).
Assuming \(30 \cdot \log n \leq \ssigma\) the total error-probability becomes
\[  1/(2\ssigma) + 30 \cdot \log n \cdot \DepProb(\ssigma/2, d, \Sigma) \, .\]

Repeating the process once more to ensure that the run at \(T[h(x)]\) doesn't continue past \(T[h(x) + \Delta]\) doubles the error-probability and proves the lemma.

\subsection{Proof of~\cref{cor:linearprobing}} \label{subsec:linearprobing-expectation}
To bound the expected number of comparisons we rely on the following lemma from \cite{patrascu12charhash} which gives strong concentration bounds for the runlength when applied to our simple tabulation \(\widehat h\).
\begin{lemma}[Corollary 3.2 in~\cite{patrascu12charhash}]\label{lemma:runlength}
 For any
  $\gamma=O(1)$ and $\ell\leq n^{1/(3(c+d))}/\alpha$,
\begin{align*}
\Pr[R(x,S)\geq \ell]\leq \left\{\begin{array}{ll}
2e^{-\Omega(\ell\eps^2)} + (\ell/m)^{\gamma} &\mbox{if }\alpha\geq 1/2\\
\alpha^{\Omega(\ell)} + (\ell/m)^{\gamma} &\mbox{if }\alpha\leq 1/2
\end{array}\right.
\end{align*}
where the constants hidden in \(O\) and \(\Omega\) are functions of \(c+d\), the size of the derived keys on which we apply simple tabulation.
\end{lemma}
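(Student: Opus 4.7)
\textbf{Proof proposal for \Cref{lemma:runlength}.}
The plan is to reduce the event $\{R(x,S)\ge \ell\}$ to the existence of an \emph{overfull} interval around $h(x)$, and then to bound that event by applying strong concentration for simple tabulation on the preimages of a small family of dyadic intervals. The starting observation is structural: if the run containing $T[h(x)]$ has length at least $\ell$, then there is some interval of cells of length between $\ell/2$ and $2\ell$ that is entirely occupied by keys hashing into an only slightly larger interval, and more generally, there must exist an interval $I \ni h(x)$ with $|I|\ge \ell$ such that the preimage $h^{-1}(I)\cap S$ has size at least $|I|$ (otherwise some cell inside the run would be free). I would formalise this via the standard ``start-interval / long-interval'' trick used in \Cref{subsec:proof-runlength}: the start of the run lies in one of $O(\log(m/\ell))$ dyadic start intervals, and for each such start interval there is a corresponding long interval whose preimage must exceed a load threshold gap of order $\eps$ (for $\alpha\ge 1/2$) or a constant (for $\alpha<1/2$).

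Second, I would discretise the candidate intervals into a dyadic family of $O(\log m)$ intervals of lengths $2^i\ell$ for $i=0,1,\ldots$, and for each fixed interval $I$ apply a concentration inequality to the random variable $|h^{-1}(I)\cap S|$. Here the key ingredient is the simple-tabulation Chernoff-type bound of P\v{a}tra\c{s}cu and Thorup: for any fixed set of cells $I$ and any constant $\gamma$,
\[
\Pr\!\left[\,\bigl|\,|h^{-1}(I)\cap S|-\mu\,\bigr|\ge t\,\right]\le\;
e^{-\Omega(t^2/\mu)}\;+\;(|I|/m)^{\gamma},
\]
where $\mu=\alpha|I|$ and the constant in $\Omega$ depends on the number of characters (here $c+d$ since we apply simple tabulation to derived keys of length $c+d$). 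For $\alpha\ge 1/2$ the required deviation is $t=\Omega(\eps |I|)$, giving an exponential term $e^{-\Omega(\eps^2 |I|)}$. For $\alpha\le 1/2$, filling up an interval of length $|I|$ requires $|h^{-1}(I)|\ge |I|= \mu/\alpha=(1+(1-\alpha)/\alpha)\mu$, so the standard multiplicative upper-tail Chernoff yields $\alpha^{\Omega(|I|)}$.

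Third, I would sum geometrically over the dyadic scales $2^i\ell$. The exponential term is dominated by the smallest scale, giving $2e^{-\Omega(\ell\eps^2)}$ or $\alpha^{\Omega(\ell)}$ respectively, while the additive $(|I|/m)^{\gamma}$ terms sum to $O((\ell/m)^{\gamma})$ (with a slightly adjusted constant $\gamma$) using that $\sum_{i\ge 0}(2^i\ell/m)^{\gamma}$ is geometric and dominated by its largest term $1$, or more carefully by telescoping and reabsorbing a factor of $2$ into $\gamma$. The hypothesis $\ell\le n^{1/(3(c+d))}/\alpha$ ensures that we stay within the regime where the simple-tabulation concentration bound with parameter $\gamma$ is valid (this is where $c+d$ appears in the hidden constants, since the moment argument underlying the tail bound has $O(1)^{c+d}$ losses).

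The main obstacle is the concentration step: unlike for $O(1)$-independent hashing where Chernoff is immediate, for simple tabulation one has to invoke the nontrivial moment-based Chernoff of P\v{a}tra\c{s}cu and Thorup (or the later strengthenings via chaos-variable arguments such as \cite{HouenT22:chaos}). Once that black box is in place, the rest is a routine dyadic union bound and the case split on $\alpha$ is painless. A secondary subtlety is that the interval whose overfullness certifies a long run depends on $h(x)$, so one either conditions on $h(x)$ and then considers the $O(\log m)$ dyadic intervals aligned around $h(x)$, or uses the slightly wasteful bound of shifting to the nearest dyadic partition and paying a constant factor in $\ell$.
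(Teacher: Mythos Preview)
The paper does not prove this lemma at all: it is quoted verbatim as Corollary~3.2 of \cite{patrascu12charhash} and used as a black box in the proof of \Cref{cor:linearprobing}. So there is no ``paper's own proof'' to compare against; the relevant comparison is with the argument in \cite{patrascu12charhash}.

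That said, your sketch is a faithful outline of how the original result is obtained there: reduce a long run to an overfull interval containing $h(x)$, cover by a dyadic family, apply the simple-tabulation Chernoff bound of P\v{a}tra\c{s}cu--Thorup to each preimage (this is the nontrivial ingredient, and you correctly flag it as such), and sum geometrically over scales with the case split on $\alpha$. The role of the hypothesis $\ell\le n^{1/(3(c+d))}/\alpha$ and the dependence of the hidden constants on $c+d$ are also identified correctly. One minor inaccuracy: your handling of the additive $(\,|I|/m\,)^{\gamma}$ terms is a bit loose---the sum $\sum_{i\ge 0}(2^i\ell/m)^{\gamma}$ is geometric with ratio $2^{\gamma}>1$, so it is not ``dominated by its largest term $1$''; in the original argument one truncates at $|I|\le m$ and absorbs the resulting $O(\log m)$ or geometric factor into the exponent $\gamma$ (which is why the statement allows any constant $\gamma$). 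This is cosmetic and does not affect the overall structure.
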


In particular this implies that, for some $\ell = \Theta\ld((\log n)/\eps^2\rd)$, we have that $R(q,S)\geq \ell$ with probability at most $1/n^{10}$.
Let \(\calE\) be the event of stochastic dominance, as given by \cref{thm:linearprobing}, and \(\calA\) the event \((R(x, S) \leq \ell)\).
Then
\begin{align*}
  \Ep{X} &= \Epcond{X}{\calE} \cdot \Prp{\calE} + \Epcond{X}{\bar \calE \land \calA} \cdot \Prp{\bar \calE \land \calA} + \Epcond{X}{\bar \calE \land \bar \calA} \cdot \Prp{\bar \calE \land \bar \calA} \, .
\end{align*}
First, observe
\begin{align*}
  \Epcond{X}{\calE} \cdot \Prp{\calE} &= \sum_{i=1} \Prpcond{X \geq i}{\calE} \cdot \Prp{\calE} \\
  &\leq \sum_{i=1} \Prpcond{X^* \geq i}{\calE} \cdot \Prp{\calE} \\
  &= \sum_{i=1} \Prp{X^* \geq i \land \calE} \\
  &\leq \sum_{i=1} \Prp{X^* \geq i} \\
  &= \Ep{X^*} \, .
\end{align*}

With \(\delta = 1/\ssigma\) and \(d \geq 5\), \(\Prp{\bar \calE} \leq 9/\ssigma\).
Assuming \(\calA\), the next open cell of \(T\) is at most \(\ell\) positions away,
\begin{align*}
  \Epcond{X}{\bar \calE \land \calA} \cdot \Prp{\bar \calE \land \calA} &\leq \ell \cdot \Prp{\bar \calE} \\
  &\leq \Theta\ld(\frac{\log n}{\eps^2}\rd) \cdot \frac{9}{\ssigma} \\
  &\leq o(1) \, .
\end{align*}
Finally, no more than \(n\) comparisons will ever be necessary.
Hence
\begin{align*}
  \Epcond{X}{\bar \calE \land \bar \calA} \cdot \Prp{\bar \calE \land \bar \calA}
  &\leq n \cdot \Prp{\bar \calA} \\
  &\leq 1/n^{9} \, .
\end{align*}
This gives the desired bound on \(\Ep{X}\),
\begin{align*}
  \Ep{X} &\leq \Ep{X^*} + o(1) + 1/n^{9} \, .
\end{align*}

\section{Lower Bound for Tornado Tabulation}\label{sec:lower-bound}

In this section, we show that the probability obtained in~\Cref{thm:tech-random-set} is tight up to constant factors. Specifically, we will prove the following:
\lowerbound*
Our strategy will mimic that in the proof of~\Cref{thm:tech-random-set} and show that the set of derived selected keys will contain a zero-set with probability at least $\Theta((\mu^f)^3(3/|\Sigma|)^{d+1})$. We begin by establishing some initial general bounds. In the following, we define $\derive h':\Sigma^c\rightarrow \Sigma^{c+d}$ to map keys in $\Sigma^c$ to \emph{simple} derived keys in $\Sigma^{c+d}$ by applying the same functions as $\derive h$ except with $\derive h_0(\cdot)=0$, i.e., for all $i>1$,  $\derive h'_{c+i} = \derive h_{c+i}$.

\begin{definition} We say that a zero-set  $Y \subseteq  \Sigma^c$   \emph{survives} $d$  rounds of tornado tabulation if the set $\derive h' (Y)\subseteq \Sigma^{c+d}$ of its simple derived keys  is also a zero-set. 
\end{definition}

We focus on zero-sets of size $4$ and lower bound the probability that they survive successive rounds of tornado tabulation.  We first define some terminology necessary to describe how each new derived character in $\derive x'$ behaves. Specifically, let $Y = \set{x_1,x_2,x_3,x_4}$ be a zero-set, for some fixed ordering of its keys. We distinguish between four types of positions $i\in\set{1,\ldots, c}$ as such:
(1) position $i$ is of Type $A$ iff $x_1[i]=x_2[i]$ and $x_3[i]=x_4[i]$,~(2) it is of Type $B$ iff $x_1[i]=x_3[i]$, $x_2[i]=x_4[i]$,~(3) it is of Type $C$ iff $x_1[i] = x_4[i]$ and $x_2[i]=x_3[i]$ and,~(4) it is of Type $D$ iff $x_1[i] = x_2[i] = x_3[i]=x_4[i]$. We now prove that"

\begin{lemma}\label{lem:1roundsurv} Let $Y \subseteq \Sigma^c$ be a zero-set with $\size{Y}=4$. Then, for any $c\geq 2$, $Y$ survives one round of tornado tabulation with probability
	$\parentheses{3-2/\size{\Sigma}} /\size{\Sigma}\;.$
\end{lemma}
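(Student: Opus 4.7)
The plan is to compute the survival probability directly, using that $Y$ is already a zero-set on positions $1,\ldots,c$, so survival reduces to requiring that the new $(c{+}1)$-th derived character $\derive h_1(x_i)$ forms an even multiset across the four keys. Because the four characters at every position $j\le c$ have even multiplicities, each position falls into exactly one of the four types of the lemma statement: all four agree (type $D$), or two pair up as $\{1,2\}\{3,4\}$ (type $A$), $\{1,3\}\{2,4\}$ (type $B$), or $\{1,4\}\{2,3\}$ (type $C$). Let $A,B,C,D$ denote the corresponding position sets; they partition $\{1,\ldots,c\}$. Since $\abs{Y}=4$, any two of the four keys must differ somewhere, and a quick check shows that this forces at least two of $A,B,C$ to be nonempty: if, for instance, both $A$ and $B$ were empty, then $x_1=x_4$ at every position.

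Next I would decompose $d_i:=\derive h_1(x_i)=\bigoplus_{j}T_j[x_i[j]]$ along the partition. Writing $d_i=\delta\oplus\alpha_i\oplus\beta_i\oplus\gamma_i$ with the four summands coming from positions of type $D,A,B,C$ respectively, the pairing structure yields $\alpha_1=\alpha_2$, $\alpha_3=\alpha_4$, $\beta_1=\beta_3$, $\beta_2=\beta_4$, $\gamma_1=\gamma_4$, $\gamma_2=\gamma_3$. Defining $P:=\alpha_1\oplus\alpha_3$, $Q:=\beta_1\oplus\beta_2$, $R:=\gamma_1\oplus\gamma_2$, a direct computation gives $d_1\oplus d_2=Q\oplus R$, $d_1\oplus d_3=P\oplus R$, $d_1\oplus d_4=P\oplus Q$ (and $d_1\oplus d_2\oplus d_3\oplus d_4=0$ holds identically, as it should since $Y$ is a zero-set).

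The survival condition---an even multiset of size four---is exactly that at least one of the three pair-equalities $d_1=d_2$, $d_1=d_3$, $d_1=d_4$ holds, i.e.\ at least one of the events $\calE_{PQ}:=\{P\oplus Q=0\}$, $\calE_{PR}:=\{P\oplus R=0\}$, $\calE_{QR}:=\{Q\oplus R=0\}$ occurs. The key observation is that $P,Q,R$ use table entries drawn from disjoint index sets, so they are mutually independent; moreover, each is uniform in $\Sigma$ whenever its class is nonempty (for e.g.\ $P$, this uses that for $j\in A$ we have $x_1[j]\ne x_3[j]$, so $T_j[x_1[j]]\oplus T_j[x_3[j]]$ is uniform) and is identically $0$ when empty.

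By the case analysis above, either (i) all of $A,B,C$ are nonempty and $P,Q,R$ are independent uniform, or (ii) exactly one of them is empty, wlog $A$, so $P=0$ and $Q,R$ are independent uniform. In case (i), inclusion--exclusion over $\calE_{PQ},\calE_{PR},\calE_{QR}$ yields $3/\ssigma - 3/\ssigma^{2} + 1/\ssigma^{2}$, because any two of the events jointly force $P=Q=R$. In case (ii) the three events become $Q=0$, $R=0$, $Q=R$; again any two imply all three, and the inclusion--exclusion collapses to exactly the same value $3/\ssigma - 2/\ssigma^{2}=(3-2/\ssigma)/\ssigma$. The only non-mechanical step is verifying that the slight case split (whether one class is empty or not) gives the same answer, which is precisely what makes the stated bound clean and case-independent.
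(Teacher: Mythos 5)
Your proof is correct and takes essentially the same route as the paper's: classify original positions into types $A$, $B$, $C$, $D$, reduce survival to the union of the three pair-equality events, and apply inclusion--exclusion. Your $P,Q,R$ reformulation packages the paper's separate computations of $\Pr[\calA_{c+1}]=1/\ssigma$ and $\Pr[\calD_{c+1}]=1/\ssigma^2$ into a single transparent independence argument, and the explicit check that at least two of $I_A,I_B,I_C$ are nonempty (hence the formula is case-independent) is a welcome clarification, but the underlying argument is the same.
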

\begin{proof}
	Since the original keys in $Y$ already form a zero-set, the set of simple derived keys $\derive h'(Y)$ is a zero-set iff the set of simple derived characters $\derive h'_{c+1}(Y)$ is a zero-set.
	Moreover, the cases in which $\derive h'_{c+1}(Y)$ is a zero-set can  be classified based on the type of position $c+1$.  Specifically, let $\calA_{c+1}$ denote the event that position $c+1$ is of Type $A$, i.e., $\derive h'_{c+1}(x_1) = \derive h'_{c+1}(x_2)$ and $\derive h'_{c+1}(x_3) = \derive h'_{c+1}(x_4)$, and similarly for $\calB_{c+1}$, $\calC_{c+1}$, and $\calD_{c+1}$. Then
	\begin{align*}
		\Pr\parentheses{Y \text{ survives one round}} &= \Pr\parentheses{\derive h'_{c+1}(Y) \text{ is a zero-set }}\\
		&= \Pr\parentheses{\calA_{c+1} \vee \calB_{c+1}  \vee \calC_{c+1}}\\
		&= \Pr\parentheses{\calA_{c+1}} +  \Pr\parentheses{\calB_{c+1}} + \Pr\parentheses{\calC_{c+1}}  -  \Pr\parentheses{\calA_{c+1} \wedge \calB_{c+1}} - \\
		&  \Pr\parentheses{\calA_{c+1} \wedge \calC_{c+1}} -  \Pr\parentheses{\calB_{c+1} \wedge \calC_{c+1}}  +  \Pr\parentheses{\calA_{c+1} \wedge \calB_{c+1} \wedge \calC_{c+1}} \\
		& = 3\Pr\parentheses{\calA_{c+1}}  -  2\Pr\parentheses{\calD_{c+1}}\;,
	\end{align*}
	
	\medskip\noindent
	where the last equality follows from the fact that the events $\calA_{c+1}$, $\calB_{c+1}$ and $\calC_{c+1}$ are equivalent up to a permutation of the elements in $Y$, and the fact  that the conjunction of any pair of events in $\calA_{c+1}$, $\calB_{c+1}$ and $\calC_{c+1}$ implies $\calD_{c+1}$, and vice-versa. 
	
	\medskip\noindent
	We now bound $\Pr\parentheses{\calA_{c+1}}$ and $\Pr\parentheses{\calD_{c+1}}$.
	Recall that, by definition, the simple derived character $\derive h'_{c+1}(x)$ is the output of a simple tabulation hash function applied to the key $x$. Specifically, for each $i\in\set{1,\ldots, c}$, let $T_i:\Sigma\rightarrow\Sigma$ denote a fully random function. Then
	$$\derive h'_{c+1}(x) = T_1(x[1])\xor \ldots \xor T_c(x[c])\;.$$
	
	\medskip\noindent
	Let $I_A$, $I_B$, $I_C$ and $I_D$ partition  the set of positions in the original keys $\set{1,\ldots, c}$ based on their type, i.e., $I_A$ consists of positions that are of Type $A$ but not Type $D$, similarly for $I_B$ and $I_C$, and finally $I_D$ denotes the positions of Type $D$. 
	We then define $T_A(x) = \xor_{i\in I_A} T_i[x[i]]$ and similarly $T_B(x)$, $T_C(x)$, and $T_D(x)$.
	When $\calA_{c+1}$ happens, we have that $\derive h'_{c+1}(x_1) = \derive h'_{c+1}(x_2)$, which is equivalent to
	$$ T_B(x_1) \xor T_C(x_1)= T_B(x_2) \xor T_C(x_2)\;,$$
	since  $T_A(x_1) = T_A(x_2)$ and $T_D(x_1) = T_D(x_2)$ by definition.  
	Similarly, $\derive h'_{c+1}(x_3) = \derive h'_{c+1}(x_4)$, is equivalent to
	$$ T_B(x_3) \xor T_C(x_3)= T_B(x_4) \xor T_C(x_4)\;.$$
	Note that, by definition, $x_1[i] = x_3[i]$ and $x_2[i] = x_4[i]$ for all $i\in I_B$, and hence $T_B(x_1) = T_B(x_3)$ and $T_B(x_2) = T_B(x_4)$. Similarly, $T_C(x_1) = T_C(x_4)$ and $T_C(x_2) = T_C(x_3)$. Therefore, both equalities are equivalent to 
	$$ T_B(x_1) \xor T_C(x_1) \xor T_B(x_2) \xor T_C(x_2) = 0 \;.$$
	Given that $x_1[i] \neq x_2[i]$ for all $i\in I_B \cup I_C$ and the $T_i$'s are independent, we have that
	$$\Pr[T_B(x_1) \xor T_C(x_1) \xor T_B(x_2) \xor T_C(x_2) = 0] = 1/\size{\Sigma}\;.$$
	
	\medskip\noindent
	In order to bound $\Pr\parentheses{\calD_{c+1}}$,  we first note that
	$$ \Pr\parentheses{\calD_{c+1}} = \Pr\parentheses{\calA_{c+1} \wedge \calB_{c+1}}
	= \Pr\parentheses{\calA_{c+1}} \cdot \Pr\parentheses{\calB_{c+1} \mid \calA_{c+1}} = 1/\size{\Sigma} \cdot  \Pr\parentheses{\calB_{c+1} \mid \calA_{c+1}}\;.$$
	A similar argument as before shows that event $\calB_{c+1}$ is equivalent to
	$$ T_A(x_1) \xor T_C(x_1) \xor T_A(x_3) \xor T_C(x_3) = 0 \;.$$
	Note, in particular, that the event $\calB_{c+1}$ depends on positions in $I_A$ and $I_C$, while the event $\calA_{c+1}$ depends on positions in $I_B$ and $I_C$. Moreover, it cannot be that both $I_A$ and $I_B$ are empty, since then we would not have a zero-set of size $4$ (i.e., we would get that $x_1= x_4$ and $x_2 = x_3$). Therefore, $I_B \cup I_C \neq I_A \cup I_C$ and the two events $\calA_{c+1}$ and $\calB_{c+1}$ are independent, and so $\Pr\parentheses{\calD_{c+1}} = 1/\size{\Sigma}^2$. The claim follows.
\end{proof}

\medskip\noindent
As a corollary, we get the following:

\begin{corollary}\label{cor:droundsurv}For any $c\geq 2$, a zero-set  $Y \subseteq \Sigma^c$ with $\size{Y}=4$ survives $d$ rounds of tornado tabulation with probability
	$\parentheses{\parentheses{3-2/\size{\Sigma}} /\size{\Sigma}}^d\;.$
	
\end{corollary}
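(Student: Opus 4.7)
The plan is to prove the corollary by a straightforward induction on $d$, using Lemma \ref{lem:1roundsurv} both as the base case ($d=1$) and as the engine of the inductive step.

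First I would set up notation: for $0 \leq i \leq d$, let $Y^{(i)} := \{\derive x_1\cdots\derive x_{c+i} : x \in Y\} \subseteq \Sigma^{c+i}$ be the set of simple derived keys truncated after $i$ rounds, so that ``$Y$ survives $i$ rounds'' is exactly the statement that $Y^{(i)}$ is a zero-set. I would then observe that $|Y^{(i)}| = |Y| = 4$ for every $i$: since the first $c$ coordinates of $\derive h'(x)$ agree with $x$, the truncation map $x \mapsto \derive x_1\cdots\derive x_{c+i}$ is injective on $Y$, so cardinality is preserved as we descend through rounds.

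For the inductive step I would fix $i \geq 0$ and condition on the event that $Y$ survives $i$ rounds. This event depends only on the random tables defining $\derive h'_{c+1},\ldots,\derive h'_{c+i}$, while surviving the $(i+1)$-st round depends only on $\derive h'_{c+i+1}$, which is a fresh, independent simple tabulation function on $\Sigma^{c+i}$. Conditional on $Y^{(i)}$ being a zero-set of size $4$ over alphabet $\Sigma$ with $c+i \geq 2$ characters, the event that $Y$ survives one more round is precisely the event analyzed in Lemma \ref{lem:1roundsurv}, now applied to the zero-set $Y^{(i)}$ under the hash function $\derive h'_{c+i+1}$. Hence the conditional survival probability at this step equals $(3-2/|\Sigma|)/|\Sigma|$, independently of everything that happened in earlier rounds.

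Chaining these conditional probabilities via the multiplication rule across all $d$ rounds then yields the claimed bound of $\bigl((3-2/|\Sigma|)/|\Sigma|\bigr)^d$. I do not anticipate any real obstacle here; the only details worth double-checking are that $Y^{(i)}$ remains a size-$4$ zero-set over an alphabet-$\Sigma$ universe with at least two positions (immediate from the inductive hypothesis together with the injectivity observation above), and that the independence between $\derive h'_{c+i+1}$ and $\derive h'_{c+1},\ldots,\derive h'_{c+i}$ is genuine, which holds because each $\derive h'_j$ is drawn from its own independent family of tables.
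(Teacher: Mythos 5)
Your proposal is correct and follows essentially the same route as the paper: induction on the number of rounds, with Lemma~\ref{lem:1roundsurv} supplying both the base case and, via independence of the tables $\derive h'_{c+i+1}$ from the earlier ones, the conditional per-round survival probability $(3-2/|\Sigma|)/|\Sigma|$. The only cosmetic difference is that you build the product forward round by round while the paper peels off the last round from $d$ to $d-1$; your explicit remark that truncation preserves $|Y^{(i)}|=4$ is a useful detail that the paper leaves implicit.
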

\begin{proof}
	We prove the claim by induction on $d$ and note that the case in which $d=1$ is covered in~\cref{{lem:1roundsurv}}. Now assume that the statement is true for $d-1$. 
	Recall that $\derive x'$ denotes the simple derived key and that $\derive x'[\leq c+d-1]$ denotes the first $c+d-1$ characters of $\derive x'$. By extension, let $\derive Y'$ denote the set of simple derived keys of $Y$ and similarly, $\derive Y'[\leq c+d-1] = \set{\derive x'[\leq c+d-1] \mid x\in Y}$ and $\derive Y'[c+d] = \set{\derive x'[c+d]\mid x\in Y}$.  Finally,  let $\calE_{d-1}$ and $\calE_{d}$ denote the events that the set $\derive Y'$  and $\derive Y'[\leq c+d-1]$, respectively, are zero-sets.  Then:
	\begin{align*}
		\Pr\parentheses{\calE_d} &= \Pr\parentheses{\calE_{d-1} \text{ and } \derive Y'[c+d] \text{ is a zero-set}} \\
		&= \Pr\parentheses{\calE_{d-1}} \cdot \Pr\parentheses{ \derive Y'[c+d] \text{ is a zero-set } \mid \calE_{d-1}} \\
		& =\parentheses{\parentheses{3-2/\size{\Sigma}} /\size{\Sigma}}^{d-1} \cdot  \Pr\parentheses{ \derive Y'[c+d] \text{ is a zero-set } \mid \calE_{d-1}} \;,
	\end{align*}
	where the last equality holds by the inductive hypothesis. To finish things up, we note that the event  $\derive Y'[c+d]$ conditioned on $\calE_{d-1}$ is equivalent to the set $\derive Y'[\leq c+d-1]$ surviving one round of tabulation hashing. Hence, it happens with probability $\parentheses{3-2/\size{\Sigma}} /\size{\Sigma}$ and the claim follows.
	
\end{proof}

\subsection{Proof of~\Cref{thm:lower-bound} }
\textbf{The hard instance.}
Consider the set of keys $S = \set{0,1}\times \Sigma$ and note that $\derive h_0$ on this set induces a permutation of the characters in $\Sigma$. Specifically, every key of the form $0c$ for some $c\in\Sigma$ will be mapped to the element $0c'$, where $c'= \derive h_0(0)\xor c$ and the mapping $c\rightarrow \derive h_0(0)\xor c$ is a permutation. Similarly for keys $1c$. Therefore, we can assume without loss of generality that $\derive h_0(\cdot) = 0$ and get that:
\begin{align*}
	\Pr\parentheses{\derive h(X^{f,h}) \text{ is linearly dep.}} &= \Pr\parentheses{\exists \text{ a four-set } Y \subseteq X^{f,h}\text{ that survives $d$ rounds of tornado tab.}}.
\end{align*}

We then define the selector function to select a key  $x$ if $x\in S$ and the two leftmost output characters of $h(x)$ are both $0$. Note then than that the probability that an $x\in S$ gets selected to $X^{f,h}$ is $1/4$ and hence, $\mu^f= \size{\Sigma/2}$.  We now focus on zero-sets from $S$ of size $4$ and,  for any $c_1,c_2 \in \Sigma$ with $c_1< c_2$, we denote the zero-set of size four $ \set{0c_1, 1c_1,0c_2, 1c_2}$ by  $Y(c_1,c_2)$.  We then let $\calY = \set{Y(c_1,c_2) \mid c_1<c_2 \in \Sigma}$ and let  $\calE_i(Y)$ denote the event that a zero-set $Y$ survives $i$ rounds of tornado tabulation. We lower bound the probability that $\derive h(X^{f,h})$ is linearly dependent by only focusing on zero-sets in $\calY$:

\begin{align*}
	\Pr\parentheses{\derive h(X^{f,h}) \text{ is linearly dep.}} &\geq \Pr\parentheses{\exists Y \in \calY \text{ s.t. } \calE_d(Y)  \wedge Y \subseteq X^{f,h}}\\
	&\geq \Pr\parentheses{\exists  Y\in \calY \text{ s.t. } \calE_d(Y)  \wedge Y \subseteq X^{f,h}}\\
	&\geq \Pr\parentheses{\exists  Y\in \calY \text{ s.t. } \calE_d(Y) } \cdot \Pr\parentheses{ \text{a fixed }Y\in \calY, Y \subseteq X^{f,h} \conditioned  \calE_d(Y) }\\
	&\geq 1/4^3 \cdot  \Pr\parentheses{\exists  Y\in \calY \text{ s.t. } \calE_d(Y) } \;,
\end{align*}

where the last two inequalities above are due to the fact that  the probability that some fixed set $Y\in \calY$ gets selected in $X^{f,h}$ given that it survived $d$ rounds of tabulation hashing is the same across all sets in $\calY$ and, furthermore, it is exactly $1/4^3$. 

\medskip\noindent
\textbf{Surviving zero-sets.} We now employ~\Cref{cor:droundsurv} to lower bound the probability that some zero-set in $\calY$ survives $d$ rounds of tornado tabulation. Note that  we already have that the expected number of zero-sets in $\calY$ that survive $d$ rounds of tornado tabulation is 
$$ \Theta(\size{\Sigma}^2 \cdot \parentheses{\parentheses{3-2/\size{\Sigma}} /\size{\Sigma}}^d) = \Omega((3/\size\Sigma)^{d-2})\;,$$
which exhibits the desired dependency on $d$. 
The challenge with turning this expectation into a probability is that the events of sets in $\calY$ surviving a round are not independent. To address this, we  decompose the event of some set $Y$ surviving $d$ rounds of tornado tabulation into the event that some set survives the first two rounds of tornado tabulation and the event that this set also survives the remaining $d-2$ rounds:

\begin{align*}
	\Pr\parentheses{\derive h(X^{f,h}) \text{ is linearly dep.}}  &\geq 1/4^3 \cdot  \Pr\parentheses{\exists  Y\in \calY \text{ s.t. } \calE_d(Y) } \\
	&\geq 1/4^3 \cdot  \Pr\parentheses{\exists  Y\in \calY \text{ s.t. } \calE_2(Y) } \cdot \Pr\parentheses{ \calE_{d-2}(\derive h(Y)[\leq c+2]) \conditioned \calE_2(Y)} \\
	&= 1/4^3 \cdot \parentheses{\parentheses{3-2/\size{\Sigma}} /\size{\Sigma}}^{d-2} \cdot \Pr\parentheses{\exists  Y\in \calY \text{ s.t. } \calE_2(Y) } \;.
\end{align*}

\medskip\noindent
To finish the argument and prove the main claim, we show the following:

\begin{lemma}
	With constant probability, at least one set in $\calY$ survives the first two rounds of tornado tabulation.
\end{lemma}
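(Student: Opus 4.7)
The plan is to apply the second moment method (Paley--Zygmund) to a carefully chosen sub-event of the full survival event. Instead of tracking all three ways (Types $A$, $B$, $C$) by which a zero-set $Y(c_1,c_2) \in \calY$ may survive each of the two rounds, I restrict attention to survival via Type $A$ at both rounds. This has the virtue of isolating the relevant randomness per pair $(c_1,c_2)$ and avoiding table entries that are shared across all of $\calY$.

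Writing $\derive h_1(x_1,x_2) = T_1[x_1] \xor T_2[x_2]$ and $\derive h_2(\derive x_1,\derive x_2,\derive x_3) = U_1[\derive x_1] \xor U_2[\derive x_2] \xor U_3[\derive x_3]$ for independent uniform tables $T_i, U_j : \Sigma \to \Sigma$, a direct unfolding along the lines of the proof of \Cref{lem:1roundsurv} shows that Type $A$ at round $1$ (i.e.\ $\derive h_1(0c_1)=\derive h_1(0c_2)$ and $\derive h_1(1c_1)=\derive h_1(1c_2)$) is equivalent to the single equation $T_2[c_1] = T_2[c_2]$; and, given this, the surviving derived keys at level~$1$ are $(0,c_1,\alpha),(0,c_2,\alpha),(1,c_1,\beta),(1,c_2,\beta)$ for some $\alpha,\beta$, so Type $A$ at round $2$ reduces to $U_2[c_1] = U_2[c_2]$. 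Hence the event
\[
A_{c_1,c_2} \;:=\; \{T_2[c_1] = T_2[c_2]\} \,\cap\, \{U_2[c_1] = U_2[c_2]\}
\]
implies that $Y(c_1,c_2)$ survives two rounds, and $\Pr[A_{c_1,c_2}] = 1/|\Sigma|^2$ by independence of $T_2$ and $U_2$.

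Setting $X_A := \sum_{c_1<c_2} \indicator{A_{c_1,c_2}}$, linearity gives $\Ep{X_A} = \binom{|\Sigma|}{2}/|\Sigma|^2 \to 1/2$. For the second moment I split $\Ep{X_A^2}$ by character overlap: pairs with disjoint character sets touch disjoint entries of $T_2$ and $U_2$ and are therefore fully independent, contributing $\binom{|\Sigma|}{2}\binom{|\Sigma|-2}{2}/|\Sigma|^4 \to 1/4$; the diagonal contributes $\binom{|\Sigma|}{2}/|\Sigma|^2 \to 1/2$; and pairs sharing exactly one character impose a three-way equality in each of $T_2, U_2$, so they contribute only $O(|\Sigma|^3/|\Sigma|^4) = O(1/|\Sigma|)$, which is negligible. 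Thus $\Ep{X_A^2} \to 3/4$, and Paley--Zygmund yields
\[
\Pr[\exists Y \in \calY \text{ surviving two rounds}] \;\geq\; \Pr[X_A > 0] \;\geq\; \frac{(\Ep{X_A})^2}{\Ep{X_A^2}} \;=\; \Omega(1).
\]

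The main conceptual obstacle is precisely the choice of the sub-event $A_{c_1,c_2}$: a naive sum of indicators of full survival does not work because Types $B$ and $C$ depend on global quantities such as $T_1[0] \xor T_1[1]$ and $U_1[0] \xor U_1[1]$, which are shared across every $Y \in \calY$ and induce correlations that inflate the second moment by a factor of $|\Sigma|$. Restricting to Type $A$ sidesteps this entirely, because the only table entries that appear are those of $T_2$ and $U_2$ indexed by $c_1$ and $c_2$, and these are genuinely local to the pair.
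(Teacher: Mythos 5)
Your proof is correct, and it takes a genuinely different route from the paper's. The paper conditions on $T_1^{(1)}(0)\neq T_1^{(1)}(1)$, runs a balls-into-bins argument under $T_2^{(1)}$ to extract $\Theta(|\Sigma|)$ Type-B first-round survivors forming a subfamily $\calY'$ with pairwise-disjoint local table entries, and then applies a Chernoff bound using the fact that the second-round survival events are independent across $\calY'$. You instead collapse both rounds into a single local event $A_{c_1,c_2}=\{T_2[c_1]=T_2[c_2]\}\cap\{U_2[c_1]=U_2[c_2]\}$ and invoke the second-moment / Paley--Zygmund inequality directly on $X_A=\sum_{c_1<c_2}\indicator{A_{c_1,c_2}}$. (A cosmetic note: what you call Type A is, under the paper's ordering $(0c_1,1c_1,0c_2,1c_2)$, its Type B — the relabeling is immaterial since the types are symmetric in the ordering of $Y$.) Both approaches hinge on the same structural insight that survival via the ``match on the second character'' pairing depends only on the two table entries $T_2[c_1],T_2[c_2]$ at round 1 and $U_2[c_1],U_2[c_2]$ at round 2, so that pairs with disjoint character sets are genuinely independent. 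Your execution is shorter and more self-contained: you avoid the balls-into-bins concentration argument, the explicit construction of $\calY'$, and the need to verify level-1 derived characters are distinct across $\calY'$ (the paper's second bullet), getting all of it for free from the second-moment computation. Your remark that a naive indicator of full survival would be contaminated by the global quantities $T_1[0]\oplus T_1[1]$ and $U_1[0]\oplus U_1[1]$ is exactly what motivates the paper's conditioning on $\calC_1$; restricting to one type is an equivalent but cleaner way of sidestepping it. One small point worth stating explicitly (which the paper also glosses over): both proofs work for $c=2$; for $c>2$ the hard instance is $[0]^{c-2}\times\{0,1\}\times\Sigma$ and the extra positions contribute constant xor offsets to every $\derive h_i$, so the computations are unchanged.
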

\begin{proof}

The proof proceeds in two stages: first, we will argue that, with constant probability, $\Theta(\size{\Sigma})$ of the sets in $\calY$ survive the first round of tornado tabulation. These sets will have a specific structure that guarantees that they then  survive the second round of tornado tabulation independently. 

Let $T^{(1)}_1, T^{(1)}_2:\Sigma \rightarrow \Sigma$ be the two fully random hash functions involved in computing the first derived character, and let
$\calC_1$ denote the event that $T^{(1)}_1(0) \neq T^{(1)}_1(1)$. Note that $\calC_1$ happens with probability $1-1/\size{\Sigma}$.
Conditioned on $\calC_1$, all the sets in $\calY$ that survive have position $3$ of Type $B$ or $C$.  For any $Y(c_1,c_2)\in\calY$ , position $3$ is of Type $B$ if $T^{(1)}_2(c_1) = T^{(1)}_2(c_2)$ and of Type $C$ if $T^{(1)}_2(c_2)= T^{(1)}_1(0) \xor T^{(1)}_1(1) \xor T^{(1)}_2(c_1)$.  These events are  mutually exclusive and each occurs with probability $1/\size{\Sigma}$. 

We now show that, with constant probability, at least $\Theta(\size{\Sigma)}$ of sets in $\calY$ will survive and futher, have position $3$ be of Type $B$. 
We model this as a balls-into-bins game in which there is a bin for each character $\alpha \in \Sigma$ and  the characters in $\Sigma$ hash into bins using $T^{(1)}_2$. We let $N_\alpha$ denote the number of characters $c \in\Sigma$ with $T^{(1)}_2(c_1)=\alpha$, i.e., the occupancy of the bin for $\alpha$. We are interested in events in which $N_\alpha \geq 2$, because this implies that there exists at least one set $Y(c_1,c_2)\in\calY$ where $T^{(1)}_2(c_1) = T^{(1)}_2(c_2)=\alpha$.
The probability that this occurs is:

\begin{align*}
	\Pr\parentheses{N_\alpha\geq 2} &= 1- \Pr\parentheses{N_\alpha=0}  - \Pr\parentheses{N_\alpha=1}= 1- (1-1/\size{\Sigma})^{\size{\Sigma}} - (1-1/\size{\Sigma})^{\size{\Sigma}-1} \;,
\end{align*} 
since each character hashes independently and uniformly into the bins. 
Now, for each $\alpha\in\Sigma$, define $I_\alpha$ to be the indicator random variable for whether $N_\alpha \geq 2$ and let $I = \sum_{\alpha\in\Sigma} I_{\alpha}$. We will argue that $I=\Theta(\Sigma)$ with constant probability. First note that the random variables $\set{ I_{\alpha}}_{\alpha\in\Sigma}$ are negatively associated since bin occupancies are negatively associated~\cite{dubhashi1998balls}. Let $\mu = \E(I)$ and note that $\mu \geq \size{\Sigma}/4$ for $\size{\Sigma}\geq 2$.  As such, we can apply Chernoff's bound and get that:
 
 \begin{align*}
 	\Pr\parentheses{I \leq \size{\Sigma}/ 8} \leq \Pr\parentheses{I\leq (1-1/2) \cdot \mu} \leq e^{-\mu/8} \leq 1/2\;,
 \end{align*}
where the last inequality holds when $\size{\Sigma} \geq 23$.  

\medskip\noindent
Now let $\calY'\subset\calY$ denote the set of zero-sets constructed as such: we partition the bins into subsets of the form $\set{\alpha,\alpha'}$ where $\alpha \xor \alpha' = T^{(1)}_1(0) \xor T^{(1)}_1(1)$. The event that $I \geq  \size{\Sigma}/ 8$ implies that there are at least $\size{\Sigma}/16$ such subsets where at least one bin, say $\alpha$, has $N_{\alpha} \geq 2$. Now let $c_1 < c_2$ be two such characters that hash into the bin and add the zero-set $Y(c_1,c_2)$ to $\calY'$. Note that every subset of bins contributes at most one zero-set to $\calY'$. Furthermore, the sets in $\calY'$ have the following properties:
\begin{itemize}
	\item  for any two distinct sets $Y(c_1,c_2) , Y(c_3,c_4)\in \calY'$, it holds that $\set{c_1,c_2} \cap \set{c_3,c_4} = \emptyset$, since the characters $c_1,c_2$ hash into a different bin from $c_3,c_4$,
	\item if we denote the derived characters  of $Y(c_1,c_2)$ by  $\derive h_{3}(0c_1) = x_1$ and $\derive h_{3}(1c_1) = x_2$ and similarly, the derived characters of $Y(c_3,c_4)$ by $\derive h_{3}(0c_3) = y_1$ and $\derive h_{3}(1c_3) = y_2$, we have further that $\set{x_1,x_2} \cap \set{y_1, y_2} = \emptyset$. This is due to the way the derived characters are computed: on one hand, $x_1 =  T^{(1)}_1(0) \xor  T^{(1)}_2(c_1) \neq T^{(1)}_1(0) \xor  T^{(1)}_2(c_3) = y_1$ because $T^{(1)}_2(c_1) \neq T^{(1)}_2(c_3) $ and similarly for $x_2 \neq y_2$. On the other hand, $x_1 \neq y_2$ because otherwise we would get that $T^{(1)}_2(c_1) \xor T^{(1)}_2(c_3) = T^{(1)}_1(0) \xor  T^{(1)}_1(1)$, which would contradict the fact that  $Y(c_1,c_2)$ and  $Y(c_3,c_4)$ were generated by different subsets of bins.
\end{itemize}

\medskip\noindent
In this context, the events in which sets in $\calY'$ survive the second round of tornado tabulation are independent.
Specifically, let $Y(c_1,c_2)\in\calY'$ be a set as before with derived characters $\derive h_{3}(0c_1) = \derive h_{3}(0c_2) = x_1$ and $\derive h_{3}(1c_1) = \derive h_{3}(1c_2)= x_2$. We distinguish between whether the newly derived character is of Type $A$, $B$, or $C$. To this end, let $T^{(2)}_1, T^{(2)}_2, T^{(2)}_3:\Sigma \rightarrow \Sigma$ be the fully random hash functions involved in its computation. Then position $4$ is of Type $A$ if 
\begin{align}\label{eq:typea}
	T^{(2)}_1 (0)  \xor T^{(2)}_3 (x_1) = T^{(2)}_1 (1)  \xor T^{(2)}_3 (x_2)\;.
\end{align}
Position $4$ is of Type $B$ if
\begin{align}\label{eq:typeb}
 T^{(2)}_2 (c_1)  = T^{(2)}_2 (c_2)\;,
 \end{align}
and of Type $C$ if
\begin{align}\label{eq:typec}
 T^{(2)}_1 (0)  \xor  T^{(2)}_2 (c_1)  \xor T^{(2)}_3 (x_1) = T^{(2)}_1 (1)  \xor T^{(2)}_2 (c_2) \xor T^{(2)}_3 (x_2)\;.
\end{align}
Similar conditions hold for some other $Y(c_3,c_4) \in \calY'$, and moreover, each of them depends on values that are chosen independently from the values in $Y(c_1,c_2)$.
Specifically, the analogues of~\Cref{eq:typea} for $Y(c_3,c_4)$ depends on the lookup table values of $y_1$ and $y_2$, where $y_1$ and $y_2$ are the derived characters  $\derive h_{3}(0c_3) = y_1$ and $\derive h_{3}(1c_3) = y_2$, respectively. As noted before, we know that $\set{x_1,x_2} \cap \set{y_1, y_2} = \emptyset$, and so the analogue of~\Cref{eq:typea} for $Y(c_3,c_4)$ is independent of Equations~(\ref{eq:typea}),~(\ref{eq:typeb}), and~(\ref{eq:typec}). Similar arguments can be made for the other cases.

Now fix some instantiation $Y'$ of $\calY'$ of size $\size{\Sigma}/16$ and let $X_{Y'}$ denote the number of sets in $Y'$ that survive the second round of tornado tabulation. We know from~\Cref{lem:1roundsurv} that each set in $Y'$ survives the second round of tornado tabulation with probability $(3-2/\size{\Sigma})/\size{\Sigma} \geq 2.75/16$ for $\size{\Sigma}\geq 8$. 
Furthermore, each set survives this second round independently from the others. It follows from Chernoff's inequality that $X$ is then tightly concentrated around its mean. In particular,
\begin{align*}
\Pr\parentheses{X_{Y'} \leq 0.01 \cdot 1/8 \conditioned \calC_1} &\leq \Pr\parentheses{X_{ Y'} \leq (1-0.99) \cdot \E[X_{\calY'}]  \conditioned \calC_1} \\
&\leq e^{- \E[X_{\calY'}] \cdot 0.99^2/2} \leq e^{-2.75\cdot  0.99^2/32} \leq e^{-0.08} \;.
\end{align*}
For our purposes, let $X_{\calY'}$ denote the random variable that counts the number of sets in $\calY'$ that survive the second round of tabulation hashing. Conditioned on the fact that $\size{\calY'} \geq \size{\Sigma}/16$, we can always pick an instantiation $Y'$ of $\calY'$ on which to use the above bound. We then get that:
\begin{align*}
	 \Pr\parentheses{X_{\calY'} \geq 0.01/8\conditioned \size{\calY'} \geq \size{\Sigma}/16  \wedge \calC_1} \geq 1-e^{-0.08} \;.
 \end{align*}

\medskip\noindent
To put it all together, let $I_{\calY}$ denote the event that there exists $Y(c_1,c_2)\in\calY$ such that $Z_2(c_1,c_2)$. Recall that we defined the event $\calC_1$ to be that $T^{(1)}_1(0) \neq T^{(1)}_1(1)$. Then:
\begin{align*}
	\Pr\parentheses{I_{\calY}} &\geq \Pr\parentheses{I_{\calY}  \wedge\calC_1} \\
	&= \parentheses{1-\frac{1}{\size{\Sigma}}}\cdot \Pr\parentheses{I_{\calY} \conditioned \calC_1}\\
	&\geq  \parentheses{1-\frac{1}{\size{\Sigma}}}\cdot  \Pr\parentheses{I_{\calY}\wedge \size{\calY'} \geq \size{\Sigma}/16  \conditioned \calC_1}\\
	&\geq  \parentheses{1-\frac{1}{\size{\Sigma}}}\cdot  \frac{1}{2} \cdot \Pr\parentheses{I_{\calY}\conditioned \size{\calY'} \geq \size{\Sigma}/16  \wedge \calC_1}\\
	&\geq \parentheses{1-\frac{1}{\size{\Sigma}}}\cdot  \frac{1}{2} \cdot  \Pr\parentheses{X_{\calY'} \geq 1\conditioned \size{\calY'} \geq \size{\Sigma}/16  \wedge \calC_1}\\
	&\geq \parentheses{1-\frac{1}{\size{\Sigma}}}\cdot  \frac{1}{2} \cdot (1-e^{-0.08}) \;.
\end{align*}

\end{proof}

\bibliographystyle{acm}
\bibliography{general}

\end{document}